%% file: paper.tex
\documentclass[11pt]{article}

\input{tex/preamble-arxiv}
\usepackage{csquotes}
\usepackage[
sorting=none, sortcites=true,]{biblatex}
\usepackage{pdfpages}
\input{tex/draft-macros}
\input{tex/math-macros}

\input{tex/problem-macros}
\input{tex/theorem-macros-arxiv}
\input{tex/appendix-toc}

\title{When Representative Samples Produce Worse Outcomes:\\Scale-up Decisions and Testing in Small-Budget RCTs}

\author{
  Hannah Li \quad Hongseok Namkoong \quad Isaac Scheinfeld \thanks{Authors listed in alphabetical order.} \\[0.3em]
  Columbia Business School
}
\date{}

\begin{document}

\maketitle
\vspace{-2em}

\begin{abstract}
  Small randomized controlled trials are often used to screen interventions before running larger follow-up studies.
  This is a critical phase of experimentation, as missing effective interventions or scaling up harmful ones can be very costly.
  A common proposal to mitigate these errors is to recruit samples that are representative of the target population, but this is often challenging in resource-constrained pilots.
  We challenge the narrative that representative samples are always superior by showing that when statistical significance testing determines whether interventions receive further study, the pilot trial composition that maximizes the downstream expected improvement in outcomes depends critically on its budget size.
  In the large-budget limit, the optimal pilot design converges to a sample that is representative of the target population.
  However, in the small-budget regime, the pilot designer maximizes expected impact by sampling only from a single homogeneous sub-population, chosen in a manner that depends on sampling costs and the designer's prior beliefs about heterogeneous treatment effects.
  Our proof of the small-budget result applies more generally when an RCT and significance test are used to decide whether to receive any non-adaptive downstream payoff, a result that may be applicable to other settings with constrained experimentation budgets.
\end{abstract}

\input{paper/01_introduction}
\input{paper/02_model}
\input{paper/03_large_budget}
\input{paper/04_small_budget}
\input{paper/05_simulation_studies}
\input{paper/06_related_work}
\input{paper/07_conclusion}
\input{paper/08_acknowledgements}

\newpage
\appendix
\printappendixcontents
\input{paper/A1_alternate_objectives}
\input{paper/A2_notation_for_proofs}
\input{paper/A3_small_budget_first_stage_objectives}
\input{paper/A4_elliptical_priors_and_small_budget_index_decompositions}
\input{paper/A5_large_budget_two_stage_impact_general_prior}
\stopappendixcontents

\printbibliography

\end{document}

%% file: tex/preamble-arxiv.tex
\usepackage[T1]{fontenc}
\usepackage[utf8]{inputenc}
\usepackage{lmodern}
\usepackage{microtype}
\usepackage[english]{babel}

\usepackage{tex/packages-arxiv}
\usepackage{tex/formatting-arxiv}

\setlist{
  listparindent=\parindent,
  parsep=0pt,
}

\makeatletter
\@ifundefined{Bbbk}{}{}
\makeatother

%% file: tex/math-macros.tex
\newcommand{\mbf}[1]{\mathbf{#1}}

\newcommand{\norm}[1]{\left\|{#1}\right\|}

\newcommand{\defeq}{:=}
\newcommand{\eqdef}{=:}

\newcommand{\indic}[1]{\mbf{1}\left\{#1\right\}}

\newcommand{\R}{\mathbb{R}}

\newcommand{\approptoinn}[2]{\mathrel{\vcenter{
      \offinterlineskip\halign{\hfil$##$\cr
#1\propto\cr\noalign{\kern0.5pt}#1\sim\cr\noalign{\kern-0.5pt}}}}}

\newcommand{\nproptoslashshift}{3.0mu}
\let\npropto\relax
\DeclareRobustCommand{\npropto}{%
  \mathrel{%
    \ooalign{%
      $\propto$\cr
      \hidewidth$\mkern\nproptoslashshift\not\phantom{\propto}$\hidewidth\cr
    }%
  }%
}

\newcommand{\E}{\mathbb{E}}

\renewcommand{\P}{\mathbb{P}}
\newcommand{\var}{{\rm Var}}
\newcommand{\cov}{{\rm Cov}}

\newcommand{\ind}{\perp\!\!\!\!\perp}

\providecommand{\sgn}{\mathop{\rm sgn}}

\providecommand{\maximize}{\mathop{\rm maximize}}
\providecommand{\subjectto}{\mathop{\rm subject\;to}}

%% file: tex/problem-macros.tex
\newcommand{\invst}{I}
\newcommand{\idx}{\pi}
\newcommand{\followup}{2}
\newcommand{\singlesuccess}{\mathrm{1S}}
\newcommand{\twosuccess}{\mathrm{2S}}
\newcommand{\singleimpact}{\mathrm{1I}}
\newcommand{\twoimpact}{\mathrm{2I}}
\newcommand{\oracle}{\mathrm{oracle}}
\newcommand{\noisefree}{\infty}
\newcommand{\ATE}{\mathrm{ATE}}

\DeclareMathOperator{\Test}{Test}

\makeatletter
\newcommand{\smallsym}[2]{#1{\mathpalette\make@small@sym{#2}}}
\newcommand{\make@small@sym}[2]{%
  \vcenter{\hbox{$\m@th\downgrade@style#1#2$}}%
}
\newcommand{\downgrade@style}[1]{%
  \ifx#1\displaystyle\scriptstyle\else
  \ifx#1\textstyle\scriptstyle\else
  \scriptscriptstyle
  \fi\fi
}
\makeatother

%% file: tex/theorem-macros-arxiv.tex
\theoremstyle{plain}
\newtheorem{theorem}{Theorem}
\newtheorem{lemma}[theorem]{Lemma}
\newtheorem{proposition}[theorem]{Proposition}
\newtheorem{corollary}[theorem]{Corollary}

\theoremstyle{definition}

\makeatletter
\providecommand{\LinkedResultRegisterAux}[3]{\linkedresult@register{#1}{#2}{#3}}
\providecommand{\LinkedResultSectionRegisterAux}[2]{}
\providecommand{\SectionLabelOfRegisterAux}[2]{\sectionlabelof@registeralias{#1}{#2}}

\newcommand{\linkedresult@register}[3]{%
  \expandafter\gdef\csname linkedresult@mode@#1\endcsname{#2}%
  \expandafter\gdef\csname linkedresult@body@#1\endcsname{#3}%
}

\newcommand{\sectionlabelof@register}[2]{%
  \expandafter\gdef\csname sectionlabelof@#1\endcsname{#2}%
}

\newcommand{\sectionlabelof@current}{}

\newcommand{\sectionlabelof@registeralias}[2]{%
  \sectionlabelof@register{#1}{#2}%
  \@ifundefined{r@#2}{}{%
    \expandafter\global\expandafter\let\csname r@sectionof:#1\expandafter\endcsname\csname r@#2\endcsname
    \@ifundefined{r@#2@cref}{}{%
      \expandafter\global\expandafter\let\csname r@sectionof:#1@cref\expandafter\endcsname\csname r@#2@cref\endcsname
    }%
  }%
}

\newcommand{\sectionlabelof@ifcurrentsectionTF}[2]{%
  \expandafter\sectionlabelof@ifcurrentsectionTF@i\@currentHref.\@nil{#1}{#2}%
}

\def\sectionlabelof@ifcurrentsectionTF@i#1.#2\@nil#3#4{%
  \def\sectionlabelof@prefix{#1}%
  \def\sectionlabelof@sectionprefix{section}%
  \def\sectionlabelof@appendixprefix{appendix}%
  \ifx\sectionlabelof@prefix\sectionlabelof@sectionprefix
  #3%
  \else
  \ifx\sectionlabelof@prefix\sectionlabelof@appendixprefix
  #3%
  \else
  #4%
  \fi
  \fi
}

\newcommand{\sectionlabelof@maybeupdatecurrent}[1]{%
  \sectionlabelof@ifcurrentsectionTF{\gdef\sectionlabelof@current{#1}}{}%
}

\newcommand{\sectionlabelof@registercurrent}[1]{%
  \ifx\sectionlabelof@current\@empty
  \else
  \sectionlabelof@registeralias{#1}{\sectionlabelof@current}%
  \protected@write\@auxout{}{%
    \string\SectionLabelOfRegisterAux{#1}{\sectionlabelof@current}%
  }%
  \fi
}

\newcommand{\linkedresult@bodytextC}[1]{%
  \NoHyper\nameCref{#1}~\labelcref{#1}\endNoHyper
}

\newcommand{\linkedresult@bodytextc}[1]{%
  \NoHyper\namecref{#1}~\labelcref{#1}\endNoHyper
}

\newcommand{\linkedresult@heading}[3]{%
  \ifstrequal{#1}{proof}
  {Proof of \hyperref[#2]{\linkedresult@bodytextC{#2}}%
  \IfNoValueF{#3}{\space(#3)}}
  {\hyperref[#2]{\linkedresult@bodytextC{#2}}, Extended}%
}

\newcommand{\linkedresult@refcap}[2]{%
  \ifstrequal{#1}{proof}
  {\linkedresult@bodytextC{#2}}
  {\linkedresult@bodytextC{#2}, Extended}%
}

\newcommand{\linkedresult@reflow}[2]{%
  \ifstrequal{#1}{proof}
  {\linkedresult@bodytextc{#2}}
  {\linkedresult@bodytextc{#2}, extended}%
}

\newcommand{\linkedresult@ref}[2]{%
  \ifcsname linkedresult@mode@#1\endcsname
  \ifstrequal{#2}{capital}
  {\expanded{%
      \noexpand\hyperref[#1]{%
        \noexpand\linkedresult@refcap
        {\csname linkedresult@mode@#1\endcsname}%
        {\csname linkedresult@body@#1\endcsname}%
      }%
  }}
  {\expanded{%
      \noexpand\hyperref[#1]{%
        \noexpand\linkedresult@reflow
        {\csname linkedresult@mode@#1\endcsname}%
        {\csname linkedresult@body@#1\endcsname}%
      }%
  }}%
  \else
  \ifstrequal{#2}{capital}{\Cref{#1}}{\cref{#1}}%
  \fi
}

\newcommand{\linkedCref}[1]{\linkedresult@ref{#1}{capital}}
\newcommand{\linkedcref}[1]{\linkedresult@ref{#1}{lower}}
\newcommand{\sectionof}[1]{sectionof:#1}

\let\sectionlabelof@origlabel\label
\renewcommand{\label}[1]{%
  \sectionlabelof@maybeupdatecurrent{#1}%
  \sectionlabelof@registercurrent{#1}%
  \sectionlabelof@origlabel{#1}%
}

\NewDocumentEnvironment{linkedresult}{m m o}
{%
  \par\addvspace{3pt}%
  \begingroup
  \edef\linkedresult@currentmode{#1}%
  \edef\linkedresult@currentbody{#2}%
  \def\@currentlabelname{}%
  \def\label##1{%
    \@bsphack
    \begingroup
    \edef\linkedresult@currentlabel{##1}%
    \expanded{%
      \noexpand\linkedresult@register
      {\linkedresult@currentlabel}%
      {\linkedresult@currentmode}%
      {\linkedresult@currentbody}%
    }%
    \expanded{\noexpand\sectionlabelof@registercurrent{\linkedresult@currentlabel}}%
    \protected@write\@auxout{}{%
      \string\LinkedResultRegisterAux{\linkedresult@currentlabel}{\linkedresult@currentmode}{\linkedresult@currentbody}%
    }%
    \protected@write\@auxout{}{%
      \string\newlabel{\linkedresult@currentlabel}{{\@currentlabel}{\thepage}{\@currentlabelname}{\@currentHref}{\@kernel@reserved@label@data}}%
    }%
    \endgroup
    \@esphack
  }%
  \noindent\textbf{\linkedresult@heading{#1}{#2}{#3}.}\ \itshape\ignorespaces
}
{%
  \par\endgroup\addvspace{3pt}%
}
\makeatother

\crefname{theorem}{Theorem}{Theorems}
\Crefname{theorem}{Theorem}{Theorems}
\crefname{lemma}{Lemma}{Lemmas}
\Crefname{lemma}{Lemma}{Lemmas}
\crefname{proposition}{Proposition}{Propositions}
\Crefname{proposition}{Proposition}{Propositions}
\crefname{corollary}{Corollary}{Corollaries}
\Crefname{corollary}{Corollary}{Corollaries}
\crefname{definition}{Definition}{Definitions}
\Crefname{definition}{Definition}{Definitions}
\crefname{problem}{Problem}{Problems}
\Crefname{problem}{Problem}{Problems}
\crefname{remark}{Remark}{Remarks}
\Crefname{remark}{Remark}{Remarks}
\crefname{ulem}{Unproven Lemma}{Unproven Lemmas}
\Crefname{ulem}{Unproven Lemma}{Unproven Lemmas}
\crefname{uprop}{Unproven Proposition}{Unproven Propositions}
\Crefname{uprop}{Unproven Proposition}{Unproven Propositions}
\crefname{uthm}{Unproven Theorem}{Unproven Theorems}
\Crefname{uthm}{Unproven Theorem}{Unproven Theorems}
\crefname{ucor}{Unproven Corollary}{Unproven Corollaries}
\Crefname{ucor}{Unproven Corollary}{Unproven Corollaries}

%% file: paper/01_introduction.tex
\section{Introduction}
\label{sec:introduction}

Randomized controlled trials (RCTs) are the cornerstone of scientific discovery across medicine, social science, and public policy~\cite{ImbensRu15}.
While the statistical foundations of experimentation are well-developed, the question of how experimental designs affect downstream decision-making and outcomes is not as well-understood.
This is especially true when treatments are evaluated in a series of experiments, as is common in both natural and social sciences~\cite{ChenWi13,KasySa21}.

While there is a rich literature on single-stage designs, experimentation is rarely a one-off process: across domains ranging from drug development (FDA trial stages) to educational interventions (IES goals or project types), a common structure is that a smaller RCT determines whether a treatment merits further investigation, followed by a larger-scale confirmatory trial that informs adoption decisions~\cite{FDACD22,DepartmentofEducationBo22}.
This pilot to scale-up to adoption paradigm is society's solution to a fundamental resource allocation problem---how to efficiently screen promising interventions while managing costs and risks.
The role of small RCTs in screening promising interventions is critical: both rejecting an effective intervention and scaling up a harmful one can be costly to society~\cite{RossellMu07}.
Correspondingly, funding and regulatory bodies have highlighted the need for methodological progress on the design of early-stage trials~\cite{Tipton21,FDAHH20,DepartmentofEducationBo22}.

In practice, due to the tight resource constraints of most pilot studies, researchers often rely on \emph{convenience samples}---participants recruited from subpopulations where studies are easiest to conduct~\cite{TiptonSp21}.
When the effectiveness of the treatment under study varies with subject characteristics, the treatment effect in such a non-representative sample may be significantly better or worse than in the target population.
A common proposal to mitigate these errors is to recruit \emph{representative samples} that better match the target population~\cite{FDAHH20,DepartmentofEducationBo22}.
This is in line with a broad push for generalizability or external validity in science~\cite{EgamiHa23, BryanTi21, Rothwell05, HenrichHe10, VoelklVo18}.

In this work, we study how the \emph{design of a pilot\footnote{We focus on the use of small RCTs as screening mechanisms for scale-up decisions, and refer to these as \emph{pilot studies}.
Pilots have many definitions across fields, and sometimes refer to even earlier stages of experimentation where the goal is ``not to test the effectiveness of the intervention but to assess the feasibility of the protocol'' \cite[Figure 2]{EldridgeLa16}.} RCT} affects the outcomes of a target population when the pilot determines whether an intervention merits further study.
We characterize how the impact of pilot design propagates through three stages: the pilot study, the follow-up study, and the adoption decision, as shown in \Cref{fig:exp_process_flow}.
By modeling the multi-stage nature of scientific discovery, we are able to analyze the role statistical significance tests play as primary continuation criteria for making decisions in the current scientific status quo~\cite{FDA19,Berry91}.
In doing so, we are neither arguing for significance tests as an optimal continuation criterion, nor that impact is a normatively correct goal of any given stakeholder.
Instead, we seek to understand how a researcher whose work is evaluated by a significance test might design studies if they care about impact, and how pushing for more representative samples without changing how studies are evaluated might affect research outcomes.

\begin{figure}[ht]
  \centering
  \vspace{0.8\baselineskip}
  \input{diagrams/experiment_process_flow_chart}
  \caption{
    \textbf{The three-stage experiment pipeline we model.}
    The intervention is first evaluated in a budget-constrained pilot RCT, where the result of a significance test determines whether it advances to a follow-up RCT.
    A second significance test determines adoption of the intervention in the target population, leading to a potential improvement in average outcomes if successful.
  }
  \label{fig:exp_process_flow}
\end{figure}

We show that when optimizing for expected downstream impact, neither \emph{representative sampling} nor \emph{convenience sampling} is optimal across all regimes.
In fact, \textbf{the optimal pilot design depends heavily on the budget of the pilot study}, as illustrated in \Cref{fig:small-budget-vs-proportional-vs-feasible}.
Under mild assumptions, we characterize the optimal pilot design for sufficiently small and asymptotically large pilot budgets under \emph{general prior distributions} of the conditional average treatment effects across subpopulations.
In the large-budget regime where pilot studies achieve high statistical precision, our model verifies the conventional wisdom that pilot designs that match the composition of the target population achieve optimal expected downstream outcomes.
However, when the budget constraint is severe, the optimal pilot design takes the opposite extreme to a representative sample: it concentrates sampling on a single subpopulation.
This provides surprising evidence that the convenience samples commonly used in early-stage experiments may have close to optimal impact, and that representative samples may be counterproductive for highly constrained pilot studies when statistical significance is important.
Because our objective yields dramatically different designs under small budgets than those found in work on external validity~\cite{Tipton13,EgamiLe23,Bouyamourn25} and minimax-regret objectives~\cite{HuZh24,OleaPr25}, we believe substantial work remains to understand resource-constrained experiment design for decision-making.

\begin{figure}[ht]
  \vspace{-0.2\baselineskip}
  \centering
  \includegraphics[width=0.73\textwidth]{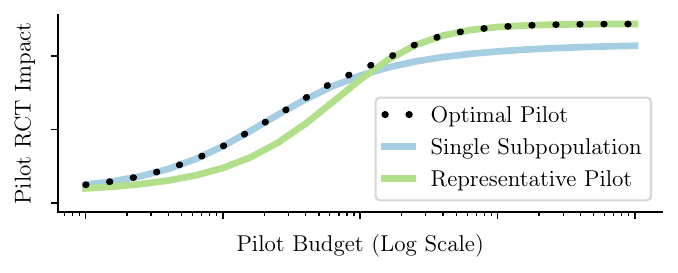}
  \vspace{-0.2\baselineskip}
  \caption{
    \textbf{Optimal design depends on pilot resources.}
    We plot the expected downstream impact achieved by the optimal pilot design, the best small-budget single-subpopulation design, and a representative sample.
    In budget-constrained settings, optimal pilots sample from a \emph{single homogeneous subpopulation}, while representative sampling is best for well-resourced trials.
  }
  \label{fig:small-budget-vs-proportional-vs-feasible}
  \vspace{-0.2\baselineskip}
\end{figure}

To gain insight into how treatment effect heterogeneity---the key motivation for representative samples---drives optimal pilot design, we specialize our small-budget optimal design to elliptical priors, which generalize multivariate normal distributions.
Here, we show that the single subpopulation that is optimal to sample under any sufficiently small budget is given by an index trading off
\begin{enumerate}[label=(\roman*)]
  \item a subpopulation's expected effect size,
  \item how predictive its effect is of the effect in the follow-up and target populations, and
  \item its recruitment cost.
\end{enumerate}

Why are homogeneous samples optimal under small budgets, while under large budgets the optimal pilot design approaches representative sampling?
\emph{For small budgets, the optimal design is driven by the signal-to-noise ratio of the test, while for large budgets it is driven by the mismatch between the pilot estimand (the sample average treatment effect) and the downstream estimand (the population average treatment effect).}

Our theoretical analysis of small-budget optimal design applies under mild assumptions to a much broader class of settings than pilot studies.
Homogeneous (single-subpopulation) sampling is small-budget optimal when significance testing is used to decide whether to receive \emph{any} downstream payoff, so long as that payoff does not depend directly on the pilot outcomes.
We show this general result via an asymptotic expansion of the objective that shows that for sufficiently small budgets, it is approximately equivalent to maximizing a signal that is linear in the subpopulation sample sizes divided by a noise scale equal to the square root of the total sample size.
This signal-to-noise ratio behaves roughly like a quasiconvex function (its positive part is quasiconvex), and therefore under a linear budget constraint it is maximized at an extreme point of the feasible design polytope.

In the other extreme where the pilot budget is very large, the pilot test can be made arbitrarily precise, and a representative sample will make almost no type I or type II errors.
Any non-representative sample, on the other hand, will make errors on some non-negligible fraction of the prior distribution of treatment effects---those for which the signs of the sample and population average treatment effects differ.

We conclude this section with a heuristic description of what kinds of subpopulations are optimal to sample under a tight budget.
Consider our main objective of maximizing the expected impact of a pilot used to determine whether to run a follow-up RCT, and compare it to two other potential objectives: maximizing the impact of adopting based on just a single RCT, or maximizing the probability of passing the pilot significance test.
Under a sufficiently tight budget, our analysis shows that sampling from a single subpopulation is optimal for all three objectives.
However, which population to sample from varies.

Consider the case when all subpopulations have the same sampling costs, and a potential follow-up RCT will use a very large and representative sample.
In this case, pilot false positives never make it to adoption---the follow-up RCT filters out any treatments with negative effects in the target population.
Since false positives are not penalized, the pilot can be more aggressive in sampling from subpopulations that have larger expected effects, even if they are less predictive of the target population effect.
However, it cannot focus only on passing the pilot, as there is still an important trade-off between passing the pilot and avoiding false negatives:  avoiding rejecting interventions with a positive effect in the target population.
Thus, pilot impact falls between single-RCT impact and pilot pass probability in terms of how it prioritizes how promising a subpopulation is versus how predictive it is of the target population effect.

The remainder of the paper is organized as follows.
\Cref{sec:Model} presents a model of the experimental pipeline, including the treatment effect structure, experimental design decisions, and significance testing protocols, and formalizes the optimization problem.
Sections \ref{sec:large-budget} and \ref{sec:small-budget} present our main theoretical results characterizing optimal pilot design in the large-budget and small-budget regimes.
\Cref{sec:case-study} presents a semi-synthetic case study calibrated to a real set of experiments in education research, demonstrating that real-world pilot studies can fall squarely within either of the budget regimes we characterize.
Finally, we situate our contributions in the literature on experimentation and external validity in \Cref{sec:related-work}, and discuss implications for future work in \Cref{sec:conclusion}.

%% file: diagrams/experiment_process_flow_chart.tex
\begin{tikzpicture}[scale=0.19]
  \tikzstyle{every node}+=[inner sep=0pt]
  \pgfmathsetmacro{\diagArrowLabelGap}{1.8}
  \newcommand{\dashedDecisionArrowLabel}[3]{%
    \path let
    \p1 = (#1),
    \p2 = (#2),
    \n1 = {atan2(\y2-\y1,\x2-\x1)}
    in node[
      rotate=\n1,
      text=red,
      fill=white,
      inner sep=1pt
    ] at ($($(#1)!0.5!(#2)$)+(\n1+90:\diagArrowLabelGap)$) {#3};%
  }

  \draw (14.2,-20.3) node {\textbf{Pilot RCT}};
  \draw (33.1,-20.3) node {Follow-up RCT};
  \draw (52.3,-20.3) node {Adoption};
  \draw (14.2,-31.9) node[text=red] {Statistical Test};
  \draw (33.1,-31.9) node[text=red] {Statistical Test};
  \draw (52.3,-31.9) node {Treatment Effect};

  \draw [black, dashed] (16.87,-29.54) -- (30.43,-22.66);
  \dashedDecisionArrowLabel{16.87,-29.54}{30.43,-22.66}{Continue?}
  \fill [black] (30.43,-22.66) -- (29.49,-22.58) -- (29.94,-23.47);
  \draw [black, dashed] (35.78,-29.56) -- (49.62,-22.64);
  \dashedDecisionArrowLabel{35.78,-29.56}{49.62,-22.64}{Adopt?}
  \fill [black] (49.62,-22.64) -- (48.68,-22.55) -- (49.12,-23.45);

  \draw [black] (14.2,-22.66) -- (14.2,-29.54);
  \fill [black] (14.2,-29.54) -- (14.7,-28.74) -- (13.7,-28.74);

  \draw [black] (33.1,-22.64) -- (33.1,-29.56);
  \fill [black] (33.1,-29.56) -- (33.6,-28.76) -- (32.6,-28.76);

  \draw [black] (52.3,-22.64) -- (52.3,-29.56);
  \fill [black] (52.3,-29.56) -- (52.8,-28.76) -- (51.8,-28.76);

\end{tikzpicture}

%% file: paper/02_model.tex
\section{Model}
\label{sec:Model}

In this section, we introduce the model of the main experimental process that we study in this paper.
Our goal is to design pilot studies under a budget constraint that maximize impact---the expected improvement in outcomes---through an experimental process consisting of a pilot RCT, a potential follow-up RCT, and finally a treatment adoption decision, as described in the introduction and shown in~\Cref{fig:exp_process_flow}.

We consider a setting where, following each RCT stage, a statistical significance test serves as the \emph{only} continuation criterion for whether the intervention moves from one stage to the next.
As described in the previous section, this reflects a common continuation criterion used in practice.

We further assume that the pilot affects the outcome of the experimental process \emph{only} through its significance test, i.e.\ whether the treatment continues to a follow-up RCT.
In other words, the potential follow-up design is fixed, e.g.\ to something like a representative sample, and the adoption decision only depends on the outcomes of the follow-up RCT.
This assumption is necessary for tractability since designing the follow-up is usually itself a non-convex optimization problem.

The remainder of this section gives a formal definition of the model and objective.

\subsection{Three-stage experiment pipeline}

We model the experimental process in terms of three stages: the pilot RCT, the follow-up RCT, and adoption, indexed by $t\in\{1,2,3\}$, respectively.

In the pilot and follow-up RCT stages, a sample of subjects is recruited into the study.
At the adoption stage, the intervention is rolled out to a predetermined target population.
Each stage has a corresponding \emph{average treatment effect},
\[
  \ATE_t\in\R,\quad t\in\{1,2,3\},
\]
and the two RCT stages each have an \emph{average treatment effect estimate} derived from the RCT sample,
\[
  \widehat{\ATE}_t\in\R,\quad t\in\{1,2\}.
\]

Following each RCT stage $t=1,2$, the intervention moves to the next stage if and only if the average treatment effect estimate clears a threshold given by a \emph{statistical significance test},
\[
  \Test_t\in\{0,1\},\quad t\in\{1,2\},
\]
where we write $\Test_t=1$ if the treatment continues to the next stage and $\Test_t=0$ otherwise.
Since a treatment is only adopted if it passes both significance tests, we can write the \emph{realized improvement in average outcomes} as
\[
  \Test_1 \cdot \Test_2 \cdot\, \ATE_3.
\]

\subsection{RCT samples and the target population}

Each stage of the experimental process has a sample or population of units that can vary in its composition.
We define discrete unit types to capture feature variability, and the composition of each stage in terms of the number of units of each type.

\paragraph{Unit types}
There are $D\ge 2$ types of units (e.g., patients, users, schools, etc.) that can be treated or observed, indexed by $d\in\{1,\ldots,D\}$.
These types can vary in their treatment effects and sampling costs.
For example, in an education study where schools are the units, a type could include schools with similar demographics, funding, and size---all of which could affect the treatment effect---and at a similar distance from the research institution running the pilot study---which could affect the cost of sampling.

\paragraph{Pilot RCT sample}
At the pilot stage $t=1$, the pilot designer chooses how many units of each type to include in the study.
Let $s_{1d}\geq 0$ denote the number of units sampled from each type $d$, and $\bm{s}_1\in\R_{\geq 0}^D$ the vector of units sampled across all $D$ types.
Let $n_1 \defeq \bm{1}^{\top}\bm{s}_1$ be the pilot sample size.
For tractability, we let $\bm{s}_1$ take on continuous values, rather than imposing an integer constraint.
We further constrain the sample to be non-empty, i.e., $\bm{s}_1 \neq \bm{0}$ and $n_1 > 0$, since the average treatment effect is undefined for an empty sample, and write this as $\bm{s}_1 \in \R_{\geq 0}^{D} \setminus \{\bm{0}\}$.

\paragraph{Follow-up RCT sample}
At the follow-up stage $t=2$, a sample of units is enrolled in the follow-up RCT.
Let $\bm{s}_2\in\R_{\geq 0}^{D} \setminus \{\bm{0}\}$ be the vector of units sampled across all $D$ types if the follow-up is run, and $n_2 \defeq \bm{1}^{\top}\bm{s}_2 > 0$ the total follow-up sample size.
We focus on designing the optimal pilot sample $\bm{s}_1$ for a fixed follow-up study sample $\bm{s}_2$, and leave their optimal joint design for future work.

\paragraph{Target population}
At the adoption stage $t=3$, a fixed set of units we call the target population is treated if the treatment is adopted.
Let $\bm{s}_3\in\R_{\geq 0}^{D} \setminus \{\bm{0}\}$ be the vector of population sizes for each type, and $n_3 \defeq \bm{1}^{\top}\bm{s}_3 > 0$ be the total target population size.

\begin{figure}[ht]
  \centering
  \input{diagrams/simplified_experiment_process_flow}
  \caption{
    \textbf{The three-stage experimental pipeline.}
    The pilot design $\bm{s}_1$ is highlighted in blue.
    The samples $\bm{s}_1$ and $\bm{s}_2$ in the first two stages determine the distribution of the average treatment effect estimates, and the population $\bm{s}_3$ in the third stage determines the change in outcomes if both significance tests pass.
  }
  \label{fig:experiment_process_flow}
\end{figure}

\subsection{Probabilistic model}

We define a probabilistic model of the experimental process.
The pilot designer has a prior over heterogeneous average treatment effects for each type.
Under this prior, the sample compositions of the pilot and follow-up RCTs and the composition of the target population induce a distribution over their average treatment effects.
Combined with a sampling distribution for sample average treatment effect estimates and a definition of the significance tests, this gives a full probability model for the experimental process.

We now define the heterogeneous treatment effect vector and how the sample and population compositions determine the distribution of average and sample average treatment effects.

\paragraph{Heterogeneous treatment effects}
We denote the conditional average treatment effect for type $d$ as $\tau_d$ and
let $\bm{\tau}\in\R^D$ be the vector of type average treatment effects.%
\footnote{
  Under the potential outcomes framework, these can be written as the mean
  difference in outcomes under treatment $Y_{i}(1)$ and control $Y_{i}(0)$ for a
  unit $i$ of type $d$, $\tau_d = \E[Y_{i}(1) - Y_{i}(0) \mid i \text{ is of
  type } d]$.
}
The experimenter has a prior over the conditional average treatment effects across types.
Our main results hold under general priors $\mathcal{P}$ on $\R^D$ satisfying moment and other mild regularity conditions,
\[
  \bm{\tau}\sim\mathcal{P}.
\]
The existence of fourth moments of $\bm\tau$ and a Lebesgue density positive on an open ball containing the origin are sufficient for all the main results; the appendix gives specific conditions for each.

We are able to give a more interpretable characterization of the optimal small-budget pilot design when further restricting the prior to the class of elliptical distributions~\cite{FangKo90}
\[
  \bm{\tau}\sim\mathrm{EC}_D(\bm{\mu},\Sigma,\phi),
\]
which includes and can be viewed as generalizing the multivariate normal distribution.

\paragraph{Average Treatment Effects}
Each stage $t=1,2,3$ and its associated sample $\bm{s}_t$ has an average treatment effect $\ATE_t$ given in terms of the type-level effects as
\[
  \ATE_t \defeq \frac{\bm{s}_t^{\top}\bm{\tau}}{n_t},
  \quad t\in\{1,2,3\}.
\]
These effects are not known to the pilot designer, but their prior distributions are induced by the designer's prior distribution on $\bm\tau$.

\paragraph{RCT Average Treatment Effect Estimates}
In the pilot and follow-up, the experimenter employs an RCT design paired with an appropriate estimator of the average treatment effect.
We approximate the sampling distributions of the estimators using independent normal distributions across stages, with precision proportional to the sample size $n_t$.%
\footnote{
  We set the variance to be exactly equal to the inverse sample size for notational simplicity; one could achieve this by scaling the outcome by a design- and estimator-dependent constant.
  If each type has different sampling noise, one can interpret the costs as costs-per-precision instead of costs-per-unit.
}
This approximation is motivated by central limit theorems that apply to many standard designs and estimators, such as completely- and Bernoulli-randomized designs as well as matched pairs and more general stratified designs~\cite{LiDi17,Ding23}, and is frequently used when modeling sampling noise~\cite{FrazierPo09,ChickGa22,AzevedoDe20}.
\[
  \widehat{\ATE}_t \mid \bm{\tau} \overset{\text{ind}}{\sim}
  \mathcal{N}\left(
    \ATE_t,
    \frac{1}{n_t}
  \right),
  \quad t\in\{1,2\}.
\]

\paragraph{Significance Tests as Continuation Criteria}
Following each RCT stage, the treatment advances to the next stage if the sample average treatment effect clears a threshold given by a statistical test with a specified significance level.

Formally, the continuation criteria are two Z-tests with one-sided%
\footnote{
  The choice of a one-sided test is without loss of generality, as in this case
  a one-sided test at level $\alpha$ is equivalent to a two-sided test at level
  $2\alpha$ when continuation requires a positive significant result.
}
significance levels $\alpha_{1}\in(0,1)$ and $\alpha_{2}\in(0,1)$.
Under the weak null that the treatment has no true average effect $\ATE_t=0$, the following test statistic has a standard normal sampling distribution by the distribution of the average treatment effect estimate:
\[
  \sqrt{n_t}\cdot\widehat{\ATE}_t
  =
  \sqrt{n_t}\cdot(\widehat{\ATE}_t - \ATE_t)
  \sim
  \mathcal{N}(0,1),
  \quad t\in\{1,2\}.
\]
Thus, one-sided tests at the $\alpha_{1}$ and $\alpha_{2}$ significance levels that allow continuation only for significant positive effects are given by the following decision rules,
\begin{align*}
  \Test_t &
  \defeq
  \indic{
    \sqrt{n_t}\cdot\widehat{\ATE}_t\ge
    z_{1-\alpha_t}
  } ,\quad z_{1-\alpha_t}\defeq\Phi^{-1}\left(1-\alpha_{t}\right)
\end{align*}
where we write $\Phi^{-1}$ for the normal inverse cumulative distribution function and $z_{1-\alpha_t}$ for the corresponding critical value of the test statistic.

\subsection{Pilot constraint, impact objective, and optimization problem}
\label{sec:pilot-budget-constraint-and-objective}

\paragraph{Pilot Sampling Costs}
For each type $d$, it costs $c_{d}>0$ to recruit a unit of type $d$ into the pilot RCT.
This cost can represent a monetary cost in enrolling the unit, a time cost in recruiting the unit or measuring its outcomes, etc.

\paragraph{Pilot Budget and Investment}
The pilot designer has a fixed budget $b > 0$ available to invest in running the pilot RCT.
When choosing the pilot design $\bm{s}_1$, the experimenter is subject to the constraint that the investment $I \defeq \bm{c}^{\top}\bm{s}_{1}$ is bounded by the budget $b$,
\[
  \bm{c}^{\top}\bm{s}_{1}\le b.
\]
We assume that the pilot budget constraint is the only binding constraint on the pilot sample.

\paragraph{Pilot RCT Impact}
Our main goal is to understand how to optimize the expected realized improvement in
outcomes in the target population---what we call the \emph{pilot RCT impact}.
The true average treatment effect is only realized in the target population if
the trial continues to a follow-up and results in adoption, so we can define the
pilot impact as%
\footnote{
  Note that writing $\ATE_3$ in the expression for the pilot impact, i.e.\ ignoring any finite sample variation in the target population average treatment effect, is without loss of generality since we are interested in the expected value.
  We could substitute some $\widehat{\ATE}_3$ for $\ATE_3$ in the above expression, where so long as $\E[\widehat{\ATE}_3\mid \bm{\tau}] = \ATE_3$ and the noise is independent given $\bm{\tau}$, the expected pilot impact would be unchanged.
}
\[
  V(\bm{s}_1) \defeq \E\left[
    \Test_1
    \cdot
    \Test_2
    \cdot\,
    \ATE_3
  \right]
\]
Other objectives can be formulated within our framework, such as maximizing the probability of passing the pilot statistical test $\E[\Test_1]$ or a sequence of tests $\E[\Test_1 \cdot \Test_2]$, or the impact of making an adoption decision based on a single RCT $\E[\Test_1 \cdot \ATE_3]$.
We compare small-budget optimal policies for these alternate objectives in \Cref{sec:alternate-objectives-appendix}.

We can formalize our mathematical goal as the following optimization problem.

\vspace{0.1cm}
\medskip
\noindent\textbf{Pilot RCT Impact Maximization Problem}
\vspace{0.1cm}
\begin{align*}
  \text{maximize}\quad  &
  V(\bm{s}_1)\\
  \text{s.t.}\quad &
  \begin{aligned}
    \bm{c}^{\top}\bm{s}_{1} & \le b &  & \text{(pilot
    budget constraint)}\\
    \bm{s}_{1} & \in\R_{\geq 0}^{D} \setminus \{\bm{0}\} &  &
    \text{(non-negative, non-zero
    sample)}
  \end{aligned}
\end{align*}

\vspace{0.2cm}
\noindent We note that in general this problem is non-convex, even under the assumption of a Gaussian prior on treatment effects that is commonly made in the literature on clinical trials and R\&D experiments in operations research~\cite{FrazierPo09,ChickGa22}.
For this reason, we will focus on the set of optimal solutions
\[
  \mathcal{S}_1^{\star}
  =
  \underset{\bm{s}_1}{\arg\max}
  \left\{
    V(\bm{s}_1) :
    \bm{c}^\top\bm{s}_1 \leq b,\
    \bm{s}_1 \in \R_{\geq 0}^{D} \setminus \{\bm{0}\}
  \right\}.
\]

In the following sections, we characterize this set of optimal solutions in two parameter regimes: for  large pilot budgets ($b\to\infty$) in \Cref{sec:large-budget} and small pilot budgets ($b$ smaller than a finite threshold) in \Cref{sec:small-budget}.
We write $\mathcal{S}_1^{\star}(b)$ when we wish to emphasize the dependence of the set of optimal solutions on the pilot budget $b$.
To start, we note that an optimal solution always exists.

\begin{proposition}[Existence of optimal
  solution]
  \label{prop:optimal-solution-exists}
  Suppose $\E[|\ATE_3|]<\infty$ and $\var(\ATE_3)>0$.
  Then the impact maximization problem has at least one optimal solution, i.e.\ $\mathcal{S}_1^{\star}\neq\emptyset$.
\end{proposition}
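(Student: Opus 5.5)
The plan is to show that $V$ is continuous on the feasible set and that near the excluded point $\bm{s}_1=\bm{0}$ it stays strictly below a value attained elsewhere. It then suffices to maximize over a compact subset bounded away from the origin. The feasible set $F=\{\bm{s}_1\ge \bm{0},\ \bm{c}^\top\bm{s}_1\le b\}\setminus\{\bm{0}\}$ is bounded, since $c_d>0$, but it is not closed. The only obstruction to compactness is the origin.

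\textbf{Reduction to a conditional expectation.} Conditioning on $\bm\tau$, and using independence of the two estimates given $\bm\tau$, we can write
\[
V(\bm{s}_1)=\E\bigl[p_1(\bm{s}_1,\bm\tau)\,p_2(\bm\tau)\,\ATE_3\bigr],
\qquad
p_1(\bm{s}_1,\bm\tau)=\Phi\bigl(\sqrt{n_1}\,\ATE_1-z_{1-\alpha_1}\bigr),
\]
where $p_2(\bm\tau)=\Phi(\sqrt{n_2}\,\ATE_2-z_{1-\alpha_2})\in(0,1)$ does not depend on $\bm{s}_1$.

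\textbf{Continuity.} The map $\bm{s}_1\mapsto p_1$ is continuous on $F$ for every fixed $\bm\tau$, because $\ATE_1=\bm{s}_1^\top\bm\tau/n_1$ with $n_1>0$. The integrand is dominated by $|\ATE_3|$, which is integrable by assumption. Dominated convergence therefore gives continuity of $V$ on $F$.

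\textbf{Behavior near the origin.} Set $V_0\defeq \alpha_1\E[p_2(\bm\tau)\ATE_3]$, using $\Phi(-z_{1-\alpha_1})=\alpha_1$. Since $|\ATE_1|\le\norm{\bm\tau}_\infty$ regardless of the direction of $\bm{s}_1$, the Lipschitz bound on $\Phi$ gives
\[
|V(\bm{s}_1)-V_0|\le \E\Bigl[\min\bigl(1,\ \sqrt{n_1}\,\norm{\bm\tau}_\infty\bigr)\,|\ATE_3|\Bigr].
\]
This tends to $0$ as $n_1\to 0$ by dominated convergence, and the convergence is uniform over directions. Hence $V(\bm{s}_1)\to V_0$ whenever $n_1\to0$ within $F$.

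\textbf{Beating $V_0$ (the key step).} We need a feasible point with $V>V_0$; otherwise the supremum could be approached only at the missing origin. Take $\bm{s}_1^\dagger=\varepsilon\,\bm{s}_3$ with $\varepsilon>0$ small enough that $\bm{c}^\top\bm{s}_1^\dagger\le b$. Then $\ATE_1=\ATE_3$, so
\[
V(\bm{s}_1^\dagger)-V_0=\E\Bigl[\bigl(\Phi(\sqrt{n_1}\,\ATE_3-z_{1-\alpha_1})-\Phi(-z_{1-\alpha_1})\bigr)\,p_2(\bm\tau)\,\ATE_3\Bigr].
\]
The factor in parentheses has the same sign as $\ATE_3$, so the integrand is pointwise nonnegative. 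It is strictly positive wherever $\ATE_3\neq0$, because $p_2>0$. The assumption $\var(\ATE_3)>0$ rules out $\ATE_3=0$ almost surely, so $V(\bm{s}_1^\dagger)>V_0$. This argument needs only $\E|\ATE_3|<\infty$ and avoids any second-moment expansion, which is why I expect this step to be the crux.

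\textbf{Conclusion.} By the uniform limit, there is a $\delta\in(0,n_1^\dagger)$ such that $V(\bm{s}_1)<V(\bm{s}_1^\dagger)$ whenever $n_1<\delta$. The set $F_\delta=F\cap\{n_1\ge\delta\}$ is closed and bounded, hence compact, and it contains $\bm{s}_1^\dagger$. The continuous function $V$ attains its maximum on $F_\delta$, and this maximum is at least $V(\bm{s}_1^\dagger)$. Every point of $F\setminus F_\delta$ has a smaller value, so the maximizer over $F_\delta$ is a global maximizer over $F$. Therefore $\mathcal{S}_1^\star\neq\emptyset$.
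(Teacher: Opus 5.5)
Your proof is correct and follows essentially the same route as the paper: continuity of $V$ by dominated convergence, the limit $\alpha_1\E[\Test_2\cdot\ATE_3]$ as $\bm s_1\to\bm 0$, and a representative design $\propto\bm s_3$ that strictly beats this limit because $x\bigl(\Phi(\sqrt{n_1}\,x-z_{1-\alpha_1})-\alpha_1\bigr)>0$ for $x\neq 0$ and $\var(\ATE_3)>0$. The only cosmetic difference is in how compactness is recovered: the paper extends the reparametrized objective continuously to $\invst=0$ on the compact set $[0,b]\times\overline{\mathcal S}$ and then rules out a maximizer with $\invst^\star=0$, whereas you truncate to the compact set $\{n_1\ge\delta\}$.
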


This follows from the observation that the objective is continuous and does not approach its supremum as $\bm{s}_1\to\bm{0}$, where $\bm{0}$ is the only limit point not in the feasible set since every valid design must have a positive sample size so that the objective is well-defined.
See \Cref{\sectionof{prop:optimal-solution-exists-appendix}} for the
\hyperref[prop:optimal-solution-exists-appendix]{proof}.

We will frequently reparametrize the decision variable $\bm{s}_1$ in terms of the pilot investment $\invst=\bm{c}^{\top}\bm{s}_1\in\R$ and a unit-investment normalized sample
\[
  \overline{\bm{s}}_1\in\{\bm{x}\in\R_{\geq 0}^D : \bm{c}^\top
  \bm{x}=1\}\eqdef\overline{\mathcal{S}}.
\]
By the definition of the feasible set above,
\[
  \bm{s}_1 = \invst\,\overline{\bm{s}}_1\ \ \text{is feasible if and
  only if}\ \ 0 < \invst \le
  b\ \ \text{and}\ \ \overline{\bm{s}}_1\in\overline{\mathcal{S}}.
\]

%% file: diagrams/simplified_experiment_process_flow.tex
\newcommand{\drawdownarc}[4][]{%
  \path let \p1=(#2), \p2=(#3) in
  \pgfextra{
    \pgfmathsetmacro{\dx}{\x2-\x1}
    \pgfmathsetmacro{\dy}{\y2-\y1}
    \pgfmathsetmacro{\L}{sqrt(\dx*\dx+\dy*\dy)}

    \pgfmathsetlengthmacro{\slen}{#4cm}
    \pgfmathsetmacro{\s}{\slen}

    \pgfmathsetmacro{\r}{(\L*\L)/(8*\s) + (\s/2)}
    \pgfmathsetmacro{\d}{\r-\s}

    \pgfmathsetmacro{\xm}{(\x1+\x2)/2}
    \pgfmathsetmacro{\ym}{(\y1+\y2)/2}

    \pgfmathsetmacro{\nx}{\dy/\L}
    \pgfmathsetmacro{\ny}{-\dx/\L}
    \pgfmathsetmacro{\flip}{ifthenelse(\ny<0,-1,1)}
    \pgfmathsetmacro{\nx}{\flip*\nx}
    \pgfmathsetmacro{\ny}{\flip*\ny}

    \pgfmathsetmacro{\xc}{\xm + \nx*\d}
    \pgfmathsetmacro{\yc}{\ym + \ny*\d}

    \pgfmathsetmacro{\angA}{atan2(\y1-\yc,\x1-\xc)}
    \pgfmathsetmacro{\theta}{2*asin(\L/(2*\r))}

    \pgfmathsetmacro{\sgn}{ifthenelse(\dx>0,1,-1)}
    \pgfmathsetmacro{\delta}{\sgn*\theta}

    \xdef\ArcStart{\angA}
    \xdef\ArcDelta{\delta}
    \xdef\ArcRadius{\r}
  };
  \draw[#1] (#2) arc[start angle=\ArcStart, delta angle=\ArcDelta,
  radius=\ArcRadius pt];
}

\begin{tikzpicture}[
    font=\small,
    >=Triangle,
    stage/.style={align=center},
    label/.style={align=center, font=\bfseries},
    flow/.style={->, line width=0.75pt, line cap=round},
    decisionflow/.style={flow, dashed}
  ]

  \def\colsep{3.5}
  \def\toplabsep{0.40}
  \def\rowsep{1.65}
  \def\arrowgap{0.15}

  \def\curvedrop{0.40}
  \def\curvegap{0.18}
  \def\midgap{0.18}

  \def\testdrop{1.35}

  \def\boxpadx{9pt}
  \def\boxpady{7pt}

  \coordinate (c1) at (0,0);
  \coordinate (c2) at (\colsep,0);
  \coordinate (c3) at (2*\colsep,0);

  \node[stage] (S1) at (c1) {Pilot Sample $\bm{s}_1$};
  \node[stage] (S2) at (c2) {Follow-up Sample $\bm{s}_2$};
  \node[stage] (S3) at (c3) {Population $\bm{s}_3$};

  \node[label] (DecLbl)     at ($(S1.north)+(0,\toplabsep)$) {Decision};
  \node[label] (FixTopLbl)  at
  ($($(S2.north)!0.5!(S3.north)$)+(0,\toplabsep)$) {Fixed};

  \node[stage] (H1) at ($(S1)+(0,-\rowsep)$) {$\widehat{\ATE}_1$};
  \node[stage] (H2) at ($(S2)+(0,-\rowsep)$) {$\widehat{\ATE}_2$};
  \node[stage] (Y)  at ($(S3)+(0,-\rowsep)$) {$\ATE_3$};

  \node[stage, text=red] (T1) at ($($(H1)!0.5!(H2)$)+(0,-\testdrop)$)
  {$\Test_1$};
  \node[stage, text=red] (T2) at ($($(H2)!0.5!(Y)$ )+(0,-\testdrop)$)
  {$\Test_2$};

  \begin{pgfonlayer}{background}
    \node (BoxDec) [
      fit=(DecLbl)(S1),
      fill=blue!25,
      inner xsep=\boxpadx,
      inner ysep=\boxpady
    ] {};

    \node (BoxFixTop) [
      fit=(FixTopLbl)(S2)(S3),
      fill=gray!25,
      inner xsep=\boxpadx,
      inner ysep=\boxpady
    ] {};

  \end{pgfonlayer}

  \coordinate (S2BoxSouth) at (S2.south |- BoxFixTop.south);
  \coordinate (S3BoxSouth) at (S3.south |- BoxFixTop.south);

  \draw[flow] ($(BoxDec.south)+(0,-\arrowgap)$)   --
  ($(H1.north)+(0,\arrowgap)$);
  \draw[flow] ($(S2BoxSouth)+(0,-\arrowgap)$)    --
  ($(H2.north)+(0,\arrowgap)$);
  \draw[flow] ($(S3BoxSouth)+(0,-\arrowgap)$)    -- ($(Y.north)+(0,\arrowgap)$);

  \coordinate (ArcAstart) at ($(H1.south)+(0,-\curvegap)$);
  \coordinate (ArcAend)   at ($(H2.south)+(-\midgap,-\curvegap)$);
  \coordinate (ArcBstart) at ($(H2.south)+(\midgap,-\curvegap)$);
  \coordinate (ArcBend)   at ($(Y.south)+(0,-\curvegap)$);

  \drawdownarc[decisionflow]{ArcAstart}{ArcAend}{\curvedrop}
  \drawdownarc[decisionflow]{ArcBstart}{ArcBend}{\curvedrop}

\end{tikzpicture}

%% file: paper/03_large_budget.tex
\section{Representative Samples are Optimal in Large-Budget Pilots}
\label{sec:large-budget}

We begin our analysis of optimal pilot design by considering the optimal pilot sample composition for large pilot budgets.
We show that as the pilot budget grows $b\to\infty$, the set of normalized optimal pilot designs converges to a representative sample---one that matches the composition of the target population $\bm{s}_1 \propto \bm{s}_3$, for any fixed follow-up sample $\bm{s}_2$.
Intuitively, a large pilot allows treatments to continue if and only if they have positive effects in the pilot sample, so we should want them to have positive effects in the pilot if and only if they do in the target population.
This is ensured uniquely by a large representative sample.

We first state our main result for the large-budget regime: as the pilot budget $b$ increases without bound, the investment of optimal pilots grows without bound ($I\to\infty$) and the normalized composition of any optimal pilot converges to the normalized representative design $\bm{s}_3 / \bm{c}^\top \bm{s}_3$.
\begin{theorem}[Large-budget Limiting Optimal
  Policy]
  \label{thm:large-budget-limiting-optimal-policy}
  Suppose the prior distribution of $\bm\tau$ has a density positive on an open ball containing the origin, and the average treatment effect on the target population has finite second moment under the prior.
  Then in the large-budget limit $b\to\infty$, the investment of optimal pilots diverges
  \[
    \inf_{\bm s_1^\star\in\mathcal{S}_1^{\star}(b)} \bm
    c^\top \bm s_1^\star
    \to \infty
  \]
  and the unit-investment normalized optimal designs converge to the normalized representative design,
  \[
    \sup_{\bm s_1^\star\in\mathcal{S}_1^{\star}(b)}
    \left\|
    \frac{\bm s_1^\star}{\bm c^\top \bm
    s_1^\star}-\frac{\bm s_3}{\bm c^\top \bm s_3}
    \right\|
    \to 0.
  \]
\end{theorem}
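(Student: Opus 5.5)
Write $\bm s_1 = \invst\,\overline{\bm s}_1$ with $\overline{\bm s}_1\in\overline{\mathcal S}$, and let $\overline{\bm s}_3 \defeq \bm s_3/\bm c^\top\bm s_3$. Conditional on $\bm\tau$, the two tests are independent. Hence
\[
V(\bm s_1)=\E\bigl[\Phi\bigl(\sqrt{n_1}\,\ATE_1 - z_{1-\alpha_1}\bigr)\, g(\bm\tau)\bigr],
\qquad g(\bm\tau)\defeq \Phi\bigl(\sqrt{n_2}\,\ATE_2-z_{1-\alpha_2}\bigr)\,\ATE_3 .
\]
Here $\ATE_1=\overline{\bm s}_1^\top\bm\tau/\mathbf 1^\top\overline{\bm s}_1$ depends only on the direction $\overline{\bm s}_1$, and $n_1=\invst\,\mathbf 1^\top\overline{\bm s}_1$. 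As $\invst\to\infty$ the first factor converges pointwise to $\indic{\overline{\bm s}_1^\top\bm\tau>0}$. Define the limiting objective
\[
V_\infty(\overline{\bm s}_1)\defeq\E\bigl[\indic{\overline{\bm s}_1^\top\bm\tau>0}\,g(\bm\tau)\bigr].
\]
The plan has three steps: show $V\to V_\infty$ uniformly on $\overline{\mathcal S}$; show that $\overline{\bm s}_3$ is the unique maximizer of $V_\infty$, with a well-separated maximum; and deduce both conclusions of the theorem by a standard argmax-consistency argument.

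\textbf{Step 1 (uniform convergence).} The error is bounded by
\[
|V(\invst\overline{\bm s}_1)-V_\infty(\overline{\bm s}_1)|\le \E\bigl[|\Phi(\sqrt{n_1}\ATE_1-z)-\indic{\ATE_1>0}|\,|\ATE_3|\bigr].
\]
Since $\mathbf 1^\top\overline{\bm s}_1\ge 1/\max_d c_d$, we have $n_1\ge \invst/\max_d c_d$. Split the expectation according to whether $|\overline{\bm s}_1^\top\bm\tau|\le\epsilon$ or not. On the set $|\overline{\bm s}_1^\top\bm\tau|>\epsilon$, the integrand tends to $0$ uniformly in $\overline{\bm s}_1$. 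On the set $|\overline{\bm s}_1^\top\bm\tau|\le\epsilon$, bound the integrand by $\E[|\ATE_3|\indic{|\overline{\bm s}_1^\top\bm\tau|\le\epsilon}]$. This is small uniformly over the compact set $\overline{\mathcal S}$: by Cauchy–Schwarz with the finite second moment of $\ATE_3$, it suffices that $\sup_{\overline{\bm s}}\P(|\overline{\bm s}^\top\bm\tau|\le\epsilon)\to0$. That holds because hyperplanes through the origin have prior measure zero; the density near the origin gives absolute continuity there, and on the rest of space I would argue via continuity in $\overline{\bm s}$ plus compactness (Dini-type). If the density assumption only guarantees absolute continuity near the origin, I would handle the tails through $\E[|\ATE_3|\cdot]$ and the limiting objective's continuity; this is a technical point to watch.

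\textbf{Step 2 (unique maximizer).} The pointwise bound
\[
\indic{\overline{\bm s}_1^\top\bm\tau>0}\,g\le g^+ = \indic{\ATE_3>0}\,g
\]
holds because $g$ has the sign of $\ATE_3$. Equality is attained at $\overline{\bm s}_3$, since $\overline{\bm s}_3^\top\bm\tau>0$ exactly when $\ATE_3>0$. For any direction $\overline{\bm s}\neq\overline{\bm s}_3$, the gap is
\[
V_\infty(\overline{\bm s}_3)-V_\infty(\overline{\bm s})=\E\bigl[|g|\,\indic{\sgn(\overline{\bm s}^\top\bm\tau)\ne\sgn(\bm s_3^\top\bm\tau)}\bigr].
\]
The two hyperplanes are distinct, so the region where the signs disagree is a nonempty open double wedge meeting every neighborhood of the origin. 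On it, $|g|>0$ almost surely, and the density is positive on a ball around $0$, so the gap is strictly positive. Then $V_\infty$ is continuous on the compact set $\overline{\mathcal S}$ (by the same hyperplane-null argument), which gives the well-separation $\sup_{\|\overline{\bm s}-\overline{\bm s}_3\|\ge\delta}V_\infty<V_\infty(\overline{\bm s}_3)$.

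\textbf{Step 3 (conclusion).} First, the optimal investment diverges. For fixed $\invst$, the first factor is strictly between $0$ and $1$. On the event $\ATE_3>0$, replacing this factor by $\indic{\ATE_3>0}$ gives a strict improvement, and the same holds on the event $\ATE_3<0$. Hence
\[
\sup_{\overline{\bm s}}V(\invst\overline{\bm s})<V_\infty(\overline{\bm s}_3)
\]
for each fixed $\invst$. To make this uniform over $\invst\le M$, use continuity in $\invst$ and the bounds at $\invst\to0$ from \Cref{prop:optimal-solution-exists}. Meanwhile $V(b\,\overline{\bm s}_3)\to V_\infty(\overline{\bm s}_3)$, so any optimizer must have $\invst\to\infty$. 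Second, the normalized optimizers converge. Combining Step 1 with $\invst^\star\to\infty$ gives $V_\infty(\overline{\bm s}^\star)\ge V_\infty(\overline{\bm s}_3)-o(1)$, and well-separation then forces $\|\overline{\bm s}^\star-\overline{\bm s}_3\|\to0$ uniformly over $\mathcal S_1^\star(b)$.

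I expect the main obstacle to be the uniformity in Step 1 and in the divergence argument. It requires controlling the mass near the moving hyperplanes $\{\overline{\bm s}^\top\bm\tau=0\}$ uniformly, using only a local density assumption together with the second moment of $\ATE_3$.
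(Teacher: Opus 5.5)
Your proposal is correct and follows essentially the paper's route. The paper proves the result for a general weight $Z$ and then specializes to $Z=\Test_2$, which is exactly your $g$; its steps match yours: uniform convergence of $V(\invst\,\cdot)$ to the noise-free objective via anti-concentration near the hyperplanes $\{\overline{\bm s}^\top\bm\tau=0\}$ plus Cauchy--Schwarz on $\ATE_3$, uniqueness of the representative maximizer via sign disagreement on an open cone meeting the ball where the density is positive, a strict finite-investment oracle gap that forces the optimal investment to diverge, and an argmax-transfer step (your well-separation argument) for the normalized designs.

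The technical point you flag in Step~1 is not an obstacle. ``Has a density'' means the prior is absolutely continuous on all of $\R^D$, so every hyperplane is null, and your Dini/compactness argument directly yields $\sup_{\overline{\bm s}\in\overline{\mathcal S}}\P(|\overline{\bm s}^\top\bm\tau|\le\epsilon)\to0$; the paper reaches the same conclusion with an equivalent sequential-compactness contradiction.
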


See \Cref{\sectionof{thm:large-budget-limiting-optimal-policy-appendix}} for the
\hyperref[thm:large-budget-limiting-optimal-policy-appendix]{proof}.
We prove the result by showing a more general large-budget result for objectives of the form
\[
  V_Z(\bm s_1)\defeq \E[\Test_1 \cdot Z\cdot \ATE_3],
\]
where $Z\geq 0$ is a bounded weight that is non-adaptive (independent of the pilot outcome conditional on the true treatment effect), has positive expectation conditional on the true treatment effect $\E[Z\mid\bm\tau]>0$ almost surely.
The pilot impact objective is the special case $Z=\Test_2$, while the impact of adopting based on a single RCT is the special case $Z=1$.
Therefore, representative sampling is the unique large-budget limiting impact-optimal design whether or not there is a follow-up RCT before adoption.

The proof proceeds in two stages.
First, we show that the oracle performance achieved by continuing exactly when the target population effect $\mathrm{ATE}_3$ is positive can be matched by a noise-free ``infinite-budget'' pilot if and only if the pilot sample is representative $\bm{s}_1 \propto \bm{s}_3$, since only then does the pilot not make any type I or type II errors, as visualized in \Cref{fig:large-budget-visualization}.
Second, we show that as the pilot investment grows, its performance is uniformly approximated by the performance of the noise-free pilot.
Performance is bounded away from the oracle performance both for any fixed non-representative pilot sample composition and if the pilot investment is capped, so the optimal pilot converges to a representative sample with ever-increasing investment.
This result and the idea of the noise-free pilot match one intuitive reason why representative samples might be desirable in pilot studies: we should want exactly those treatments that have positive effects in the target population to receive a follow-up study.
We can only approximate this rule when the pilot budget is large.
This sets the stage for a setting where representative pilots are not optimal.

\vspace{0.2cm}
\begin{figure}[ht]
  \centering
  \includegraphics[width=0.8\textwidth]{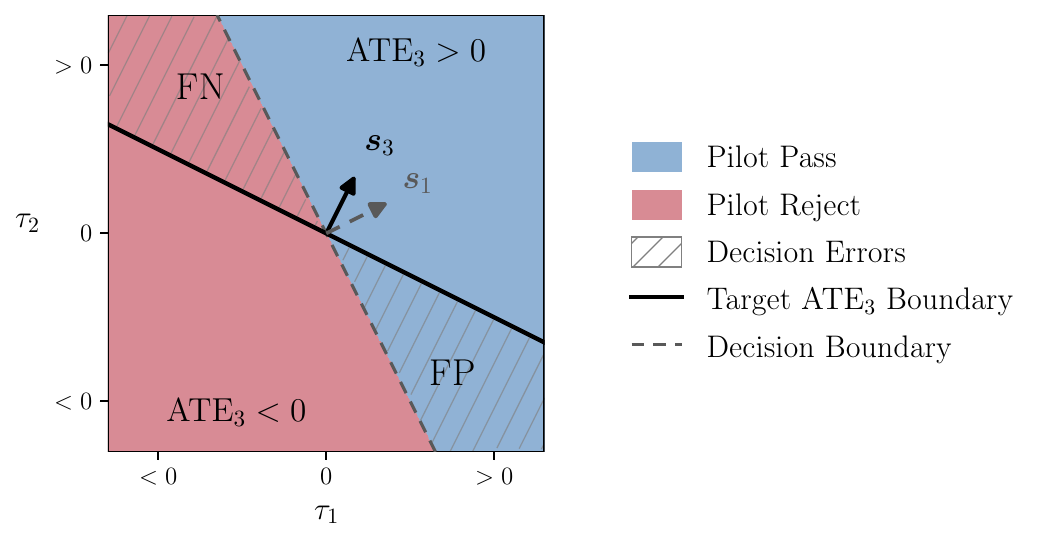}
  \caption{
    \textbf{Why a representative sample is optimal for large budgets.}
    When the sample is non-representative, i.e. $\bm{s}_1 \npropto \bm{s}_3$, the pilot decision boundary is misaligned with the boundary separating positive and negative average treatment effects in the target population.
    Even with an infinite sampling investment, there are conditional average treatment effects $\bm{\tau}$ for which the pilot incorrectly classifies the sign of the target population average treatment effect $\mathrm{ATE}_3$.
  }
  \label{fig:large-budget-visualization}
\end{figure}

%% file: paper/04_small_budget.tex
\section{Homogeneous Samples are Optimal for Small-Budget Pilots}
\label{sec:small-budget}

The key insight we derive in our model is that whether representative sampling is optimal depends critically on how resource-constrained the pilot study is.
While in the large-budget regime representative sampling is near-optimal, in this section we show that for sufficiently small pilot budgets, the optimal pilot design concentrates sampling on a single type.

We begin in \Cref{sec:small-budget-optimal-design} by deriving the optimal small-budget design when an RCT is used to determine whether to receive \emph{any} unknown downstream payoff under a general prior distribution with moment assumptions.
In \Cref{sec:small-budget-optimal-pilot} we specialize the result to give the optimal design for maximizing the pilot RCT impact in our model, where the payoff is the result of running a follow-up RCT and making an adoption decision based on its statistical significance.
We then discuss in \Cref{sec:small-budget-heterogeneity} how the optimal pilot design depends on the prior heterogeneity of the treatment effects under additional assumptions on the prior distribution.

\subsection{Small-budget optimal design for general downstream payoffs}
\label{sec:small-budget-optimal-design}

Consider an RCT sample $\bm{s}_1$ with total sample size $n_1=\bm{1}^\top\bm{s}_1$ and unknown vector of heterogeneous treatment effects $\bm{\tau}$.
As described in \Cref{sec:Model}, consider a statistical test of the form
\[
  \Test_1 \defeq \indic{\sqrt{n_1}\cdot \widehat{\ATE}_1\ge z_{1-\alpha_1}}.
\]
However, assume the test determines whether the decision maker receives a general downstream payoff $Y$, and that the objective is to maximize the expected realized payoff,
\[
  V_Y(\bm{s}_1)
  \defeq
  \E\left[
    \Test_1\cdot Y
  \right].
\]
We derive the optimal small-budget design for this objective, under the assumption that the payoff $Y$ is \emph{non-adaptive}, i.e. it does not depend directly on the RCT observations or, more formally, the payoff is independent of the pilot outcomes conditioned on the treatment effect vector $\bm{\tau}$.

We begin with a heuristic derivation.
The test statistic can be written as a linear function of the RCT sample divided by the square root of the total sample size, plus standard normal noise
\[
  \sqrt{n_1}\cdot\widehat{\ATE}_1
  =
  \sqrt{n_1}\cdot\frac{\bm{s}_1^\top\bm\tau}{n_1}+\varepsilon_1
  =
  \frac{\bm{\tau}^\top\bm{s}_1}{\sqrt{\bm{1}^\top\bm{s}_1}}+\varepsilon_1,
  \qquad
  \varepsilon_1\mid \bm\tau\sim\mathcal{N}(0,1).
\]
When samples are small, so is the expected value---the non-centrality---of this test statistic.
A Taylor expansion of the objective around zero non-centrality shows that in the small-budget regime, the objective is approximately equivalent to maximizing the expected product of the test statistic non-centrality and the downstream payoff.
\begin{alignat*}{2}
  V_Y(\bm{s}_1)
  &=
  \E\left[\indic{\frac{\bm{\tau}^\top\bm{s}_1}{\sqrt{\bm{1}^\top\bm{s}_1}}+\varepsilon_1 \ge z_{1-\alpha_1}} \cdot Y\right]
  &\qquad&
  (\text{definition of }\Test_1) \\
  &=
  \E\left[
    \Phi\!\left(
      \frac{\bm{\tau}^\top\bm{s}_1}{\sqrt{\bm{1}^\top\bm{s}_1}}
      - z_{1-\alpha_1}
    \right)
    \cdot Y
  \right]
  &\qquad&
  (\text{non-adaptivity, }Y\ind \varepsilon_1 \vert \bm{\tau}) \\
  &\approx
  \alpha_1\E[Y]
  +\varphi(z_{1-\alpha_1})\,
  \E\left[
    \frac{\bm{\tau}^\top\bm{s}_1}{\sqrt{\bm{1}^\top\bm{s}_1}}
    \cdot Y
  \right]
  &\qquad&
  (\text{expansion at zero non-centrality}) \\
  &=
  \alpha_1\E[Y]
  +\varphi(z_{1-\alpha_1})\,
  \frac{\E[Y \bm\tau]^\top\bm{s}_1}{\sqrt{\bm{1}^\top\bm{s}_1}}
  &\qquad&
  (\text{linearity of expectation})
\end{alignat*}

This means that for small budgets, maximizing the expected payoff is approximately equivalent to maximizing a linear signal divided by the square root of the sample size:
\begin{align*}
  \underset{\bm{s}_1}{\maximize}
  \quad
  \frac{\E[Y \bm\tau]^\top\bm{s}_1}{\sqrt{\bm{1}^\top\bm{s}_1}}
  \qquad
  \subjectto
  \quad
  \bm c^\top\bm{s}_1\le b,
  \quad
  \bm{s}_1\in\R_{\geq 0}^{D}\setminus\{\bm{0}\}.
\end{align*}
The ratio of a linear function to the square root of a linear function is quasiconvex when both linear functions are positive, and quasiconvex functions are maximized over polytopes at extreme points.
Thus, so long as there is at least one type with positive signal $\E[Y \tau_d]>0$, the approximate objective is maximized by investing the entire budget into sampling a single type: the one that maximizes the \emph{small-budget index}
\[
  \pi_d^Y
  \defeq
  \frac{\E[Y \tau_d]}{\sqrt{c_{d}}}.
\]

Formalizing the argument above requires care on two points: showing that the objective does not approach a maximizer as the sample approaches zero $\bm{s}_1\to 0$ where the objective is undefined, and establishing uniform control of the remainder of the Taylor expansion to show that the optimizer of the objective and its leading-order term are exactly equal for sufficiently small finite budgets.
We do so by writing the objective in terms of the investment $\invst=\bm{c}^\top \bm{s}_1$ and the unit-investment normalized sample composition $\overline{\bm{s}}_1$ as described in \Cref{sec:Model}, and taking a Taylor expansion in terms of $\sqrt{I}$.

\begin{proposition}[Small-investment expansion for non-adaptive payoffs]
  \label{prop:small-investment-expansion}
  Fix a prior distribution of $\bm\tau$, and let $Y$ be non-adaptive with $\E[|Y|]<\infty$ and $\E[|Y|\|\bm\tau\|_2^3]<\infty$.
  Then there exists an investment threshold $\overline{\invst}>0$ such that for all investments $0 < \invst \le \overline{\invst}$ and all unit-investment samples $\overline{\bm{s}}_1\in\overline{\mathcal{S}}$,
  \[
    V_Y(\invst\,\overline{\bm{s}}_1)
    =
    \alpha_1\, \E[Y]
    +\sqrt{\invst}\,\varphi(z_{1-\alpha_1})\,
    \frac{\E[Y\bm\tau]^{\!\top}\overline{\bm{s}}_1}{\sqrt{\bm{1}^{\top}\overline{\bm{s}}_1}}
    +R(\invst,\overline{\bm{s}}_1),
  \]
  where $R(\invst,\overline{\bm{s}}_1)$ is $O(\invst)$ and Lipschitz in $\invst$ (both uniformly in $\overline{\bm{s}}_1$) and $O(\invst)$-Lipschitz in $\overline{\bm{s}}_1$.
\end{proposition}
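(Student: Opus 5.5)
We condition on $\bm\tau$ and use non-adaptivity to reduce the objective to a one-dimensional smooth function of the non-centrality. Write $\bm{s}_1=\invst\,\overline{\bm{s}}_1$ and define the normalized non-centrality
\[
  m(\overline{\bm{s}}_1,\bm\tau)\defeq\frac{\bm\tau^\top\overline{\bm{s}}_1}{\sqrt{\bm 1^\top\overline{\bm{s}}_1}},
\]
so that the test statistic equals $\sqrt{\invst}\,m+\varepsilon_1$ with $\varepsilon_1\mid\bm\tau\sim\mathcal N(0,1)$. Conditioning on $(\bm\tau,Y)$ and using $Y\ind\varepsilon_1\mid\bm\tau$ gives
\[
  V_Y(\invst\,\overline{\bm{s}}_1)=\E\!\left[\Phi\!\left(\sqrt{\invst}\,m-z_{1-\alpha_1}\right)Y\right].
\]
We then set $g(x)\defeq\Phi(x-z_{1-\alpha_1})$ and expand $g(x)=\alpha_1+\varphi(z_{1-\alpha_1})x+r(x)$. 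The remainder is
\[
  r(x)=\int_0^x (x-u)\,g''(u)\,du .
\]
Since $g''(u)=-(u-z)\varphi(u-z)$ is bounded, we get $|r(x)|\le C x^2$. Since $g'$ is bounded, we also get $|r(x)|\le C|x|$, and hence $|r(x)|\le C\min(x^2,|x|)$. Defining
\[
  R(\invst,\overline{\bm{s}}_1)\defeq\E\!\left[r\!\left(\sqrt{\invst}\,m\right)Y\right],
\]
the leading two terms come out exactly as stated by linearity of expectation. This step needs $\E[|Y|\,|m|]<\infty$, which will follow from the bound on $m$ below.

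\textbf{Uniform bounds on $m$.} On $\overline{\mathcal{S}}$ we have $\bm c^\top\overline{\bm{s}}_1=1$, so each coordinate satisfies $\overline{s}_{1d}\le 1/c_{\min}$. Also $\bm 1^\top\overline{\bm{s}}_1\ge 1/c_{\max}$. By Cauchy--Schwarz, $|\bm\tau^\top\overline{\bm{s}}_1|\le\|\bm\tau\|_2\|\overline{\bm{s}}_1\|_2\le\|\bm\tau\|_2\|\overline{\bm{s}}_1\|_1$, and $\|\overline{\bm{s}}_1\|_1=\bm 1^\top\overline{\bm{s}}_1$. Therefore
\[
  |m|\le\|\bm\tau\|_2\sqrt{\bm 1^\top\overline{\bm{s}}_1}\le \|\bm\tau\|_2\sqrt{D/c_{\min}}\eqdef\kappa\|\bm\tau\|_2,
\]
uniformly over $\overline{\mathcal{S}}$.

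\textbf{$O(\invst)$ bound.} From $|r(x)|\le Cx^2$ we get $|R|\le C\invst\,\E[|Y|m^2]$. Interpolating,
\[
  |Y|\,\|\bm\tau\|^2\le|Y|\bigl(1+\|\bm\tau\|^3\bigr),
\]
so $\E[|Y|m^2]$ is finite by the two moment assumptions, and the bound is uniform in $\overline{\bm{s}}_1$.

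\textbf{Lipschitz in $\invst$.} Differentiating under the expectation gives
\[
  \partial_{\invst}R=\E\!\left[r'\!\left(\sqrt{\invst}\,m\right)\frac{m}{2\sqrt{\invst}}\,Y\right].
\]
Since $|r'(x)|=|g'(x)-g'(0)|\le C|x|$, we obtain $|\partial_{\invst}R|\le \tfrac{C}{2}\E[m^2|Y|]$, which is bounded uniformly in $\invst$ and $\overline{\bm{s}}_1$. Dominated convergence justifies the differentiation because the same dominating function works for all $\invst$. This also covers $\invst\to0$, so the threshold $\overline{\invst}$ can be taken to be any fixed value, e.g.\ $1$.

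\textbf{$O(\invst)$-Lipschitz in $\overline{\bm{s}}_1$ (main obstacle).} For two compositions $\overline{\bm{s}},\overline{\bm{s}}'$, the mean value theorem gives
\[
  |R(\invst,\overline{\bm{s}})-R(\invst,\overline{\bm{s}}')|
  \le\E\!\left[\,|r'(\xi)|\,\sqrt{\invst}\,|m-m'|\,|Y|\right],
\]
where $\xi$ lies between $\sqrt{\invst}\,m$ and $\sqrt{\invst}\,m'$. Hence $|r'(\xi)|\le C\sqrt{\invst}\,\kappa\|\bm\tau\|$, and the bound becomes $C\kappa\,\invst\,\E\!\left[\|\bm\tau\|\,|m-m'|\,|Y|\right]$. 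It remains to show $m$ is Lipschitz in $\overline{\bm{s}}$ with constant linear in $\|\bm\tau\|$. This holds because on $\overline{\mathcal{S}}$ the denominator $\sqrt{\bm 1^\top\overline{\bm{s}}}$ is bounded below by $1/\sqrt{c_{\max}}$. The numerator is linear in $\overline{\bm{s}}$ with gradient $\bm\tau$, and the inverse square root is smooth away from zero. So $|m-m'|\le L\|\bm\tau\|\,\|\overline{\bm{s}}-\overline{\bm{s}}'\|$.

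This leaves the term $\invst\,\E[\|\bm\tau\|^2|Y|]$, which is finite by the interpolation above. The delicate part is exactly this step: keeping every constant uniform over the simplex-like set $\overline{\mathcal{S}}$, which relies on the lower bound on $\bm 1^\top\overline{\bm{s}}_1$ coming from the cost normalization. The role of the third-moment assumption is only to guarantee the second moment against $|Y|$; an alternative route would expand $g$ to third order, using $|r(x)|\le C|x|^3$ together with $\E[|Y|\|\bm\tau\|^3]<\infty$.
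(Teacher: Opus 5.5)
Your proof is correct, and it takes a different and more elementary route than the paper.

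\emph{The paper's route.} The paper sets $\bm s_1=j^2\overline{\bm s}_1$ and extends $g(j,\overline{\bm s}_1)=\E[\Phi(j\,\xi(\overline{\bm s}_1,\bm\tau)-z_{1-\alpha_1})\,Y]$ to a closed $\epsilon$-neighborhood of $\overline{\mathcal S}$. It then differentiates under the expectation (twice in $j$, then in $\overline{\bm s}_1$) and Taylor-expands this averaged function in $j$ with integral remainder. The constants $C_0,C_1$ come out as suprema of continuous derivatives over the compact set $[0,\overline j/2]\times\overline{\overline{\mathcal S}^{\epsilon/2}}$, which is why its bounds are only asserted for $\invst\le\overline{\invst}$.

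\emph{Your route.} You expand the scalar map $x\mapsto\Phi(x-z_{1-\alpha_1})$ pointwise in $\bm\tau$ and integrate only at the end. The single estimate $|r'(x)|\le\sup_u|g''(u)|\,|x|$ then gives all three remainder properties by pointwise mean-value bounds. This buys several things:
\begin{itemize}
\item no interchange of differentiation and integration is really needed; even your $\invst$-Lipschitz step can be done pointwise before taking expectations;
\item the constants are explicit;
\item the bounds hold for every $\invst>0$;
\item the only integrability used is $\E[|Y|(1+\|\bm\tau\|_2^2)]<\infty$.
\end{itemize}

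\emph{On the third moment.} The paper needs $\E[|Y|\|\bm\tau\|_2^3]<\infty$ only because it bounds $\nabla_{\overline{\bm s}_1}\partial_j^2 q$ via $|h_j'(x)|\le C'(|x|+x^2)$. That bound ignores that $u\,g'''(u)$ is bounded. Your argument shows the stated moment hypothesis is stronger than necessary.

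\emph{Trade-off.} The paper's framework, a function of $(j,\overline{\bm s}_1)$ with controlled mixed derivatives and an integral remainder, extends mechanically to higher-order expansions. For this statement, though, yours is the leaner argument.

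Your closing aside about a third-order expansion would not directly control the stated remainder, since it splits off an extra $O(\invst)$ quadratic term. Nothing in your proof depends on it.
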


This result and the following small budget optimal policy are proved in \Cref{\sectionof{prop:small-investment-expansion-appendix}}.
The expansion of the expected payoff objective allows us to show that for all budgets~$b>0$ below a fixed budget threshold~$b<\overline{b}$ depending on the problem parameters, the optimal design is completely determined by the first non-constant term in the expansion,
\[
  \sqrt{\invst}\,\cdot\,
  \frac{\E[Y\bm\tau]^{\!\top}\overline{\bm{s}}_1}{\sqrt{\bm{1}^{\top}\overline{\bm{s}}_1}}.
\]
Similarly to our heuristic derivation above, we argue that the linear over square-root-of-linear term is maximized at an extreme point.
Here, however, it is a function on the set of unit-investment designs $\overline{\mathcal{S}}=\{\bm{x}\in\R^D_{\ge 0} : \bm{c}^{\top}\bm{x}=1\}$ which is a compact polytope and does not contain the zero sample for which our objective is undefined.
The term scales with the square root of the investment, and is therefore maximized by investing the entire budget.
Thus, the optimal design invests the entire budget in a single type, and the Lipschitz properties of the remainder allow us to extend from the term above to the original expected payoff objective under a sufficiently small budget.
This brings us to the general small-budget result.

\begin{theorem}[Optimal small-budget design for non-adaptive payoffs]
  \label{thm:optimal-small-budget-design-general}
  Fix a prior distribution of $\bm\tau$, and let $Y$ satisfy the assumptions of \Cref{prop:small-investment-expansion}.
  Suppose there exists a unique type $d^\star$ maximizing the \emph{small-budget index}
  \[
    \pi_d^Y
    \defeq
    \frac{\E[Y \tau_d]}{\sqrt{c_{d}}}
    \in
    \R.
  \]
  Suppose also that $\pi_{d^\star}^Y>0$.
  Then there exists a budget threshold $\overline{b}>0$ such that for all budgets $0 < b \le \overline{b}$, a unique optimal design exists for $V_Y$ and invests the entire budget in the type $d^\star$.
\end{theorem}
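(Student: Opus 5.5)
We work in the parametrization $\bm s_1=\invst\,\overline{\bm s}_1$ with $0<\invst\le b$ and $\overline{\bm s}_1\in\overline{\mathcal S}$, and apply \Cref{prop:small-investment-expansion}. Write $\bm g\defeq\E[Y\bm\tau]$ and
\[
  f(\overline{\bm s}_1)\defeq\frac{\bm g^\top\overline{\bm s}_1}{\sqrt{\bm 1^\top\overline{\bm s}_1}} .
\]
For $b\le\overline{\invst}$ the objective is
\[
  \alpha_1\E[Y]+\sqrt{\invst}\,\varphi(z_{1-\alpha_1})\,f(\overline{\bm s}_1)+R(\invst,\overline{\bm s}_1).
\]
The vertices of $\overline{\mathcal S}$ are $\bm e_d/c_d$, where $f(\bm e_d/c_d)=g_d/\sqrt{c_d}=\pi^Y_d$. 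So the candidate optimum is $\bm s_1=(b/c_{d^\star})\bm e_{d^\star}$. The plan has three steps: (i) a strict, quantitative version of ``$f$ is maximized at the vertex $d^\star$''; (ii) the optimal investment must be the full budget; (iii) absorb $R$ using its Lipschitz bounds.

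\textbf{Step (i): a local sharpness bound for $f$.} We show there is $\kappa>0$ with
\[
  f(\overline{\bm s}_1)\le\pi^Y_{d^\star}-\kappa\,\|\overline{\bm s}_1-\bm v^\star\|\qquad\text{for all }\overline{\bm s}_1\in\overline{\mathcal S},\quad \bm v^\star\defeq\bm e_{d^\star}/c_{d^\star}.
\]
Write $\overline{\bm s}_1=\sum_d w_d\bm e_d/c_d$ with $w$ in the simplex, so $f=\bigl(\sum_d w_d\pi_d\sqrt{c_d}/\sqrt{c_d}\,\cdots\bigr)$; more usefully, $\bm g^\top\overline{\bm s}_1=\sum_d w_d\,\pi_d/\sqrt{c_d}$ and $\bm 1^\top\overline{\bm s}_1=\sum_d w_d/c_d$.

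Set $a_d=1/\sqrt{c_d}$. Then
\[
  f=\frac{\sum_d w_d\pi_d a_d}{\sqrt{\sum_d w_d a_d^2}}\le\frac{\pi^\star\sum_d w_d a_d}{\sqrt{\sum_d w_d a_d^2}}\le\pi^\star
\]
by Cauchy–Schwarz, whenever the numerator is nonnegative; if it is negative, $f<0<\pi^\star$.

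For the linear loss, we first sharpen the first inequality. Since $\pi^\star-\pi_d\ge\gamma>0$ for $d\ne d^\star$, it loses $\gamma\sum_{d\ne d^\star}w_d a_d$. This is proportional to $1-w_{d^\star}$ up to constants, and hence to $\|\overline{\bm s}_1-\bm v^\star\|$. The second inequality, from Cauchy–Schwarz, only loses a nonnegative amount. We handle the regime $f<0$ separately using compactness. We expect this to be routine but fiddly.

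\textbf{Step (ii): full investment.} Step (i) gives $\max_{\overline{\bm s}} f=\pi^\star>0$. Compare any feasible $(\invst,\overline{\bm s}_1)$ with $(b,\bm v^\star)$:
\[
  V_Y(b\bm v^\star)-V_Y(\invst\overline{\bm s}_1)\ge\varphi\bigl[(\sqrt b-\sqrt{\invst})\pi^\star+\sqrt{\invst}\,\kappa\|\overline{\bm s}_1-\bm v^\star\|\bigr]-|R(b,\bm v^\star)-R(\invst,\overline{\bm s}_1)| .
\]
Split the remainder difference through $R(\invst,\bm v^\star)$:
\[
  |R(b,\bm v^\star)-R(\invst,\bm v^\star)|\le L(b-\invst),\qquad |R(\invst,\bm v^\star)-R(\invst,\overline{\bm s}_1)|\le C\invst\|\overline{\bm s}_1-\bm v^\star\| .
\]
Since $b-\invst=(\sqrt b-\sqrt{\invst})(\sqrt b+\sqrt{\invst})\le 2\sqrt b(\sqrt b-\sqrt{\invst})$, the first term is dominated by $\varphi\pi^\star(\sqrt b-\sqrt{\invst})$ once $2L\sqrt b<\varphi\pi^\star$. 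The second term is dominated by $\varphi\kappa\sqrt{\invst}\|\cdot\|$ once $C\sqrt{\invst}\le C\sqrt b<\varphi\kappa$.

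Hence for $b\le\overline b$, where $\overline b$ is the minimum of $\overline{\invst}$ and these two thresholds, the difference is strictly positive unless $\invst=b$ and $\overline{\bm s}_1=\bm v^\star$. This gives both optimality and uniqueness of the design.

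\textbf{Main obstacle.} The main obstacle is Step (i): proving the \emph{linear} rate $\kappa$ uniformly over the whole polytope. Cauchy–Schwarz is tight at the vertex, so the strictness must come from the gap $\gamma$ in the index and not from concavity. The key observation is that a quadratic rate would be insufficient against the $O(\invst)$-Lipschitz remainder.
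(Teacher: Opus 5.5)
Your proposal is correct and follows the paper's proof essentially step for step. Your simplex weights $w_d$ and $a_d=1/\sqrt{c_d}$ are the paper's $\lambda_d$ and $\sqrt{w_d}$, and the index-gap plus Cauchy--Schwarz argument gives the same linear sharpness bound; no separate $f<0$ case is needed, since $\pi^Y_{d^\star}>0$ makes that bound hold without any sign condition. The two Lipschitz bounds on $R$ then absorb the remainder, and the only cosmetic difference is organizational: you compare an arbitrary feasible $(\invst,\overline{\bm s}_1)$ directly with $(b,\bm v^\star)$ by splitting the remainder through $R(\invst,\bm v^\star)$, whereas the paper first fixes $\invst$ to pin down the composition and then separately shows $\invst\mapsto V_Y(\invst\,\overline{\bm v}^\star)$ is strictly increasing.
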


\subsection{Small-budget optimal pilot}
\label{sec:small-budget-optimal-pilot}

For the pilot impact objective, the downstream payoff is $Y=\Test_2\cdot\ATE_3$, so the small-budget index becomes
\[
  \idx_d
  \defeq
  \frac{\E[\tau_d \cdot \Test_2 \cdot \ATE_3]}{\sqrt{c_{d}}}
  \in
  \R.
\]

\begin{corollary}[Optimal small-budget pilot]
  \label{cor:optimal-small-budget-pilot}
  Suppose the prior distribution of $\bm\tau$ has finite fourth moments, and that there exists a unique type $d^\star$ maximizing the small-budget index $\idx_d$.
  Then there exists a budget threshold $\overline{b}>0$ such that for all pilot budgets $b<\overline{b}$, a unique optimal pilot exists and invests the entire budget in the type $d^\star$.
\end{corollary}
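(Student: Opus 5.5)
The plan is to obtain the corollary as a direct specialization of \Cref{thm:optimal-small-budget-design-general} with downstream payoff $Y \defeq \Test_2\cdot\ATE_3$. For this choice $\pi_d^Y=\idx_d$, so the uniqueness of $d^\star$ carries over verbatim. Three things then need checking: (i) $Y$ is non-adaptive; (ii) the moment conditions of \Cref{prop:small-investment-expansion} hold under finite fourth moments; and (iii) the positivity condition $\idx_{d^\star}>0$, which the corollary does not assume and so must be derived.

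\textbf{Non-adaptivity.} $\Test_2$ is a deterministic function of $\widehat{\ATE}_2$. By the model, $\widehat{\ATE}_2$ and $\widehat{\ATE}_1$ are conditionally independent given $\bm\tau$. Also, $\ATE_3=\bm s_3^\top\bm\tau/n_3$ is $\bm\tau$-measurable. Hence $Y$ is independent of the pilot noise conditional on $\bm\tau$.

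\textbf{Moments.} Since $\bm s_3$ is fixed, Cauchy--Schwarz gives
\[
  |Y|\le|\ATE_3|\le \frac{\|\bm s_3\|_2}{n_3}\,\|\bm\tau\|_2 .
\]
Therefore $\E[|Y|]\le C\,\E\|\bm\tau\|_2<\infty$ and $\E[|Y|\,\|\bm\tau\|_2^3]\le C\,\E\|\bm\tau\|_2^4<\infty$, both finite by the fourth-moment assumption.

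\textbf{Positivity.} This is the only step that is not purely mechanical. The idea is to take an $\bm s_3$-weighted combination of the numerators of the indices and use linearity:
\[
  \sum_{d} s_{3d}\,\E[\tau_d\,\Test_2\,\ATE_3]
  = n_3\,\E[\Test_2\,\ATE_3^2]
  = n_3\,\E\!\left[\Phi\!\left(\sqrt{n_2}\,\ATE_2 - z_{1-\alpha_2}\right)\ATE_3^2\right].
\]
The second equality conditions on $\bm\tau$, using the normal sampling model for $\widehat{\ATE}_2$. The conditional pass probability $\Phi(\cdot)$ is strictly positive everywhere. So the right-hand side is $\ge 0$, with equality only if $\ATE_3=0$ almost surely.

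If $\ATE_3=0$ almost surely, then $Y=0$ and every $\idx_d=0$. Because $D\ge 2$, this contradicts the assumed uniqueness of the maximizer. Hence the weighted sum is strictly positive. Since the weights $s_{3d}\ge 0$ are not all zero, some type $d$ has $\E[\tau_d Y]>0$. As each $c_d>0$, this gives $\idx_{d^\star}=\max_d\idx_d>0$.

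\textbf{Conclusion.} With (i)--(iii) in hand, \Cref{thm:optimal-small-budget-design-general} yields a threshold $\overline b>0$ such that for every $0<b\le\overline b$ (in particular every $b<\overline b$), the optimal pilot is unique and invests the whole budget in type $d^\star$. I expect positivity to be the main point needing care; the remaining steps are bookkeeping.
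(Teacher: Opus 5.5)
Your proposal is correct and matches the paper's proof essentially step for step. The paper also applies \Cref{thm:optimal-small-budget-design-general} with $Y=\Test_2\cdot\ATE_3$, and verifies non-adaptivity and the moment conditions via the same bound $|Y|\le \|\bm s_3\|_2\|\bm\tau\|_2/n_3$ (\Cref{prop:fourth-moment-small-investment-two-impact-appendix}). It obtains positivity from the identical identity $\bm s_3^\top\bm\theta = n_3\,\E[p_2(\ATE_2)\,\ATE_3^2]>0$, with uniqueness of the maximizer ruling out $\ATE_3=0$ almost surely (\Cref{prop:positive-impact-objectives-appendix}). Your explicit use of $D\ge 2$ makes the paper's ``the index is not constant'' step slightly more precise.
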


This follows from \Cref{thm:optimal-small-budget-design-general}.
The probability model and finite fourth moments imply the non-adaptivity and moment assumptions for $Y=\Test_2\cdot\ATE_3$, and the positivity condition is automatic for the pilot impact objective; see \Cref{\sectionof{prop:positive-impact-objectives-appendix}}.

Not only are we able to show that representative sampling is never optimal for pilots that screen treatments for follow-up study under a sufficiently small budget, in fact the other extreme is optimal here: sampling only from a subpopulation containing a single type.
This is counter to the many calls for the use of more representative samples in the social and health sciences~\cite{FDAHH20, DepartmentofEducationBo22,BryanTi21}, and reflects that modeling decision outcomes under a decision rule requiring statistical significance yields qualitatively different optimal designs than those usually found in work focused on externally valid inference~\cite{Tipton13,EgamiLe23,Bouyamourn25} and minimax-regret objectives~\cite{HuZh24,OleaPr25}.

Further, we are able to characterize exactly which type to sample.
Intuitively, large $\tau_d$ makes passing the pilot more likely, and we want this to happen when the target effect $\ATE_3$ is large under those realizations where the follow-up statistical test $\Test_2$ passes.
Thus, we trade off the expected value of $\tau_d \cdot \Test_2 \cdot \ATE_3$ against the square root of the sampling cost $c_d$, since the impact improves with the square root of the investment.
We now proceed to give a more interpretable characterization of the optimal type to sample under additional assumptions on the prior distribution of the treatment effects.

\subsection{The influence of treatment effect heterogeneity on
optimal pilot design}
\label{sec:small-budget-heterogeneity}

In addition to characterizing the optimal sampling type for the pilot, \Cref{cor:optimal-small-budget-pilot} also allows us to interpret how the optimal type to sample in the small-budget regime should depend on our treatment effect prior.
In the remainder of this section we analyze the form of $\pi_d$ by restricting the prior to the class of elliptical distributions~\cite{FangKo90}
\[
  \bm{\tau}\sim\mathrm{EC}_D(\bm{\mu},\Sigma,\phi),
\]
which includes and can be viewed as generalizing the multivariate normal distribution.
We begin by informally describing the dependence under typical experimentation conditions, before giving in \Cref{prop:marginal-impact-interpretation} a decomposition of the small-budget index from which this interpretation is derived.

To begin, assume the follow-up RCT uses a representative sample, captured in our model by assuming $\bm{s}_2\propto \bm{s}_3$.
Further, assume the prior expected target effect is not harmful, $\E[\ATE_3]\ge 0$, and that the target-population effect is nondegenerate, $\var(\ATE_3)>0$.

Heterogeneity in the conditional average treatment effects $\tau_d$ across types can come into the prior in two ways: \emph{expected} variation in the prior mean $\E[\tau_d]$ across types, and \emph{unknown} variation through the covariance matrix $\cov(\bm\tau)$.
Under the scenario described above, the small-budget index is proportional to a weighted sum of a type's prior mean and how predictive its treatment effect is of the target population effect, scaled by the square root of the sampling cost: for some positive weight $\overline W_1$,
\vspace{0.3cm}
\begin{equation}
  \idx_d \propto \frac{\overbrace{\overline W_1 \cdot \E[\tau_d]}^{\text{Expected
    heterogeneity}} \ + \
    \overbrace{\cov(\tau_d, \ATE_3)}^{\text{Unknown heterogeneity}}
  }{\underbrace{\sqrt{c_d}}_{\text{Cost heterogeneity}}}.
  \label{eq:small-budget-index-interpretation-special-case}
\end{equation}
\vspace{0.3cm}

To see how the prior covariance $\cov(\bm\tau)$ affects the expression above, consider the further specialization to a unit-variance exchangeable prior covariance with common correlation $\rho\in[0,1]$,
\[
  \cov(\bm\tau) =
  \begin{pmatrix}
    1 & \rho & \cdots & \rho \\
    \rho & 1 & \cdots & \rho \\
    \vdots & \vdots & \ddots & \vdots \\
    \rho & \rho & \cdots & 1
  \end{pmatrix}.
\]
In this case each type effect $\tau_d$ is predictive of the target population effect $\ATE_3$ by an amount depending on its share of the target population $s_{3d}/n_3$, attenuated by the common correlation $\rho$.
\begin{equation}
  \cov(\tau_d, \ATE_3)
  =
  \sum_{d'=1}^D \frac{s_{3d'}}{n_3}\cov(\tau_d, \tau_{d'})
  =
  \rho + (1-\rho)\frac{s_{3d}}{n_3}
  \label{eq:small-budget-index-interpretation-special-case-cov}
\end{equation}
These assumptions allow us to interpret the small-budget index under different kinds of elliptical priors:
\begin{enumerate}
  \item {
      When we do not expect much heterogeneity in the treatment effect across types, the prior means $\mu_d$ are roughly equal across types and the prior correlation $\rho$ is close to 1.
      In this case the numerator in \eqref{eq:small-budget-index-interpretation-special-case} is constant and the small-budget optimal policy depends only on cost: it reduces to sampling the cheapest type, i.e. convenience sampling.
    }
  \item {
      If we have prior knowledge that there is potential heterogeneity in the treatment effect but do not anticipate that it favors any specific type, the prior correlation $\rho$ is small and the prior means $\E[\tau_d]$ are still equal.
      In this case the optimal small-budget pilot skews towards types that make up a larger share of the target population.
      See \Cref{fig:exchangeable-2types-by-rho-zero-mean} for an example.
    }
  \item {
      Finally, if we have prior knowledge that the expected effect is larger in specific types than others, so $\E[\tau_d]$ is not equal across types, and we believe that the expected effect is not harmful $\E[\ATE_3]\ge 0$, the optimal small-budget pilot skews towards types with larger prior means.
      See \Cref{fig:exchangeable-2types-by-mu} for an example.
    }
\end{enumerate}
\begin{figure}[ht]
  \centering
  \begin{subfigure}[t]{0.48\textwidth}
    \centering
    \includegraphics[width=\linewidth]{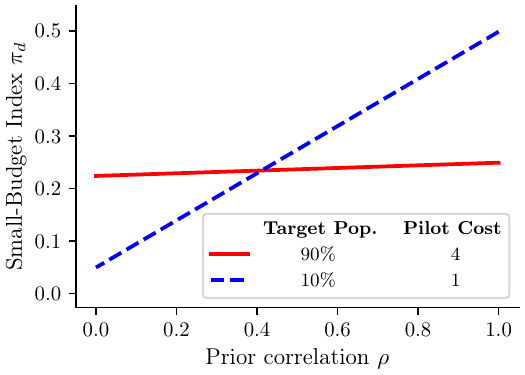}
    \caption{
      Decreasing $\rho$ (increasing heterogeneity) favors the type that has a larger target population share and is therefore more predictive.
      Here under $\bm\mu=\bm{0}$.
    }
    \label{fig:exchangeable-2types-by-rho-zero-mean}
  \end{subfigure}
  \hfill
  \begin{subfigure}[t]{0.48\textwidth}
    \centering
    \includegraphics[width=\linewidth]{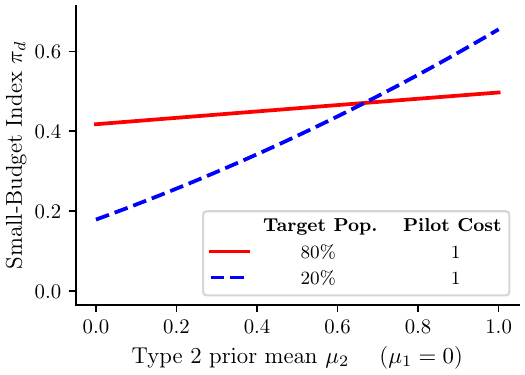}
    \caption{
      Increasing $\mu_2$ (the prior mean of type 2) while keeping $\mu_1 = 0$ shifts the index toward type 2, even though type 1 is more common.
      Here under $\rho=0.2$.
    }
    \label{fig:exchangeable-2types-by-mu}
  \end{subfigure}
  \caption{
    The small-budget index for two types under a large ($n_2 = 100$), representative follow-up and a multivariate normal prior with unit prior standard deviations.
    For any sufficiently small budget, the optimal pilot invests the entire budget in sampling the type with the larger index.
  }
  \label{fig:exchangeable-2types}
\end{figure}

\Needspace{12\baselineskip}
The preceding interpretations are formalized in the following general decomposition of the small-budget index, proved in \Cref{\sectionof{prop:marginal-impact-interpretation-appendix}}.

\begin{proposition}[The small-budget index under an elliptical prior]
  \label{prop:marginal-impact-interpretation}
  Suppose the treatment effect prior is an elliptical distribution,
  \[
    \bm\tau\sim\mathrm{EC}_D(\bm\mu,\Sigma,\phi),
  \]
  with finite second moments.
  Then there exist scalars
  \(W_1,W_2,W_3\in\R\), not depending on \(d\), such that for every
  type \(d\),
  \[
    \idx_d
    =
    \frac{
      W_1\,\E[\tau_d]
      +
      W_2\,\cov(\tau_d,\ATE_2)
      +
      W_3\,\cov(\tau_d,\ATE_3)
    }{\sqrt{c_d}}.
  \]

  If additionally \(\bm s_2\propto \bm s_3\), \(\E[\ATE_3]\ge0\), and \(\var(\ATE_3)>0\), then the decomposition reduces to
  \[
    \idx_d
    \propto
    \frac{
      \overline W_1\,\E[\tau_d]
      +
      \cov(\tau_d,\ATE_3)
    }{\sqrt{c_d}},\qquad \overline W_1 > 0
  \]
  with a positive constant of proportionality, where the weight on the prior mean is strictly positive.
\end{proposition}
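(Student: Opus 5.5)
The plan is to compute $\E[\tau_d\, Y]$ with $Y=\Test_2\cdot\ATE_3$ by conditioning on the low-dimensional vector $\bm U\defeq(\ATE_2,\ATE_3)$ and then applying the linear conditional-mean property of elliptical distributions. First, integrate out the follow-up noise. By non-adaptivity and independence of $\varepsilon_2$ from $\bm\tau$, we have $\E[\Test_2\mid\bm\tau]=\Phi(\sqrt{n_2}\,\ATE_2-z_{1-\alpha_2})\eqdef g(\ATE_2)$. Therefore
\[
  \E[\tau_d Y]=\E\big[\tau_d\, g(\ATE_2)\,\ATE_3\big]=\E\big[\E[\tau_d\mid \bm U]\; g(\ATE_2)\ATE_3\big].
\]
Since $(\tau_d,\ATE_2,\ATE_3)$ is a linear image of $\bm\tau$, it is elliptical with the same generator $\phi$. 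With finite second moments, the conditional mean is affine:
\[
  \E[\tau_d\mid\bm U]=\mu_d+\cov(\tau_d,\bm U)\,\cov(\bm U)^{+}(\bm U-\E\bm U).
\]
If $\cov(\bm U)$ is singular, I will use the pseudoinverse; the affine form still holds on the support. Substituting gives
\[
  \E[\tau_d Y]=\mu_d\,\E[g(\ATE_2)\ATE_3]+\cov(\tau_d,\bm U)\,\bm v,\qquad \bm v\defeq\cov(\bm U)^{+}\E[(\bm U-\E\bm U)\,g(\ATE_2)\ATE_3].
\]
Here $\bm v$ does not depend on $d$. Setting $W_1=\E[g(\ATE_2)\ATE_3]$ and $(W_2,W_3)=\bm v$ yields the first claim after dividing by $\sqrt{c_d}$. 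Integrability holds because $|g|\le 1$ and second moments are finite.

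For the special case $\bm s_2\propto\bm s_3$, we have $\ATE_2=\ATE_3\eqdef A$ and $\bm U$ collapses to the scalar $A$ with $\var(A)>0$. Then
\[
  \E[\tau_d\mid A]=\mu_d+\frac{\cov(\tau_d,A)}{\var(A)}(A-\E A),
\]
and
\[
  \idx_d\sqrt{c_d}=\mu_d\,\E[g(A)A]+\cov(\tau_d,A)\,\frac{\cov(A,g(A)A)}{\var(A)}.
\]
Two positivity claims remain. The first is $\E[g(A)A]>0$ when $\E A\ge 0$. The second is $\cov(A,g(A)A)>0$. Given both, dividing by the second coefficient gives $\overline W_1=\E[g(A)A]\var(A)/\cov(A,g(A)A)>0$, with a positive constant of proportionality.

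The main obstacle is this pair of sign statements, which require properties of the one-dimensional elliptical law of $A$. Such a law is symmetric about $m=\E A\ge 0$ and is non-degenerate. For the second claim, $h(a)=a\,g(a)$ is not monotone, since it is slightly negative for $a<0$. I would therefore write
\[
  \cov(A,h(A))=\E[(A-m)(h(A)-h(2m-A))]/2
\]
using the symmetry of $A$ about $m$. It then suffices to show $h(m+x)\ge h(m-x)$ for $x>0$, with strict inequality on a set of positive probability. When $m-x\ge 0$ this follows because $h$ is increasing on $[0,\infty)$. When $m-x<0$, we have $h(m-x)<0<h(m+x)$. For the first claim, I would use the same symmetrization around $0$ rather than $m$. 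Since $g(a)>g(-a)$ for $a>0$ and the density of $A$ at $a$ is at least its density at $-a$ when $m\ge 0$ (by the unimodal or symmetric structure), we get
\[
  \E[A g(A)]=\E\big[|A|\,(g(|A|)\,\mathbf{1}\{A>0\}-g(|A|)\,\mathbf{1}\{A<0\})\big]>0.
\]
If unimodality of the generator is not available, I would instead argue directly: pair each $a$ with $2m-a$ when $m>0$, and use pure symmetry around $0$ when $m=0$.
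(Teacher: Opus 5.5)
Your first part follows the paper's argument. You condition on $L=(\ATE_2,\ATE_3)$, use the affine conditional mean with a Moore--Penrose inverse, and read off $W_1=\E[H]$ and $(W_2,W_3)^\top=\var(L)^\dagger\cov(L,H)$ with $H=p_2(\ATE_2)\ATE_3$. Your scalar reduction when $\bm s_2\propto\bm s_3$ is also the paper's, and your $\overline W_1$ equals the paper's $A_0/B_0$. You assert without proof that the pseudoinverse formula holds almost surely when $\cov(\bm U)$ is singular. The paper proves exactly this in its linear-summary lemma, using the stochastic representation $\Gamma^\top r\bm u^{(q)}$ and the reflection $2P-I$. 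So the claim can be cited, but it is not automatic.

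You genuinely differ in the sign analysis. The paper uses no symmetry. Strict monotonicity of $p_2$ and $\var(X)>0$ give $\cov(X,p_2(X))>0$ via independent copies, and then
\[
\E[p_2(X)X]=m\,\E[p_2(X)]+\cov(X,p_2(X)),\qquad
\cov(X,p_2(X)X)=m\,\cov(X,p_2(X))+\E\bigl[p_2(X)(X-m)^2\bigr].
\]
Both are strictly positive for $m\ge0$, and this holds for any non-degenerate $X$ with $\E X\ge0$.

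Your reflection argument for $\cov(A,h(A))>0$ is correct; it relies on a univariate elliptical law being symmetric about $m$.

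Your primary argument for $\E[g(A)A]>0$ fails, however. Symmetry alone does not give ``density at $a$ at least density at $-a$''; that needs unimodality, and elliptical laws need not be unimodal or even have densities. A rank-one prior with a uniform random sign gives a two-point $A$. A uniform-on-a-circle prior in $D=2$ gives a U-shaped arcsine marginal, where for $m>0$ the density at $-a$ \emph{exceeds} the density at $a$. Your display also has a typo: the second term should contain $g(-|A|)$, not $g(|A|)$.

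The pairing fallback does work and should be your main argument. With $x=A-m$, symmetry gives $\E[h(A)]=\tfrac12\E[h(m+x)+h(m-x)]$. For $m\ge0$ and $x\neq0$ this sum is positive. If $|x|\le m$, both terms are nonnegative and $h(m+|x|)>0$. If $|x|>m$, then $m+|x|>|x|-m>0$ and $g(m+|x|)>g(m-|x|)$. Finally, $\P(A\neq m)>0$ by non-degeneracy.

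With that repair your proof is complete. The paper's two identities are shorter and do not need elliptical symmetry at this step.
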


This decomposition shows that the small-budget index $\idx_d$ is the cost-adjusted, weighted sum of three interpretable components: the prior mean $\E[\tau_d]$ of the type's treatment effect, how predictive it is of the follow-up average effect $\ATE_2$, and how predictive it is of the target population effect $\ATE_3$.
Furthermore, when the follow-up is representative and the treatment is not expected to harm the target population, the index prioritizes types with larger prior means.

%% file: paper/05_simulation_studies.tex
\section{Case Study}
\label{sec:case-study}

We now illustrate the optimal pilot study design in a realistic setting, using a simulated case study based on education RCT data.
While the theory from \Cref{sec:large-budget} and \Cref{sec:small-budget} holds only in the large budget regime $(b\to\infty)$ and the small budget regime $(b<\overline{b})$, in this section we show that both regimes can correspond to realistic sample sizes when we calibrate our model to real data.

For this case study, we focus on the trade-off between sampling from the most promising type and sampling a representative sample, and assume constant sampling costs across school types.
Over-sampling from the most promising type is done in practice for a variety of reasons, including making the potential treatment effect easier to detect and reducing the potential for causing harm~\cite{FDACD19}.
Our model captures the increased power of a targeted sample, and allows us to evaluate the trade-off between statistical power and generalizability.

We model a hypothetical pilot study preceding a real large-scale education RCT, the National Study of Learning Mindsets (NSLM) \cite{YeagerHa19}.
This experiment focused on a large representative sample of 6,320 lower achieving 9th grade students.
It scaled up the evaluation of an intervention designed to teach students a growth mindset, which had already been evaluated in prior RCTs.
Because there were theoretical reasons to believe that the treatment effect of the intervention would be moderated by at least two school-level features, we can model the decision of a pilot designer who could have chosen to recruit a representative sample or to focus on schools where the treatment had the best chance of success.

We begin by describing the details of our model calibration.
The follow-up is fixed to be representative with a sampling standard deviation corresponding to a sample size of 6,320 students.
This means we calibrate the \emph{effective} follow-up size $n_2$ used in the model by setting $\sqrt{1 / n_2} = 0.03$ to match the reported standard error of the NSLM study~\cite{YeagerHa19}.
We can then use the ratio between the number of students in a hypothetical pilot and the number of students in the NSLM study to calibrate the effective pilot size $n_1$.
We use one-sided significance tests at level $\alpha = 0.025$.
Since we assume costs are constant across types, the pilot budget is equivalent to the number of students sampled (the $x$-axis of \Cref{fig:simulation_2_policy_comparison}).

\begin{figure}[ht]
  \centering
  \includegraphics[width=0.78\textwidth]{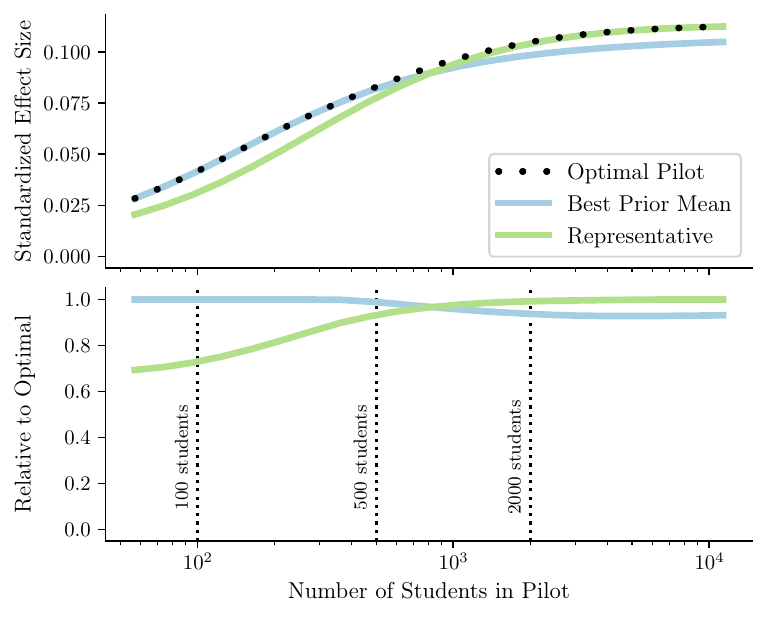}
  \caption{
    \textbf{Pilot RCT Impact.}
    Comparison of the small-budget and limiting large-budget optimal pilot designs for a range of sample sizes in a semi-synthetic scenario calibrated to the NSLM study.
    In this instance where costs across types are assumed constant, the budget is equivalent to a constraint on the number of students, and the small-budget optimal design is equivalent to sampling only from the schools with the highest prior mean.
    The limiting large-budget optimal design is representative sampling.
    The top panel shows the expected change in standardized effect size achieved by the small-budget and large-budget optimal designs as well as an optimal design, while the bottom panel shows each design's improvement over a baseline whose continuation decision is purely random with the same false-positive rate, as a fraction of the optimal design's improvement
  }
  \label{fig:simulation_2_policy_comparison}
\end{figure}

We model 6 types of schools based on the school-level moderators reported in \textcite{YeagerHa19}.
They describe that theory supported two school-level moderators of the treatment effect: the school's achievement level, and the supportiveness of peer norms at the school.
They divide the schools into three achievement levels: low ($<25$th), medium ($25$-$75$th), and high ($>75$th percentile), with medium achievement schools expected to benefit most from the intervention.
Similarly, they divide schools into two supportiveness levels: low ($<50$th) and high ($>50$th percentile), with high supportiveness schools expected to benefit more from the intervention.
We define our 6 school types as corresponding to each combination of achievement and supportiveness levels, and assume they occur with frequency proportional to the product of their group frequencies (since \textcite{YeagerHa19} did not report their joint frequency distribution).

We base our prior on the distribution of effects across education RCTs.
In meta-analyses of education studies it is common to report standardized effect sizes, scaling the treatment effect to the standard deviation of the outcome~\cite{Kraft20}.
\textcite{YeagerHa19} discuss that a standardized effect size of 0.20 would be on the large end of possible effects, so we set the prior mean for the most promising type---medium achievement, high peer supportiveness---to 0.10 as a reasonable expectation of a good treatment effect before the pilot stage.
We use a multivariate normal prior, and set the prior mean for types that have one lower-prior moderator to 0.05, and types that have both lower-prior moderators to 0.025.
We set the prior standard deviation for each type to 0.28, matching the empirical value reported in a review of 747 preK--12 education RCTs \cite{Kraft20}, and we fix a common correlation value of $0.5$ between all types to represent a moderate level of heterogeneity.

\Cref{fig:simulation_2_policy_comparison} visualizes the pilot impact in terms of standardized effect size after calibrating our model as described above.
For pilot samples below 323 students, sampling from only the most promising schools is optimal (in this instance, the small-budget optimal design is equivalent to sampling only the school-type with the highest prior mean).
For pilot samples above 823 students representative sampling outperforms the small-budget optimal design, and above about 2000 students, representative sampling is close to optimal.
At least some of the RCTs that evaluated the growth mindset intervention prior to the NSLM study had enrolled ``thousands of adolescents''~\cite{YeagerHa19}, so both ends of this range of pilot sample sizes are plausible.

%% file: paper/06_related_work.tex
\section{Related Work}
\label{sec:related-work}

In this section we review three areas of related work: (i)~treatment effect heterogeneity and external validity, (ii)~decision-focused experiment design, and (iii)~sequential experimentation.

\subsection{External validity and generalizability}

How well experimental findings generalize to broader populations has historically taken a back seat to concerns about internal statistical validity in a number of fields, including the social sciences~\cite{EgamiHa23} and medicine~\cite{AverittWe20}.
Growing evidence suggests this lack of attention to \emph{external validity} is consequential.
Treatment effects often vary substantially across subpopulations, settings, and time periods~\cite{BryanTi21,Tipton21,Meager19,WilliamsCi22}.
The replication crisis in psychology, economics, and biomedical research has revealed that interventions showing promise in initial studies frequently fail to produce similar effects in new contexts~\cite{OpenScienceCollaboration15,CamererDr16,BegleyEl12,CommitteeonReproducibilityandReplicabilityinScienceBo19}.

In resource-constrained settings, many experiments rely on convenience samples.
Driven by limited budgets, tight timelines, and difficulty in recruiting participants from hard-to-reach populations, experimenters select sites based on geographic proximity to their institution, existing relationships, and other factors affecting recruitment~\cite{TiptonFe16,TiptonSp21}.
Enrichment strategies are another common reason for using non-representative samples in clinical trials.
These strategies select a sample from a subpopulation that contains subjects with either an especially positive predicted response to the treatment or an especially poor prognosis, making it easier to detect statistically significant effects~\cite{FDACD19,Thall21,SteingrimssonBe21}.
A third reason for non-representative samples is that certain subpopulations may be more vulnerable or protected, and therefore not eligible for the study.
This is particularly common in biomedical research, where safety is a primary concern~\cite{FDAHH20}.

In response to concerns about external validity, researchers and regulators have advocated for the broader use where possible of representative sampling---constructing experimental samples that mirror the composition of the target population~\cite{Tipton13,FDAHH20}.
This design principle is realized in many large RCTs by attempting to draw a uniform random sample from the target population---for example, in \Cref{sec:case-study} we model the pilot preceding a large education RCT that attempted to draw a random sample of US public schools \cite{YeagerHa19}.
When such random sampling is not feasible, or when small sample sizes might randomly lead to non-representative samples, various methods of nonprobability sampling have been proposed with the goal of making generalizable inferences~\cite{Tipton13,EgamiLe23,Bouyamourn25}.
In contrast to our work, these methods are all focused on a single round of experimentation, and use objectives focused on the external validity of inferences rather than the quality of decision-making.

Another approach to achieving external validity is to attempt to reweight or extrapolate treatment effects from a study sample to a target population~\cite{TiptonOl18,ZhangHu24,LingMo23}.
In this work we model the continuation criterion using the standard approach to significance testing, where inference is conducted for the treatment effect in the study sample, and we leave to future work the design of optimal pilots when scale-up decisions can utilize out-of-sample prediction or extrapolation.

\subsection{Decision-focused experiment design}

While the notion of external validity is concerned with generalization, the question of \emph{what} to generalize is largely independent.
Much of the statistical analysis and design literature is focused on estimating quantities of interest~\cite{Fisher49,Pukelsheim06} or maximizing some measure of information gain~\cite{ChalonerVe95,RainforthFo23}.
In contrast, our work is part of a large body of work, dating back at least to~\textcite{Blackwell51} but especially rich in recent years, that focuses on the value of information in shaping the outcomes of \emph{decisions} instead of the accuracy of estimates, understanding that estimation and decision-making can require significantly different information~\cite{Fernandez-LoriaPr22,Eckles22}.

Much of the decision-focused experimental design literature is focused on Bayesian decision theory, where the design of the experiment and the final decision are jointly optimized over a prior belief distribution~\cite{RyanDr16,NgIm26}.
This approach has been used to derive structural insights into optimal experimentation policies~\cite{RadnerSt84,DeLaraGi07,AzevedoDe20}, and to make policy recommendations for allocating resources to experiments~\cite{FenwickSt20}.
It has also been used to develop specific experiment designs, as in~\textcite{Heath22,FlightJu22,ChenWi13,FrazierPo09,GechterHi24}.
\textcite{GechterHi24} is a work in this vein that, like ours, optimizes the sample composition of an experiment under a prior distribution on heterogeneous treatment effect.
It differs from our work by assuming that adoption decisions are made in a Bayes-optimal manner following the experiment based on the posterior distribution.
Unfortunately, this type of Bayesian decision-making is not feasible in the common scenarios where the experiment designer and decision maker have different information~\cite{BatesJo22,BatesJo23,Min23}, or when frequentist statistical methods are required for reasons of robustness, by scientific norms, or by regulations~\cite{WassersteinLa16,BanerjeeCh20,Recht25}.
Our work differs from standard Bayes-optimal in that it takes into account the widespread use of frequentist significance testing as an evaluation criterion, while still allowing for optimizing the experiment design using prior information.
We ask:~\emph{how do current statistical testing norms affect optimal experiment design?}

Unlike the Bayesian approach which requires a prior distribution, minimax-regret experiment design instead considers worst-case decision performance against some set of possible outcomes, and has similarly been used to derive optimal site selection or sample composition designs~\cite{OleaPr25,HuZh24}.
Similarly to our work, \textcite{HuZh24} studies how to allocate sampling effort across subpopulations in a randomized experiment to optimize downstream welfare under heterogeneous effects.
In contrast to our approach, they present a solution under a minimax-regret criterion, and for a single experiment stage where adoption may differ across groups. Their design does not depend qualitatively on the budget, which we believe reflects the fact that the minimax-optimal adoption criterion they model does not require statistical significance.

\subsection{Sequential experimentation}

Like decision-focused experiment design, the design of \emph{sequential} experiments has been studied for a long time~\cite{Wald47}, but has recently received special focus as computational methods have made their optimization more feasible~\cite{CheNa23,ChapelleLi11,TrellaZh22} and domains like online platforms have created more opportunities for real-world adaptive experimentation~\cite{AzevedoDe20,XuDu18}.
Such sequential or adaptive experiments are promising in their ability to improve over static designs in terms of the tradeoff between experiment size or duration and quality, as well as by controlling risk during the experiment and other objectives.

We focus on what we believe is a first-order effect in many multi-stage experiments: \emph{the screening effect of early-stage experiments on what interventions receive further study}.
This screening effect has been studied in the context of designing phase II / efficacy trials in drug development, with various works optimizing some combination of sample sizes for phase II and phase III, the decision rule to continue to phase III, and adaptive stopping and sample sizes within stage II~\cite{RossellMu07,DingRo08,KirchnerKi16,ErdmannKi20}.
We are not aware of any work optimizing the sample composition, i.e.\ the distribution of covariates, in a screening trial under a decision- or utility-focused objective.

While adaptivity is helpful in many settings, in many domains it is not clear how or even if the design of a follow-up RCT depends on the outcomes of a previous experiment.
For example, efficiency trials in medical and education research often must use representative or otherwise constrained designs, and are often run by different researchers and institutions than pilot studies, limiting the range of possible adaptivity~\cite{FDA19,DepartmentofEducationBo22}.

As we describe in \Cref{sec:Model}, we fix all the decision rules downstream of the pilot study including the continuation and adoption criteria (statistical significance tests) and the follow-up design (a fixed sampling design independent of the pilot outcome, for example a uniform random sample with a given size).
While we model a multi-stage experimentation process, our optimization problem only involves a single decision---the pilot design.

%% file: paper/07_conclusion.tex
\section{Conclusion}
\label{sec:conclusion}

A critical role played by pilot studies in the process of scientific discovery is helping to determine which interventions deserve further study for potential adoption.
Motivated by this screening function and the tight budgets that such pilots often face, we argue that while representative samples are optimal in the limit for large pilots, when pilots are very resource-constrained it is optimal to design the study around a very targeted sample.
In a model with a finite number of unit types, we prove under mild conditions that for all pilot budgets below a threshold, the optimal pilot design samples only a single type.
We show that while this rule reduces to convenience sampling when the unknown treatment effects are the same across all types, under treatment effect heterogeneity the optimal pilot design prioritizes sampling from types with larger prior mean and that are more predictive of downstream effects.
Finally, we simulate a pilot study calibrated to real education experiments to illustrate that both the large- and small-budget pilot regimes can arise at realistic sample sizes.
Taken together, these findings add nuance to the common narrative that pilot studies should strive to enroll more representative samples; in fact, this can be counterproductive when budgets are tight.

A number of the work's limitations are focused on the assumptions around the cost and feasibility of sampling.
In reality, samples cannot usually be drawn uniformly at random from different subpopulations~\cite{Tipton14}, and costs are more complex than a linear function of sample size~\cite{TiptonSp21}.
Modeling more realistic mechanisms like fixed costs for access to different subpopulations, or the dynamic process of enrolling units into the study, could potentially extend the insights a model like this could provide.

Another promising direction for future work is to consider alternate continuation criteria.
While fully Bayesian approaches to decision-making have been studied, we believe that restricting the class of decision rules to those that are implementable and acceptable in science is necessary when studying pilot studies in that context.
Frequentist decision rules that reweight or extrapolate treatment effects from the study sample to the target population are one promising option, as are combining ordinary significance tests with measures of generalizability~\cite{Tipton14}.

%% file: paper/08_acknowledgements.tex
\phantomsection
\section*{Acknowledgments}
\addcontentsline{toc}{section}{Acknowledgements}
\label{sec:acknowledgements}

We are grateful to Elizabeth Tipton, Naoki Egami, and Molly Offer-Westort for valuable discussions and insights into social science methodology and experimental practice.

%% file: paper/A1_alternate_objectives.tex
\section{Alternate Objectives}
\label{sec:alternate-objectives-appendix}

In this section, we briefly discuss the relationship between the small-budget optimal design under our primary impact objective---the improvement in outcomes over two experiment stages---and three alternative objectives given by either (i) assuming only a \emph{single stage}, and/or (ii) maximizing the \emph{probability of passing} the statistical test or tests instead of the improvement in outcomes.
See \Cref{tab:small-budget-alternate-objectives} for the objective definitions.
These objectives serve both to provide a sensitivity analysis of our optimal small-budget design to the specified objective, and to highlight how various objectives of experiment designers or stakeholders may be more or less aligned in terms of their optimal experiment designs.

\begin{table}[h]
  \centering
  \small
  \renewcommand{\arraystretch}{1.2}
  \begin{tabular}{@{} l @{\hspace{0.75em}} c @{\hspace{0.9em}} c @{\hspace{1.0em}} c c c @{}}
    \multicolumn{1}{c}{} &
    \multicolumn{1}{c}{Objective} &
    \multicolumn{1}{c}{Small-budget index} &
    \multicolumn{3}{c}{Components in elliptical decomposition} \\
    \multicolumn{1}{c}{} &
    \multicolumn{1}{c}{$V_Y(\bm{s}_1)$} &
    \multicolumn{1}{c}{$\pi^Y_d$} &
    $\mu_d$ &
    $\cov(\tau_d,\ATE_2)$ &
    $\cov(\tau_d,\ATE_3)$ \\
    \noalign{\vskip 2pt}
    \hline
    \noalign{\vskip 4pt}
    \makecell[l]{Single-stage\\success} &
    $\E\left[\Test_1\right]$ &
    $\dfrac{\E\left[\tau_d\right]}{\sqrt{c_d}}$ &
    $\checkmark$ & & \\[12pt]
    \makecell[l]{Two-stage\\success} &
    $\E\left[\Test_1 \cdot \Test_2\right]$ &
    $\dfrac{\E\left[\tau_d \cdot \Test_2\right]}{\sqrt{c_d}}$ &
    $\checkmark$ & $\checkmark$ & \\[12pt]
    \makecell[l]{Single-stage\\impact} &
    $\E\left[\Test_1 \cdot \ATE_3\right]$ &
    $\dfrac{\E\left[\tau_d \cdot \ATE_3\right]}{\sqrt{c_d}}$ &
    $\checkmark$ & & $\checkmark$ \\[12pt]
    \makecell[l]{Two-stage\\impact} &
    $\E\left[\Test_1 \cdot \Test_2 \cdot \ATE_3\right]$ &
    $\dfrac{\E\left[\tau_d \cdot \Test_2 \cdot \ATE_3\right]}{\sqrt{c_d}}$ &
    $\checkmark$ & $\checkmark$ & $\checkmark$
  \end{tabular}
  \caption{Objective definitions, their small-budget indices, and the mean and covariance-predictivity components that appear when the corresponding coefficient is decomposed under an elliptical prior.
    A checkmark indicates structural dependence in the decomposition, not a fixed sign or a guarantee that the coefficient is nonzero for every parameter value.
  If an aggregate effect is degenerate, the corresponding covariance coordinate is identically zero for all types.}
  \label{tab:small-budget-alternate-objectives}
\end{table}

Analogues of the small-budget optimal pilot design \Cref{cor:optimal-small-budget-pilot} hold under general priors with moment constraints for any of the alternate objectives in \Cref{tab:small-budget-alternate-objectives}: that is, whenever the corresponding small-budget index has a unique strictly positive maximizer, the optimal small-budget design invests the entire budget in a single type given by the index.
The general index formula can be derived from \linkedCref{thm:optimal-small-budget-design-general-appendix}.
Further, when the prior is elliptical with finite second moments, these small-budget indices can be represented as cost-adjusted combinations of the type prior mean and the type's predictivity for the follow-up and target-population average effects.
\Cref{cor:named-objective-elliptical-decompositions-appendix} gives the coefficient decompositions underlying the checkmarks in \Cref{tab:small-budget-alternate-objectives}.
The checkmarks give the structure of the result---at special parameter values a listed component can have zero coefficient.

Note that the two-stage impact objective we focus on is the only one whose elliptical-prior decomposition involves both follow-up-effect predictivity and target-population-effect predictivity.
These two covariance terms are equal when the follow-up is representative, so in this case two-stage impact is similar to both the single-stage impact and two-stage success objectives, although the weights are different.
Meanwhile, optimizing for passing the pilot statistical test only depends on each type's cost and prior mean

%% file: paper/A2_notation_for_proofs.tex
\section{Notation, Definitions, and General Assumptions for Proofs}
\label{app:notation-for-proofs}

We recall some definitions and define additional notation used in the proofs.
Additional terms are defined in the appendix sections in which they are used.

The heterogeneous treatment effects $\bm\tau$ are drawn from the prior distribution $\mathcal{P}$.

Conditional on the heterogeneous treatment effects $\bm\tau$, the stage-$t$ average treatment effect is
\[
  \ATE_t = \frac{\bm s_t^\top \bm\tau}{n_t},
  \qquad t\in\{1,2,3\},
\]
and the pilot and follow-up estimates satisfy
\[
  \widehat{\ATE}_t = \ATE_t+\frac{\varepsilon_t}{\sqrt{n_t}},
  \qquad
  \varepsilon_t\sim\mathcal N(0,1),
  \qquad
  t\in\{1,2\},
\]
with the stage-specific sampling noises mutually independent of each other and of $\bm\tau$.
We also write the conditional pass probability for the significance test at stage $t$ as
\[
  p_t(x)\defeq \Phi\!\big(\sqrt{n_t}\,x-z_{1-\alpha_t}\big)\in(0,1),
  \qquad t\in\{1,2\},
\]
so that $p_t(\ATE_t)=\P(\Test_t=1\mid \bm\tau)$ almost surely. Here $\Phi$ represents the normal cumulative distribution function. We will write $\varphi$ for the normal density, and $\phi$ for the generating characteristic function of a generic elliptical distribution.

Define the four first-stage objectives discussed in \Cref{sec:alternate-objectives-appendix} as
\begin{align*}
  V_{\singlesuccess}(\bm s_1) &\defeq \E[\Test_1],\\
  V_{\twosuccess}(\bm s_1) &\defeq \E[\Test_1\cdot\Test_2],\\
  V_{\singleimpact}(\bm s_1) &\defeq \E[\Test_1\cdot \ATE_3],\\
  V_{\twoimpact}(\bm s_1) &\defeq \E[\Test_1\cdot\Test_2\cdot \ATE_3].
\end{align*}
where throughout much of the paper body we write the main two-stage impact objective as
\[
  V(\bm s_1)\defeq V_{\twoimpact}(\bm s_1).
\]
For the two-stage impact objective define the oracle impact $V_{\oracle}$, the noise-free pilot impact $V_{\noisefree}$, and the unconditional follow-up impact $V_{\followup}$:
\begin{align*}
  V_{\oracle}
  &\defeq \E\!\left[\indic{\ATE_3>0}\cdot\Test_2\cdot \ATE_3\right],\\
  V_{\noisefree}(\bm s_1)
  &\defeq \E\!\left[\indic{\ATE_1>0}\cdot\Test_2\cdot \ATE_3\right],\\
  V_{\followup}
  &\defeq \E\!\left[\Test_2\cdot \ATE_3\right].
\end{align*}

For a random downstream payoff $Y$, define the corresponding first-stage objective
\[
  V_Y(\bm s_1)\defeq \E[\Test_1\cdot\, Y].
\]
We will say that $Y$ is \emph{non-adaptive} if it is independent of the pilot-stage noise conditional on $\bm\tau$:
\[
  Y \ind \varepsilon_1 \mid \bm\tau.
\]
For such $Y$, define
\[
  \theta_d^Y\defeq \E[\tau_d\cdot Y],
  \qquad
  \pi_d^Y\defeq \frac{\theta_d^Y}{\sqrt{c_d}}
\]
and write $\bm\theta^Y = (\theta_d^Y)_{d=1}^D$.

The four named objectives correspond to the downstream payoffs
\[
  Y_{\singlesuccess}\defeq 1,
  \qquad
  Y_{\twosuccess}\defeq \Test_2,
  \qquad
  Y_{\singleimpact}\defeq \ATE_3,
  \qquad
  Y_{\twoimpact}\defeq \Test_2\cdot \ATE_3,
\]
all of which are non-adaptive.
For the main two-stage impact objective we retain the shorthand
\[
  \theta_d \defeq \theta_d^{\twoimpact}
  = \E[\tau_d\cdot\Test_2\cdot \ATE_3],
  \qquad
  \idx_d = \pi_d^{\twoimpact} = \frac{\theta_d}{\sqrt{c_d}},
\]
and write $\bm\theta = (\theta_d)_{d=1}^D$.

Recall the definition of the set of unit-investment normalized pilot designs from \Cref{sec:Model},
\[
  \overline{\mathcal{S}} \defeq \{\bm x\in\R_{\ge0}^D:\bm c^\top\bm x=1\},
\]
and define the unit-investment representative pilot design as
\[
  \overline{\bm s}_1^{\,\infty}
  \defeq
  \frac{\bm s_3}{\bm c^\top \bm s_3}\in\overline{\mathcal{S}}.
\]

%% file: paper/A3_small_budget_first_stage_objectives.tex
\section{Small-Budget Theory for Non-Adaptive Payoffs}
\label{sec:small-budget-first-stage-objectives-appendix}

This section treats the small-budget problem, in \Cref{sec:generic-first-stage-results-appendix} for non-adaptive first-stage objectives of the form $V_Y(\bm s_1)=\E[\Test_1\cdot\, Y]$, and in \Cref{sec:named-first-stage-results-appendix} for the named objectives from the paper by specialization.

\subsection{Generic first-stage results}
\label{sec:generic-first-stage-results-appendix}

The following gives the proof of the generic small-investment expansion.
\refstepcounter{proposition}
\begin{linkedresult}{proof}{prop:small-investment-expansion}
  \label{prop:small-investment-expansion-appendix}
  Fix a prior distribution of $\bm\tau$, and let $Y$ be non-adaptive with $\E[|Y|]<\infty$ and $\E[|Y|\|\bm\tau\|_2^3]<\infty$.
  Then there exist constants $\overline{\invst}>0$ and $0<C_0,C_1<\infty$ such that for all pilot investments $\invst\in(0,\overline{\invst}]$ and all unit-investment samples $\overline{\bm{s}}_1\in\overline{\mathcal{S}}$,
  \[
    V_Y(\invst\,\overline{\bm{s}}_1)
    =
    \alpha_1 \E[Y]
    +\sqrt{\invst}\,\varphi(z_{1-\alpha_1})\,
    \frac{(\bm{\theta}^Y)^{\!\top}\overline{\bm{s}}_1}{\sqrt{\bm{1}^{\top}\overline{\bm{s}}_1}}
    +R(\invst,\overline{\bm{s}}_1).
  \]

  The remainder satisfies the following for all $\invst, \invst'\in(0,\overline{\invst}]$ and all $\overline{\bm{s}}_1,\overline{\bm{s}}_1'\in\overline{\mathcal{S}}$.
  It is bounded as
  \[
    \left|R(\invst,\overline{\bm{s}}_1)\right|
    \le C_0\,\invst,
  \]
  it is Lipschitz in the investment $\invst$
  \[
    \left|R(\invst,\overline{\bm{s}}_1)-R(\invst',\overline{\bm{s}}_1)\right|
    \le C_0\,|\invst-\invst'|,
  \]
  and it is $O(\invst)$-Lipschitz in the unit-investment design
  \[
    \left|R(\invst,\overline{\bm{s}}_1)-R(\invst,\overline{\bm{s}}_1')\right|
    \le C_1\,\invst\,\|\overline{\bm{s}}_1-\overline{\bm{s}}_1'\|_1.
  \]
\end{linkedresult}

\begin{proof}
  First, we give a smooth extension and reparametrization of the first-stage objective around $\invst=0$.
  Then we use a Taylor expansion to calculate the first two terms.
  Finally, we bound the remainder term and show it has the desired properties.

  \emph{Step 1: A $C^{\infty}$ reparametrization around $\invst=0$.}
  Reparametrize the pilot sample $\bm{s}_1 = j^2 \overline{\bm{s}}_1$ with $j\in\R$ and $\overline{\bm{s}}_1\in \overline{\overline{\mathcal{S}}^{\epsilon}}$ where $\overline{\overline{\mathcal{S}}^{\epsilon}}\supset \overline{\mathcal{S}}^{\epsilon}$ are closed and open $\epsilon$-neighborhoods respectively of the set of unit-investment samples $\overline{\mathcal{S}}$ within $\R_{\ge 0}^D$, with $\epsilon = 1 / (2 \norm{\bm{c}}_2) > 0$.
  For $\overline{\bm{s}}_{1}\in\overline{\overline{\mathcal{S}}^{\epsilon}}$ define
  \[
    \overline{n}_{1} \defeq \bm{1}^{\top}\overline{\bm{s}}_{1}.
  \]
  Note that for all $\overline{\bm{s}}_{1}\in\overline{\overline{\mathcal{S}}^{\epsilon}}$ we can write $\overline{\bm{s}}_1 = \overline{\bm{s}}_1' + \bm{\epsilon}$ for some $\overline{\bm{s}}_1'\in\overline{\mathcal{S}}$ and $\bm{\epsilon}$ with $\norm{\bm{\epsilon}}_2 \le \epsilon$.
  Therefore
  \[
    \bm{c}^\top \overline{\bm{s}}_1 = 1 + \bm{c}^\top \bm{\epsilon}
    \ge 1 - \norm{\bm{c}}_2 \epsilon = 1/2.
  \]
  Since $\overline{\bm{s}}_1\in\R_{\ge 0}^D$, Hölder's inequality gives
  \[
    \left|\bm{c}^\top \overline{\bm{s}}_1\right| \le
    \norm{\bm{c}}_\infty
    \norm{\overline{\bm{s}}_1}_1,
    \qquad
    \left|\bm{c}^\top \overline{\bm{s}}_1\right| \le \norm{\bm{c}}_2
    \norm{\overline{\bm{s}}_1}_2.
  \]
  Combining these, we have
  \[
    1/4 \le \norm{\bm{c}}_2^2 \norm{\overline{\bm{s}}_1}_2^2,
    \qquad
    1/2 \le \norm{\bm{c}}_\infty \norm{\overline{\bm{s}}_1}_1.
  \]
  Thus
  \[
    \overline{n}_{1}
    =
    \norm{\overline{\bm{s}}_{1}}_1
    \ge
    \frac{1}{2 \norm{\bm{c}}_\infty}
    > 0.
  \]

  Define $\xi:\overline{\overline{\mathcal{S}}^{\epsilon}}\times\R^D\to\R$ and $q:\R\times\overline{\overline{\mathcal{S}}^{\epsilon}}\times\R^D\to[0,1]$ as
  \[
    \xi(\overline{\bm{s}}_1,\bm\tau)
    \defeq
    \frac{\overline{\bm{s}}_1^\top\bm\tau}{\sqrt{\overline n_1}},
    \qquad
    q(j,\overline{\bm{s}}_1,\bm\tau)
    \defeq
    \Phi\big(j\,\xi(\overline{\bm{s}}_1,\bm\tau)-z_{1-\alpha_1}\big),
  \]
  such that if $j>0$ and $\bm s_1=j^2\overline{\bm s}_1$, then
  \[
    q(j,\overline{\bm{s}}_1,\bm\tau)
    =
    \P(\Test_1=1\mid \bm\tau).
  \]
  Since $Y$ is assumed to be non-adaptive---that is, independent of the pilot-stage sampling noise conditional on $\bm\tau$---iterated expectations gives
  \begin{align*}
    V_Y(j^2\overline{\bm s}_1)
    & =
    \E\!\left[\Test_1 \cdot \, Y\right] \\
    & =
    \E\!\left[\E\!\left[\Test_1 \mid \bm\tau\right]\cdot \, Y\right] \\
    & =
    \E\!\left[q(j,\overline{\bm{s}}_1,\bm\tau)\,Y\right].
  \end{align*}
  Define the composite map $g:\R\times\overline{\overline{\mathcal{S}}^{\epsilon}}\to\R$ as
  \[
    g(j,\overline{\bm s}_1)
    \defeq
    \E\!\left[q(j,\overline{\bm{s}}_1,\bm\tau)\,Y\right]
  \]
  so that $V_Y(j^2\overline{\bm s}_1)=g(j,\overline{\bm s}_1)$ for $j>0$.

  Since $\overline{\overline{\mathcal{S}}^\epsilon}$ is compact, there exists $M<\infty$ such that
  \[
    \|\overline{\bm s}_1\|_2\le M
    \qquad\forall\overline{\bm s}_1\in \overline{\overline{\mathcal{S}}^\epsilon}.
  \]
  Therefore
  \[
    |\xi(\overline{\bm{s}}_1,\bm\tau)|
    \le
    \frac{M}{\sqrt{\overline n_1}}\|\bm\tau\|_2
    \le
    M\sqrt{2\|\bm c\|_\infty}\,\|\bm\tau\|_2.
  \]
  Also, differentiating $\xi$ with respect to $\overline{\bm s}_1$ gives
  \[
    \nabla_{\overline{\bm s}_1}\xi(\overline{\bm s}_1,\bm\tau)
    =
    \frac{\bm\tau}{\sqrt{\bm 1^\top\overline{\bm s}_1}}
    -
    \frac{\overline{\bm s}_1^\top\bm\tau}{2(\bm 1^\top\overline{\bm s}_1)^{3/2}}\,\bm 1,
  \]
  and hence, using $\|\bm 1\|_\infty=1$,
  \[
    \left\|\nabla_{\overline{\bm s}_1}\xi(\overline{\bm s}_1,\bm\tau)\right\|_\infty
    \le
    \left(\sqrt{2\|\bm c\|_\infty}+\sqrt{2}M\|\bm c\|_\infty^{3/2}\right)\|\bm\tau\|_2.
  \]

  Fix arbitrary $\overline j>0$, and write $\overline{\overline{\mathcal{S}}^{\epsilon/2}}$ for the closed $\epsilon/2$-neighborhood of $\overline{\mathcal{S}}$.
  The first two derivatives of $q$ with respect to $j$ are
  \[
    \frac{\partial}{\partial j}q(j,\overline{\bm s}_1,\bm\tau)
    =
    \xi(\overline{\bm s}_1,\bm\tau)\,
    \varphi\!\big(j\,\xi(\overline{\bm s}_1,\bm\tau) - z_{1-\alpha_1}\big),
  \]
  and
  \[
    \frac{\partial^2}{\partial j^2}q(j,\overline{\bm s}_1,\bm\tau)
    =
    - \xi(\overline{\bm s}_1,\bm\tau)^2
    \big(j\,\xi(\overline{\bm s}_1,\bm\tau) - z_{1-\alpha_1}\big)
    \varphi\!\big(j\,\xi(\overline{\bm s}_1,\bm\tau) - z_{1-\alpha_1}\big).
  \]
  Since $\varphi$, $x\varphi(x)$, and $x^2\varphi(x)$ are bounded on $\R$ ($\varphi$ is bounded and its tails decay exponentially, i.e. faster than any polynomial), there exists a constant $C<\infty$ depending only on $\overline j$, $\bm c$, $M$, and $z_{1-\alpha_1}$ such that for all $(j,\overline{\bm s}_1,\bm\tau)\in[-\overline j,\overline j]\times \overline{\overline{\mathcal{S}}^{\epsilon}}\times\R^D$,
  \[
    \left|\frac{\partial}{\partial j}q(j,\overline{\bm s}_1,\bm\tau)\right|
    \le
    C\,\|\bm\tau\|_2,
    \qquad
    \left|\frac{\partial^2}{\partial j^2}q(j,\overline{\bm s}_1,\bm\tau)\right|
    \le
    C\,\|\bm\tau\|_2^2.
  \]
  Writing
  \[
    h_j(x)\defeq - x^2(jx - z_{1-\alpha_1})\varphi(jx - z_{1-\alpha_1}),
  \]
  we have
  \[
    \frac{\partial^2}{\partial j^2}q(j,\overline{\bm s}_1,\bm\tau)
    =
    h_j\!\big(\xi(\overline{\bm s}_1,\bm\tau)\big),
  \]
  and the chain rule gives
  \[
    \nabla_{\overline{\bm s}_1}\frac{\partial^2}{\partial j^2}q(j,\overline{\bm s}_1,\bm\tau)
    =
    h_j'\!\big(\xi(\overline{\bm s}_1,\bm\tau)\big)\,
    \nabla_{\overline{\bm s}_1}\xi(\overline{\bm s}_1,\bm\tau).
  \]
  Since
  \[
    h_j'(x)
    =
    -2x(jx-z_{1-\alpha_1})\varphi(jx-z_{1-\alpha_1})
    -jx^2\varphi(jx-z_{1-\alpha_1})
    +jx^2(jx-z_{1-\alpha_1})^2\varphi(jx-z_{1-\alpha_1}),
  \]
  and $\varphi$, $x\varphi(x)$, and $x^2\varphi(x)$ are bounded, there exists $C'<\infty$ such that
  \[
    |h_j'(x)|\le C'(|x|+x^2)
    \qquad
    \forall (j,x)\in[-\overline j,\overline j]\times\R.
  \]
  Combining this with the bounds on $\xi$ and $\nabla_{\overline{\bm s}_1}\xi$ yields
  \[
    \left\|\nabla_{\overline{\bm s}_1}\frac{\partial^2}{\partial j^2}q(j,\overline{\bm s}_1,\bm\tau)\right\|_\infty
    \le
    C''\left(\|\bm\tau\|_2^2+\|\bm\tau\|_2^3\right)
  \]
  for a constant $C''<\infty$.

  By the integrability conditions assumed on $Y$ and $\bm{\tau}$,
  \[
    2\left(\mathbb{E}[|Y|]+\mathbb{E}\!\left[|Y|\|\boldsymbol{\tau}\|_2^3\right]\right)
    =
    \E\left[2|Y|\left(1+\|\boldsymbol{\tau}\|_2^3\right)\right]
    <
    \infty.
  \]
  Also, for \(x\ge 0\),
  \[
    x,\ x^2,\ x^2+x^3 \le 2(1+x^3).
  \]
  Thus, applying this with \(x=\|\boldsymbol{\tau}\|_2\) and multiplying by \(|Y|\), the random variables
  \[
    |Y|\,\|\boldsymbol{\tau}\|_2,
    \qquad
    |Y|\,\|\boldsymbol{\tau}\|_2^2,
    \qquad
    |Y|\left(\|\boldsymbol{\tau}\|_2^2+\|\boldsymbol{\tau}\|_2^3\right)
  \]
  are integrable.

  Moreover, for each fixed $\bm\tau$, the maps
  \[
    (j,\overline{\bm s}_1)\mapsto q(j,\overline{\bm s}_1,\bm\tau),
    \quad
    \frac{\partial}{\partial j}q(j,\overline{\bm s}_1,\bm\tau),
    \quad
    \frac{\partial^2}{\partial j^2}q(j,\overline{\bm s}_1,\bm\tau),
    \quad
    \nabla_{\overline{\bm s}_1}\frac{\partial^2}{\partial j^2}q(j,\overline{\bm s}_1,\bm\tau)
  \]
  are continuous on $[-\overline j,\overline j]\times \overline{\overline{\mathcal{S}}^\epsilon}$.
  Therefore, we can differentiate
  \[
    g(j,\overline{\bm{s}}_1) = \E[q(j,\overline{\bm{s}}_1,\bm{\tau}) Y]
  \]
  under the expectation~\cite[Theorem~2.27]{Folland99}, first in $j$ and then componentwise in $\overline{\bm s}_1$,
  \[
    \frac{\partial}{\partial j}g(j,\overline{\bm s}_1)
    =
    \E\!\left[
      Y\,
      \frac{\partial}{\partial j}q(j,\overline{\bm s}_1,\bm\tau)
    \right],
    \qquad
    \frac{\partial^2}{\partial j^2}g(j,\overline{\bm s}_1)
    =
    \E\!\left[
      Y\,
      \frac{\partial^2}{\partial j^2}q(j,\overline{\bm s}_1,\bm\tau)
    \right]
  \]
  for $(j,\overline{\bm s}_1)\in(-\overline j,\overline j)\times \overline{\mathcal{S}}^\epsilon$, and
  \[
    \nabla_{\overline{\bm s}_1}\frac{\partial^2}{\partial j^2}g(j,\overline{\bm s}_1)
    =
    \E\!\left[
      Y\,
      \nabla_{\overline{\bm s}_1}\frac{\partial^2}{\partial j^2}q(j,\overline{\bm s}_1,\bm\tau)
    \right]
  \]
  on the same set.
  All of these derivatives are continuous by dominated convergence: the preceding bounds give parameter-uniform integrable envelopes for the integrands, while the lower bound on \(\bm 1^\top\overline{\bm s}_1\) ensures pointwise continuity in \((j,\overline{\bm s}_1)\).

  Finally, since
  \[
    \mathcal D\defeq[0,\overline j / 2]\times \overline{\overline{\mathcal{S}}^{\epsilon/2}}
  \]
  is a compact subset of $(-\overline j,\overline j)\times\overline{\mathcal{S}}^{\epsilon}$, we may define
  \[
    C_0\defeq\frac{1}{2}\sup_{(j,\overline{\bm s}_1)\in\mathcal D}\left|
    \frac{\partial^2}{\partial j^2}g(j,\overline{\bm s}_1)
    \right|<\infty,
    \qquad
    C_1\defeq\frac{1}{2}\sup_{(j,\overline{\bm s}_1)\in\mathcal D}\left\|
    \nabla_{\overline{\bm s}_1}\frac{\partial^2}{\partial j^2}g(j,\overline{\bm s}_1)
    \right\|_\infty<\infty.
  \]

  \emph{Step 2: Constant term and first derivative at $j=0$.}
  At $j=0$, for any $\overline{\bm s}_1\in\overline{\mathcal{S}}$, we have $q(0,\overline{\bm s}_1,\bm\tau)=1 - \Phi(z_{1-\alpha_1})=\alpha_1$, so
  \[
    g(0,\overline{\bm s}_1)=\alpha_1\,\E[Y].
  \]
  Evaluating the formula above for $\partial g/\partial j$ at $j=0$ and plugging in the definition of $\bm{\theta}^Y=\E[\bm\tau Y]$ gives
  \begin{align*}
    \frac{\partial}{\partial j}g(0,\overline{\bm s}_1)
    &=
    \E\!\left[
      \frac{\partial}{\partial j}q(j,\overline{\bm s}_1,\bm\tau)\Big|_{j=0}\,Y
    \right] \\
    &=
    \varphi(z_{1-\alpha_1})\,
    \frac{1}{\sqrt{\bm 1^\top\overline{\bm s}_1}}\,
    \E\!\left[
      (\overline{\bm s}_1^\top\bm\tau)\,Y
    \right] \\
    &=
    \varphi(z_{1-\alpha_1})\,
    \frac{(\bm{\theta}^Y)^\top\overline{\bm s}_1}{\sqrt{\bm 1^\top\overline{\bm s}_1}}.
  \end{align*}

  \emph{Step 3: Taylor expansion with an integral remainder and remainder bounds.}
  For any $(j,\overline{\bm s}_1)\in\mathcal D$, Taylor's theorem with an integral remainder~\cite[Theorem~2.55]{Folland23} gives
  \[
    g(j,\overline{\bm{s}}_1)
    =
    g(0,\overline{\bm{s}}_1)
    +j\,\frac{\partial}{\partial j}g(0,\overline{\bm{s}}_1)
    +\int_0^j (j-x)\,\frac{\partial^2}{\partial j^2}
    g(x,\overline{\bm{s}}_1)\,dx.
  \]
  Let $\overline{\invst}\defeq (\overline{j} / 2)^2$.
  Plugging in the values computed above, for any $\invst\in(0,\overline{\invst}]$ and $\overline{\bm s}_1\in\overline{\mathcal{S}}$,
  \[
    V_Y(\invst\overline{\bm{s}}_1)
    =
    g(\sqrt{\invst},\overline{\bm{s}}_1)
    =
    \alpha_1 \E[Y]
    +\sqrt{\invst}\,\varphi(z_{1-\alpha_1})\,
    \frac{(\bm{\theta}^Y)^{\!\top}\overline{\bm{s}}_1}{\sqrt{\bm{1}^{\top}\overline{\bm{s}}_1}}
    +R(\invst,\overline{\bm{s}}_1),
  \]
  where
  \[
    R(\invst,\overline{\bm{s}}_1)
    \defeq
    \int_0^{\sqrt{\invst}} (\sqrt{\invst}-x)\,\frac{\partial^2}{\partial j^2}
    g(x,\overline{\bm{s}}_1)\,dx.
  \]

  Since $|\frac{\partial^2}{\partial j^2} g|\le 2 C_0$ on $\mathcal D$, the remainder is bounded as
  \[
    |R(\invst,\overline{\bm{s}}_1)|
    \le
    2 C_0 \int_0^{\sqrt{\invst}}(\sqrt{\invst}-x)\,dx
    =
    C_0\,\invst.
  \]

  To show the Lipschitz bound in the investment, fix $\overline{\bm{s}}_1\in\overline{\overline{\mathcal{S}}^{\epsilon/2}}$ and $\invst,\invst'\in(0,\overline{\invst}]$.
  Without loss of generality assume $\invst'\le \invst$.
  Subtracting and splitting the integral at $\sqrt{\invst'}$ gives
  \begin{align*}
    R(\invst,\overline{\bm{s}}_1)-R(\invst',\overline{\bm{s}}_1)
    &=
    (\sqrt{\invst}-\sqrt{\invst'})\int_0^{\sqrt{\invst'}}\frac{\partial^2}{\partial
    j^2}g(x,\overline{\bm{s}}_1)\,dx
    \;+\;
    \int_{\sqrt{\invst'}}^{\sqrt{\invst}}(\sqrt{\invst}-x)\,\frac{\partial^2}{\partial
    j^2}g(x,\overline{\bm{s}}_1)\,dx.
  \end{align*}
  Since $\big|\frac{\partial^2}{\partial j^2}g(x,\overline{\bm{s}}_1)\big|\le 2C_0$ uniformly on $\mathcal{D}$, we can bound
  \begin{align*}
    \big|R(\invst,\overline{\bm{s}}_1)-R(\invst',\overline{\bm{s}}_1)\big|
    &\le
    2C_0\Big((\sqrt{\invst}-\sqrt{\invst'})\sqrt{\invst'}+\int_{\sqrt{\invst'}}^{\sqrt{\invst}}(\sqrt{\invst}-x)\,dx\Big)
    \\
    &=
    2C_0\Big((\sqrt{\invst}-\sqrt{\invst'})\sqrt{\invst'}+\tfrac{1}{2}(\sqrt{\invst}-\sqrt{\invst'})^2\Big)
    \\
    &=
    C_0\,(\invst-\invst').
  \end{align*}
  Therefore, for all $\invst,\invst'\in(0,\overline{\invst}]$ and all $\overline{\bm{s}}_1\in\overline{\overline{\mathcal{S}}^{\epsilon/2}}$,
  \[
    \big|R(\invst,\overline{\bm{s}}_1)-R(\invst',\overline{\bm{s}}_1)\big|
    \le C_0\,|\invst-\invst'|.
  \]

  To show the Lipschitz bound in the unit-investment design, fix $\overline{\bm{s}}_1,\overline{\bm{s}}_1'\in\overline{\overline{\mathcal{S}}^{\epsilon/2}}$ and $x\in(0,\overline{j} / 2]$.
  Note that $\overline{\overline{\mathcal{S}}^{\epsilon/2}}$ is convex because it is the closed $\epsilon/2$-neighborhood of the convex polytope $\overline{\mathcal{S}}$ within $\R_{\ge 0}^D$.
  Therefore, since $\frac{\partial^2}{\partial j^2} g(x,\cdot)$ is continuously differentiable on $\overline{\mathcal{S}}^{\epsilon}\supset \overline{\overline{\mathcal{S}}^{\epsilon/2}}$, the fundamental theorem of calculus along the line segment gives
  \[
    \frac{\partial^2}{\partial j^2}
    g(x,\overline{\bm{s}}_1)-\frac{\partial^2}{\partial j^2}
    g(x,\overline{\bm{s}}_1')
    =
    \int_0^1
    \nabla_{\overline{\bm{s}}_1}\frac{\partial^2}{\partial j^2}
    g\big(x,\overline{\bm{s}}_1'+\lambda(\overline{\bm{s}}_1-\overline{\bm{s}}_1')\big)^{\!\top}
    (\overline{\bm{s}}_1-\overline{\bm{s}}_1')\,d\lambda.
  \]
  Thus, by Hölder's inequality and the bound $\|\nabla_{\overline{\bm{s}}_1}\frac{\partial^2}{\partial j^2} g\|_\infty\le 2 C_1$ on $\mathcal D$, we can bound
  \[
    \left|\frac{\partial^2}{\partial j^2}
    g(x,\overline{\bm{s}}_1)-\frac{\partial^2}{\partial j^2}
    g(x,\overline{\bm{s}}_1')\right|
    \le
    2 C_1\,\|\overline{\bm{s}}_1-\overline{\bm{s}}_1'\|_1.
  \]
  Therefore, for $\invst \le \overline{\invst}$,
  \begin{align*}
    |R(\invst,\overline{\bm{s}}_1)-R(\invst,\overline{\bm{s}}_1')|
    &\le
    \int_0^{\sqrt{\invst}} (\sqrt{\invst}-x)\,
    \left|\frac{\partial^2}{\partial j^2}
    g(x,\overline{\bm{s}}_1)-\frac{\partial^2}{\partial j^2}
    g(x,\overline{\bm{s}}_1')\right|\,dx\\
    &\le
    2 C_1\,\|\overline{\bm{s}}_1-\overline{\bm{s}}_1'\|_1
    \int_0^{\sqrt{\invst}} (\sqrt{\invst}-x)\,dx
    \;=\;
    C_1\,\invst\,\|\overline{\bm{s}}_1-\overline{\bm{s}}_1'\|_1.
  \end{align*}
\end{proof}

The following gives the proof of the generic small-budget optimal design result.

\refstepcounter{theorem}
\begin{linkedresult}{proof}{thm:optimal-small-budget-design-general}
  \label{thm:optimal-small-budget-design-general-appendix}
  Fix a prior distribution of $\bm\tau$ and a payoff $Y$ satisfying the assumptions of
  \linkedCref{prop:small-investment-expansion-appendix}.
  Suppose there exists a unique type $d^\star$ maximizing the small-budget index
  \[
    \pi_d^Y
    \defeq
    \frac{\theta_d^Y}{\sqrt{c_{d}}},
  \]
  and that $\pi_{d^\star}^Y>0$.
  Then there exists $\overline{b}>0$ such that for every budget $b\in(0,\overline{b}]$, a unique optimal design exists for
  \[
    \max_{\substack{\bm s_1\in\R_{\geq 0}^{D} \setminus \{\bm{0}\}\\ \bm c^\top \bm s_1\le b}}
    V_Y(\bm s_1),
  \]
  and it invests the entire budget in the type $d^\star$,
  \[
    \bm{s}_1^\star(b) =
    \frac{b}{c_{d^\star}}\,\bm{e}_{d^\star}.
  \]
\end{linkedresult}

\begin{proof}
  \emph{Basic definitions and outline.}
  Recall from \Cref{sec:Model} that the budget-constrained problem is equivalent to maximizing $V_Y(\invst\,\overline{\bm{s}}_1)$ over $\invst\in(0,b]$ and $\overline{\bm{s}}_1\in\overline{\mathcal{S}}$.
  We proceed in three steps.
  First, we show that the leading-order function
  \[
    L(\overline{\bm{s}}_1)\defeq
    \frac{(\bm{\theta}^Y)^\top
    \overline{\bm{s}}_1}{\sqrt{\bm{1}^\top\overline{\bm{s}}_1}},
    \qquad \overline{\bm{s}}_1\in\overline{\mathcal{S}},
  \]
  in \linkedCref{prop:small-investment-expansion-appendix} is uniquely maximized over $\overline{\mathcal{S}}$ at the vertex corresponding to $d^\star$.
  Second, we show that for sufficiently small fixed investment $\invst$, the unique optimal unit-investment design is the same vertex.
  Third, we show that the optimal design for sufficiently small budgets $b$ invests the entire budget there.

  \emph{Step 1: The leading-order functional $L$ is uniquely maximized over $\overline{\mathcal{S}}$ at the vertex corresponding to $d^\star$ and decreases at least linearly away from it.}
  Define the corresponding unit-investment vertex
  \[
    \overline{\bm{v}}^\star \defeq
    \frac{1}{c_{d^\star}}\,\bm{e}_{d^\star}\in\overline{\mathcal{S}}.
  \]
  Then $L(\overline{\bm{v}}^\star)=\pi_{d^\star}^Y$.
  Define the index gap
  \[
    \Delta\defeq \pi_{d^\star}^Y-\max_{k\neq d^\star}\pi_k^Y \;>\;0,
  \]
  and the cost extrema $c_{\min}\defeq\min_d c_{d}$ and $c_{\max}\defeq\max_d c_{d}$.
  For any $\overline{\bm{s}}_1\in\overline{\mathcal{S}}$ define $\lambda_d\defeq c_{d}\overline{s}_{1d}$, so $\lambda_d\ge0$ and $\sum_d\lambda_d=1$.
  Also, write $w_d\defeq 1/c_{d}$, so that $\overline{s}_{1d}=\lambda_d w_d$.

  Under this new parametrization, we have
  \[
    \bm{1}^\top\overline{\bm{s}}_1
    =
    \sum_{d=1}^D \overline{s}_{1d}
    =
    \sum_{d=1}^D \lambda_d w_d,
    \qquad
    (\bm{\theta}^Y)^\top\overline{\bm{s}}_1
    =
    \sum_{d=1}^D \theta_d^Y\,\overline{s}_{1d}
    =
    \sum_{d=1}^D \lambda_d\,\theta_d^Y\,w_d.
  \]
  Using $\theta_d^Y = \pi_d^Y\sqrt{c_{d}}$ and $w_d=1/c_{d}$ gives $\theta_d^Y w_d = \pi_d^Y\sqrt{w_d}$, hence
  \[
    (\bm{\theta}^Y)^\top\overline{\bm{s}}_1
    =
    \sum_{d=1}^D \lambda_d\,\pi_d^Y\,\sqrt{w_d},
    \qquad\text{and therefore}\qquad
    L(\overline{\bm{s}}_1)
    =
    \frac{\sum_{d=1}^D
    \lambda_d\,\pi_d^Y\,\sqrt{w_d}}{\sqrt{\sum_{d=1}^D \lambda_d w_d}}.
  \]

  By the definition of $\Delta$, for every $d\neq d^\star$ we have $\pi_d^Y\le \pi_{d^\star}^Y-\Delta$, so
  \begin{align*}
    \sum_{d=1}^D \lambda_d\,\pi_d^Y\,\sqrt{w_d}
    &=
    \lambda_{d^\star} \pi_{d^\star}^Y\sqrt{w_{d^\star}}
    +\sum_{d\neq d^\star}\lambda_d\,\pi_d^Y\,\sqrt{w_d} \\
    &\le
    \lambda_{d^\star} \pi_{d^\star}^Y\sqrt{w_{d^\star}}
    +\sum_{d\neq d^\star}\lambda_d\,(\pi_{d^\star}^Y-\Delta)\,\sqrt{w_d} \\
    &=
    \pi_{d^\star}^Y\sum_{d=1}^D \lambda_d\sqrt{w_d}
    -\Delta\sum_{d\neq d^\star}\lambda_d\sqrt{w_d}.
  \end{align*}
  Dividing by $\sqrt{\sum_d \lambda_d w_d}$ yields
  \[
    L(\overline{\bm{s}}_1)
    \le
    \pi_{d^\star}^Y\,
    \frac{\sum_{d=1}^D \lambda_d\sqrt{w_d}}{\sqrt{\sum_{d=1}^D \lambda_d w_d}}
    \;-\;
    \Delta\,
    \frac{\sum_{d\neq
    d^\star}\lambda_d\sqrt{w_d}}{\sqrt{\sum_{d=1}^D \lambda_d w_d}}.
  \]
  By Cauchy--Schwarz,
  \[
    \sum_{d=1}^D \lambda_d\sqrt{w_d}
    =
    \sum_{d=1}^D \sqrt{\lambda_d}\,\sqrt{\lambda_d w_d}
    \le
    \sqrt{\sum_{d=1}^D \lambda_d}\,\sqrt{\sum_{d=1}^D \lambda_d w_d}
    =
    \sqrt{\sum_{d=1}^D \lambda_d w_d},
  \]
  hence the ratio in the first term is at most $1$, and since $\pi_{d^\star}^Y>0$ we obtain
  \[
    L(\overline{\bm{s}}_1)
    \le
    \pi_{d^\star}^Y
    \;-\;
    \Delta\,
    \frac{\sum_{d\neq
    d^\star}\lambda_d\sqrt{w_d}}{\sqrt{\sum_{d=1}^D \lambda_d w_d}}.
  \]
  Noting that $L(\overline{\bm{v}}^\star)=\pi_{d^\star}^Y$, we can rearrange this as
  \[
    L(\overline{\bm{v}}^\star)-L(\overline{\bm{s}}_1)
    \ge
    \Delta\,
    \frac{\sum_{d\neq
    d^\star}\lambda_d\sqrt{w_d}}{\sqrt{\sum_{d=1}^D \lambda_d w_d}}.
  \]
  Since $\min_d \sqrt{w_d}=1/\sqrt{c_{\max}}$ and $\max_d w_d = 1/c_{\min}$,
  \[
    \sum_{d\neq d^\star}\lambda_d\sqrt{w_d}
    \ge
    \frac{1-\lambda_{d^\star}}{\sqrt{c_{\max}}},
    \qquad
    \sqrt{\sum_{d=1}^D \lambda_d w_d}
    \le
    \frac{1}{\sqrt{c_{\min}}}.
  \]
  Combining these gives the uniform bound
  \[
    L(\overline{\bm{v}}^\star)-L(\overline{\bm{s}}_1)
    \ge
    \Delta\sqrt{\frac{c_{\min}}{c_{\max}}}\,(1-\lambda_{d^\star}),
    \qquad \forall\overline{\bm{s}}_1\in\overline{\mathcal{S}}.
  \]
  Finally, we can relate $1-\lambda_{d^\star}$ to the $\ell_1$-distance from the vertex:
  \begin{align*}
    \norm{\overline{\bm{s}}_1-\overline{\bm{v}}^\star}_1
    &=
    \left|\overline{s}_{1d^\star}-\overline{v}_{1d^\star}^\star\right|
    +\sum_{d\neq d^\star}\left|\overline{s}_{1d}-\overline{v}_{1d}^\star\right| \\
    &=
    \left|\overline{s}_{1d^\star}-\frac{1}{c_{d^\star}}\right|
    +\sum_{d\neq d^\star}\left|\overline{s}_{1d}-0\right|
    && \text{since }\overline{\bm{v}}^\star=\bm{e}_{d^\star}/c_{d^\star} \\
    &=
    \frac{1}{c_{d^\star}}-\overline{s}_{1d^\star}
    +\sum_{d\neq d^\star}\overline{s}_{1d}
    && \text{since }c_{d^\star}\overline{s}_{1d^\star}=\lambda_{d^\star}\le 1 \\
    &=
    \frac{1}{c_{d^\star}}-\frac{\lambda_{d^\star}}{c_{d^\star}}
    +\sum_{d\neq d^\star}\frac{\lambda_d}{c_d} \\
    &\le
    \frac{1-\lambda_{d^\star}}{c_{\min}}
    +\sum_{d\neq d^\star}\frac{\lambda_d}{c_{\min}} \\
    &=
    \frac{2}{c_{\min}}\,(1-\lambda_{d^\star}).
  \end{align*}
  Combining this with the previous inequality gives
  \begin{equation}
    \label{eq:generic-leading-gap}
    L(\overline{\bm{v}}^\star)-L(\overline{\bm{s}}_1)
    \ge
    q\,\|\overline{\bm{s}}_1-\overline{\bm{v}}^\star\|_1,
    \qquad
    q\defeq \frac{\Delta
    \left(c_{\min}\right)^{3/2}}{2\left(c_{\max}\right)^{1/2}}.
  \end{equation}
  In particular, if $\overline{\bm{s}}_1\neq \overline{\bm{v}}^\star$ then the bound above is strict, so $\overline{\bm{v}}^\star$ is the unique maximizer of $L$ on $\overline{\mathcal{S}}$.

  \emph{Step 2: For fixed investment $\bm{c}^\top\bm{s}_1=\invst$ small enough, the unique optimal unit-investment design is $\overline{\bm{v}}^\star$.}
  By \linkedCref{prop:small-investment-expansion-appendix}, there exist $\overline{\invst}>0$ and $0<C_1<\infty$ such that for all $\invst\in(0,\overline{\invst}]$ and $\overline{\bm{s}}_1\in\overline{\mathcal{S}}$,
  \[
    V_Y(\invst\overline{\bm{s}}_1)
    =\alpha_1
    \E[Y]+\sqrt{\invst}\,\varphi(z_{1-\alpha_1})\,L(\overline{\bm{s}}_1)+R(\invst,\overline{\bm{s}}_1),
  \]
  with the $O(\invst)$-Lipschitz remainder bound
  \[
    |R(\invst,\overline{\bm{s}}_1)-R(\invst,\overline{\bm{s}}_1')|
    \le C_1\,\invst\,\|\overline{\bm{s}}_1-\overline{\bm{s}}_1'\|_1
    \qquad\forall\overline{\bm{s}}_1,\overline{\bm{s}}_1'\in\overline{\mathcal{S}}.
  \]
  Therefore, for any $\overline{\bm{s}}_1\in\overline{\mathcal{S}}$ and any $\invst\in(0,\overline{\invst}]$,
  \begin{align*}
    V_Y(\invst\overline{\bm{v}}^\star)-V_Y(\invst\overline{\bm{s}}_1)
    &=
    \sqrt{\invst}\,\varphi(z_{1-\alpha_1})\Big(L(\overline{\bm{v}}^\star)-L(\overline{\bm{s}}_1)\Big)
    +\Big(R(\invst,\overline{\bm{v}}^\star)-R(\invst,\overline{\bm{s}}_1)\Big)\\
    &\ge
    \sqrt{\invst}\,\varphi(z_{1-\alpha_1})\Big(L(\overline{\bm{v}}^\star)-L(\overline{\bm{s}}_1)\Big)
    -C_1\,\invst\,\|\overline{\bm{s}}_1-\overline{\bm{v}}^\star\|_1.
  \end{align*}
  Combining this with \eqref{eq:generic-leading-gap} gives
  \[
    V_Y(\invst\overline{\bm{v}}^\star)-V_Y(\invst\overline{\bm{s}}_1)
    \ge
    \Big(\sqrt{\invst}\,\varphi(z_{1-\alpha_1})\,q-C_1
    \invst\Big)\,\|\overline{\bm{s}}_1-\overline{\bm{v}}^\star\|_1.
  \]
  Let
  \[
    \invst_+ \defeq
    \min\!\left\{\overline{\invst},\ \Big(\frac{\varphi(z_{1-\alpha_1})q}{2C_1}\Big)^2\right\}.
  \]
  Then for all $\invst\in(0,\invst_+]$,
  \[
    \sqrt{\invst} \le \sqrt{\invst_+}
    \le \frac{\varphi(z_{1-\alpha_1})q}{2C_1}
  \]
  and the coefficient in parentheses satisfies
  \[
    \sqrt{\invst}\,\varphi(z_{1-\alpha_1})q-C_1 \invst
    \ge
    \sqrt{\invst}\,\varphi(z_{1-\alpha_1})q-C_1 \sqrt{\invst}
    \frac{\varphi(z_{1-\alpha_1})q}{2C_1}
    =
    \tfrac{1}{2}\sqrt{\invst}\,\varphi(z_{1-\alpha_1})q>0,
  \]
  so $V_Y(\invst\overline{\bm{v}}^\star)>V_Y(\invst\overline{\bm{s}}_1)$ for every $\overline{\bm{s}}_1\neq \overline{\bm{v}}^\star$ in $\overline{\mathcal{S}}$.
  Hence $\overline{\bm{v}}^\star$ is the unique maximizer of $\overline{\bm{s}}_1\mapsto V_Y(\invst\overline{\bm{s}}_1)$ on $\overline{\mathcal{S}}$ for all $\invst\in(0,\invst_+]$.

  \emph{Step 3: The optimal spending decision for sufficiently small budgets.}
  Let $\invst_+>0$ be as in \emph{Step 2}.
  Recall that $L(\overline{\bm{v}}^\star)=\pi_{d^\star}^Y>0$, and write
  \[
    A \;\defeq\; \varphi(z_{1-\alpha_1})\,L(\overline{\bm{v}}^\star)
    \;=\;\varphi(z_{1-\alpha_1})\,\pi_{d^\star}^Y \;>\;0.
  \]
  By \linkedCref{prop:small-investment-expansion-appendix}, the remainder is Lipschitz in the investment: for all $\invst,\invst'\in(0,\overline{\invst}]$,
  \[
    \big|R(\invst,\overline{\bm{v}}^\star)-R(\invst',\overline{\bm{v}}^\star)\big|
    \le C_0\,|\invst-\invst'|.
  \]
  Define
  \[
    \overline{b} \defeq
    \min\!\left\{\,\invst_+,\ \overline{\invst},\ \left(\frac{A}{2C_0}\right)^2\right\}.
  \]
  The map $\invst\mapsto V_Y(\invst\,\overline{\bm{v}}^\star)$ is strictly increasing on $(0,\overline{b}]$.
  To see this, fix any $0<\invst<\invst'\le \overline{b}$.
  Directly applying \linkedCref{prop:small-investment-expansion-appendix} with $\overline{\bm{s}}_1=\overline{\bm{v}}^\star$ gives
  \[
    V_Y(\invst'\overline{\bm{v}}^\star)-V_Y(\invst\overline{\bm{v}}^\star)
    =
    A(\sqrt{\invst'}-\sqrt{\invst})
    +\Big(R(\invst',\overline{\bm{v}}^\star)-R(\invst,\overline{\bm{v}}^\star)\Big).
  \]
  By the Lipschitz-in-$\invst$ remainder bound,
  \begin{align*}
    V_Y(\invst'\overline{\bm{v}}^\star)-V_Y(\invst\overline{\bm{v}}^\star)
    &\ge
    A(\sqrt{\invst'}-\sqrt{\invst}) - C_0(\invst'-\invst)
    = (\invst'-\invst)\left(\frac{A}{\sqrt{\invst'}+\sqrt{\invst}}-C_0\right).
  \end{align*}
  Since $\invst<\invst'$ we have $\sqrt{\invst'}+\sqrt{\invst}<2\sqrt{\invst'}$, hence
  \[
    \frac{A}{\sqrt{\invst'}+\sqrt{\invst}}
    >
    \frac{A}{2\sqrt{\invst'}}
    \ge
    \frac{A}{2\sqrt{\overline{b}}}
    \ge C_0,
  \]
  where the last inequality uses $\overline{b}\le (A/(2C_0))^2$.
  Therefore $V_Y(\invst'\overline{\bm{v}}^\star)>V_Y(\invst\overline{\bm{v}}^\star)$.

  Now fix any budget $b\in(0,\overline{b}]$.
  By \emph{Step 2}, for every fixed investment $\invst\in(0,b]$ the unique maximizer of $\overline{\bm{s}}_1\mapsto V_Y(\invst\overline{\bm{s}}_1)$ over $\overline{\mathcal{S}}$ is $\overline{\bm{v}}^\star$.
  Since $\invst\mapsto V_Y(\invst\overline{\bm{v}}^\star)$ is strictly increasing on $(0,b]$, the unique maximizer over $\invst\in(0,b]$ is $\invst=b$.
  Hence the unique optimizer is
  \[
    \bm{s}_1^\star(b)=b\,\overline{\bm{v}}^\star
    =
    \frac{b}{c_{d^\star}}\,\bm{e}_{d^\star}.
  \]
\end{proof}

\subsection{Named objectives under a general prior}
\label{sec:named-first-stage-results-appendix}

\begin{proposition}[Finite fourth moments imply the generic small-investment assumptions for the two-stage impact objective]
  \label{prop:fourth-moment-small-investment-two-impact-appendix}
  Suppose the prior distribution of $\bm\tau$ has finite fourth moments.
  Then the downstream payoff
  \[
    Y_{\twoimpact}=\Test_2\cdot \ATE_3
  \]
  is non-adaptive and satisfies $\E[|Y_{\twoimpact}|]<\infty$ and $\E[|Y_{\twoimpact}|\|\bm\tau\|_2^3]<\infty$.
  Consequently, the non-adaptivity and moment assumptions in \linkedCref{prop:small-investment-expansion-appendix} and \linkedCref{thm:optimal-small-budget-design-general-appendix} hold with $Y=Y_{\twoimpact}$.
\end{proposition}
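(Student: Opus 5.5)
The plan is to verify the two hypotheses of \linkedCref{prop:small-investment-expansion-appendix} directly: non-adaptivity, then the two moment bounds. Both rest on the trivial bound $|\Test_2|\le 1$ and on a Cauchy--Schwarz/Hölder-type control of $|\ATE_3|$ by $\|\bm\tau\|_2$.

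\emph{Non-adaptivity.} By the probability model in \Cref{app:notation-for-proofs}, $\Test_2=\indic{\sqrt{n_2}\,\ATE_2+\varepsilon_2\ge z_{1-\alpha_2}}$ is a measurable function of $(\bm\tau,\varepsilon_2)$. Similarly, $\ATE_3=\bm s_3^\top\bm\tau/n_3$ is a function of $\bm\tau$ alone. Hence $Y_{\twoimpact}=f(\bm\tau,\varepsilon_2)$ for a measurable $f$.

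The noises $\varepsilon_1,\varepsilon_2$ are mutually independent and independent of $\bm\tau$. Therefore $\varepsilon_2$ is independent of $\varepsilon_1$ conditional on $\bm\tau$: the conditional joint law given $\bm\tau$ is the product $\mathcal N(0,1)\otimes\mathcal N(0,1)$. It follows that $Y_{\twoimpact}\ind\varepsilon_1\mid\bm\tau$, which is the required non-adaptivity.

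\emph{Moment bounds.} Since $\Test_2\in\{0,1\}$, we have $|Y_{\twoimpact}|\le|\ATE_3|$. By Cauchy--Schwarz,
\[
  |\ATE_3|\le \frac{\|\bm s_3\|_2}{n_3}\,\|\bm\tau\|_2\eqdef K\,\|\bm\tau\|_2,
  \qquad K<\infty .
\]
This gives two bounds.
\begin{itemize}
  \item First, $\E[|Y_{\twoimpact}|]\le K\,\E[\|\bm\tau\|_2]$, which is finite because $\|\bm\tau\|_2\le 1+\|\bm\tau\|_2^4$.
  \item Second, $\E[|Y_{\twoimpact}|\,\|\bm\tau\|_2^3]\le K\,\E[\|\bm\tau\|_2^4]$.
\end{itemize}

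\emph{The only real detail.} It remains to show that finite fourth moments of the coordinates $\tau_d$ give $\E[\|\bm\tau\|_2^4]<\infty$. Write $\|\bm\tau\|_2^4=(\sum_d\tau_d^2)^2$. By convexity of $x\mapsto x^2$ (equivalently, Cauchy--Schwarz), $(\sum_d\tau_d^2)^2\le D\sum_d\tau_d^4$. Its expectation is therefore finite.

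The two moment conditions together with non-adaptivity give exactly the hypotheses of \linkedCref{prop:small-investment-expansion-appendix}, and hence those of \linkedCref{thm:optimal-small-budget-design-general-appendix}, with $Y=Y_{\twoimpact}$. I do not expect any step to be a genuine obstacle. The only point needing care is stating the conditional independence precisely, i.e.\ using the joint independence of $(\bm\tau,\varepsilon_1,\varepsilon_2)$ rather than mere pairwise independence.
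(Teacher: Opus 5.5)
Your proposal is correct and follows essentially the same route as the paper. Non-adaptivity comes from $\Test_2$ depending only on $(\bm\tau,\varepsilon_2)$ and $\ATE_3$ on $\bm\tau$ alone. The moment bounds come from $|Y_{\twoimpact}|\le|\ATE_3|\le(\|\bm s_3\|_2/n_3)\|\bm\tau\|_2$ by Cauchy--Schwarz together with $\|\bm\tau\|_2^4\le D\sum_d\tau_d^4$. Your explicit bound $\|\bm\tau\|_2\le 1+\|\bm\tau\|_2^4$ and your remark on joint versus pairwise independence are slightly more careful versions of steps the paper states briefly.
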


\begin{proof}
  $Y_{\twoimpact}=\Test_2\cdot \ATE_3$ is non-adaptive because $\Test_2$ and $\ATE_3$ are independent of the pilot-stage sampling noise conditional on $\bm\tau$ by the probability model described in \Cref{sec:Model}.

  The first integrability condition is satisfied because
  \[
    |Y_{\twoimpact}|
    \le
    |\ATE_3|
    =
    \frac{|\bm s_3^\top\bm\tau|}{n_3}
    \le
    \frac{\|\bm s_3\|_2}{n_3}\|\bm\tau\|_2,
  \]
  and the second is satisfied because
  \[
    |Y_{\twoimpact}|\|\bm\tau\|_2^3
    \le
    |\ATE_3|\|\bm\tau\|_2^3
    =
    \frac{|\bm s_3^\top\bm\tau|}{n_3}\|\bm\tau\|_2^3
    \le
    \frac{\|\bm s_3\|_2}{n_3}\|\bm\tau\|_2^4,
  \]
  where we use $0\le\Test_2\le1$, the definition
  $\ATE_3=\bm s_3^\top\bm\tau/n_3$, and Cauchy--Schwarz.
  Since $D<\infty$,
  \[
    \|\bm\tau\|_2^4
    =
    \left(\sum_{d=1}^D \tau_d^2\right)^2
    \le
    D\sum_{d=1}^D \tau_d^4,
  \]
  so finite fourth moments of $\bm\tau$ imply $\E[\|\bm\tau\|_2^4]<\infty$, and hence also $\E[\|\bm\tau\|_2]<\infty$.
  Therefore the right-hand sides are integrable, which proves the claim.
\end{proof}

Applying \linkedCref{prop:small-investment-expansion-appendix} and \linkedCref{thm:optimal-small-budget-design-general-appendix} with $Y=Y_{\twoimpact}=\Test_2 \cdot \ATE_3$ nearly recovers the small-budget results for the paper's main objective, and applying them with $Y=1, \Test_2$, and $\ATE_3$ gives the corresponding results for the alternate objectives.
The only missing component is that \linkedCref{thm:optimal-small-budget-design-general-appendix} requires the maximizing type to have a positive index.
We now show that this positivity condition holds automatically for the two impact objectives $V_{\singleimpact}$ and $V_{\twoimpact}$.

\begin{proposition}[Automatic positivity for the impact objectives]
  \label{prop:positive-impact-objectives-appendix}
  Suppose the prior distribution of $\bm\tau$ has finite fourth moments.
  If there exists a unique type maximizing the small-budget index for $V_{\singleimpact}$, then there exists at least one type $d$ with $\theta_d^{\singleimpact}>0$.
  If there exists a unique type maximizing the small-budget index for $V_{\twoimpact}$, then there exists at least one type $d$ with $\theta_d = \theta^{\twoimpact}_d>0$.
  Consequently, the positivity condition in \linkedCref{thm:optimal-small-budget-design-general-appendix} is automatic for either impact objective whenever its small-budget index has a unique maximizer.

\end{proposition}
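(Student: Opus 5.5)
The plan is to argue by contrapositive, using one identity: pairing the vector $\bm\theta^Y$ with the target population composition $\bm s_3$ gives a nonnegative quantity. Linearity of expectation gives
\[
  \bm s_3^\top \bm\theta^Y = \E\!\left[(\bm s_3^\top\bm\tau)\,Y\right] = n_3\,\E[\ATE_3\cdot Y].
\]
The two impact payoffs $Y_{\singleimpact}=\ATE_3$ and $Y_{\twoimpact}=\Test_2\cdot\ATE_3$ both contain the factor $\ATE_3$, so this quantity becomes an expectation of a nonnegative multiple of $\ATE_3^2$. All expectations involved are finite. Indeed, $|\tau_d\,Y|\le \|\bm\tau\|_2\,|\ATE_3|\le C\|\bm\tau\|_2^2$, which is integrable under finite fourth moments, as in \linkedCref{prop:fourth-moment-small-investment-two-impact-appendix}.

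\emph{Single-stage impact.} We have $\bm s_3^\top\bm\theta^{\singleimpact}=n_3\E[\ATE_3^2]\ge 0$. Suppose, for contradiction, that $\theta_d^{\singleimpact}\le 0$ for every $d$. Since $\bm s_3\ge \bm 0$, the left side is then $\le 0$, so $\E[\ATE_3^2]=0$ and $\ATE_3=0$ almost surely. It follows that $\theta_d^{\singleimpact}=\E[\tau_d\ATE_3]=0$ for every $d$. Hence $\pi_d^{\singleimpact}=0$ for all $d$. Because $D\ge 2$, every type then maximizes the index, contradicting the assumed uniqueness of the maximizer.

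\emph{Two-stage impact.} We condition on $\bm\tau$ and use independence of the follow-up noise, so that $\E[\Test_2\mid\bm\tau]=p_2(\ATE_2)$. This gives
\[
  \bm s_3^\top\bm\theta = n_3\,\E\!\left[p_2(\ATE_2)\,\ATE_3^2\right] \ge 0.
\]
The key point is that $p_2(x)=\Phi(\sqrt{n_2}x-z_{1-\alpha_2})\in(0,1)$ is strictly positive for every $x$. Therefore this expectation vanishes only if $\ATE_3=0$ almost surely. Now suppose all $\theta_d\le 0$. Exactly as before, $\bm s_3^\top\bm\theta=0$, so $\ATE_3=0$ almost surely. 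Then $Y_{\twoimpact}=\Test_2\cdot\ATE_3=0$ almost surely, so $\theta_d=0$ for all $d$. All indices $\idx_d$ equal $0$, which again contradicts the unique-maximizer hypothesis since $D\ge2$.

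In both cases some type has $\theta_d>0$, hence $\pi_d>0$, and so the unique maximizer satisfies $\pi_{d^\star}>0$. This is exactly the positivity condition in \linkedCref{thm:optimal-small-budget-design-general-appendix}. The only delicate step is in the two-stage case. It requires the conditioning step that replaces $\Test_2$ by $p_2(\ATE_2)$, which uses the independence of $\varepsilon_2$ from $\bm\tau$, together with the strict positivity of $p_2$; these two facts are what allow $\ATE_3=0$ almost surely to be deduced. Everything else is routine.
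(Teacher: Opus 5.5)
Your proof is correct and follows the paper's argument. Both pair $\bm\theta^Y$ with $\bm s_3$ to obtain $n_3\E[\ATE_3^2]$ (respectively $n_3\E[p_2(\ATE_2)\ATE_3^2]$), and both use the unique-maximizer hypothesis with $D\ge 2$ to exclude the degenerate case $\ATE_3=0$ a.s. The only difference is the direction of the argument: you argue by contradiction from ``all $\theta_d\le 0$,'' while the paper argues directly that a unique maximizer forces $Y$, hence $\ATE_3$, to be not a.s.\ zero, which makes the pairing strictly positive.
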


\begin{proof}
  To begin, note that if there exists a unique type $d^\star$ maximizing the small-budget index for a payoff $Y$, then
  \[
    \pi_d^Y = \frac{\theta_d^Y}{\sqrt{c_d}} = \frac{\E[\tau_d \cdot Y]}{\sqrt{c_d}}
  \]
  is not constant, and thus $Y$ cannot be almost surely zero.
  Also, the assumption that the prior has finite fourth moments implies that $\E[\ATE_3^2]<\infty$.

  For the single-impact objective, $Y_{\singleimpact}=\ATE_3$, so a unique small-budget index maximizer implies that $\ATE_3$ is not almost surely zero.
  Taking the inner product with $\bm s_3$ gives
  \[
    \bm s_3^\top\bm\theta^{\singleimpact}
    =
    \sum_{d=1}^D s_{3d}\,\E[\tau_d \ATE_3]
    =
    \E[(\bm s_3^\top\bm\tau)\,\ATE_3]
    =
    n_3\,\E[\ATE_3^2].
  \]
  Since $\ATE_3$ is not almost surely zero, we have $\E[\ATE_3^2]>0$, so $\bm s_3^\top\bm\theta^{\singleimpact}>0$.
  Therefore at least one coordinate of $\bm\theta^{\singleimpact}$ is strictly positive.

  For the two-impact objective, $Y_{\twoimpact}=\Test_2\cdot \ATE_3$, so a unique small-budget index maximizer implies that $\Test_2\cdot \ATE_3$ is not almost surely zero, and therefore $\ATE_3$ is not almost surely zero.
  Similarly,
  \[
    \bm s_3^\top\bm\theta^{\twoimpact}
    =
    \sum_{d=1}^D s_{3d}\,\E[\tau_d\,\cdot\,\Test_2\,\cdot\, \ATE_3]
    =
    \E[(\bm s_3^\top\bm\tau)\,\cdot\,\Test_2\,\cdot\, \ATE_3]
    =
    n_3\,\E[\Test_2\,\cdot\, \ATE_3^2].
  \]
  Conditional on $\bm\tau$, $\Test_2$ is Bernoulli with success probability $p_2(\ATE_2)\in(0,1)$.
  Hence
  \[
    \E[\Test_2\,\cdot\, \ATE_3^2]
    =
    \E[p_2(\ATE_2)\,\ATE_3^2]
    >
    0,
  \]
  because $p_2(\ATE_2)>0$ and $\ATE_3$ not almost surely zero implies $\P(\ATE_3\neq0)>0$.
  Thus $\bm s_3^\top\bm\theta^{\twoimpact}>0$, so at least one coordinate of $\bm\theta^{\twoimpact}$ is strictly positive.
\end{proof}

Combining \linkedCref{thm:optimal-small-budget-design-general-appendix} with \Cref{prop:fourth-moment-small-investment-two-impact-appendix} and \Cref{prop:positive-impact-objectives-appendix} gives \Cref{cor:optimal-small-budget-pilot}.

The small-budget indices can be written in the following form for general priors.
In \Cref{sec:elliptical-decompositions-appendix} we show that the two-stage objectives can be further simplified under an elliptical prior.

\begin{proposition}[Representations for the named coefficients under a general prior]
  \label{prop:first-stage-coefficient-formulas-appendix}
  Assume the prior on $\bm\tau$ has finite second moments.
  Then for every type $d$,
  \begin{align*}
    \theta_d^{\singlesuccess}
    &=
    \E[\tau_d] \\
    \theta_d^{\twosuccess}
    &=
    \E[\tau_d\,p_2(\ATE_2)],\\
    \theta_d^{\singleimpact}
    &=
    \E[\tau_d \ATE_3]
    =
    \E[\ATE_3]\,\E[\tau_d]
    +
    \cov(\tau_d,\ATE_3),\\
    \theta_d
    =
    \theta_d^{\twoimpact}
    &=
    \E[\tau_d\,p_2(\ATE_2)\,\ATE_3].
  \end{align*}

\end{proposition}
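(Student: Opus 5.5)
The plan is to treat each named payoff $Y\in\{1,\Test_2,\ATE_3,\Test_2\cdot\ATE_3\}$ in turn and compute $\theta_d^Y=\E[\tau_d Y]$ by conditioning on $\bm\tau$. First I check that every expectation is finite under finite second moments, so that the tower property applies. Since $|\Test_2|\le 1$ and $p_2\in(0,1)$, the integrands are bounded in absolute value by $|\tau_d|$ or $|\tau_d|\,|\ATE_3|$. By Cauchy--Schwarz, $\E[|\tau_d||\ATE_3|]\le \E[\tau_d^2]^{1/2}\E[\ATE_3^2]^{1/2}$. Both factors are finite because $\ATE_3=\bm s_3^\top\bm\tau/n_3$ is a linear combination of coordinates with finite second moments.

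The single-stage cases are direct. With $Y=1$ we get $\theta_d^{\singlesuccess}=\E[\tau_d]$ by definition. With $Y=\ATE_3$ we get $\theta_d^{\singleimpact}=\E[\tau_d\ATE_3]$, and the identity $\E[XW]=\E[X]\E[W]+\cov(X,W)$ applied with $X=\tau_d$, $W=\ATE_3$ gives the stated decomposition.

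For the two payoffs involving $\Test_2$, I use that $\tau_d$, $\ATE_2$ and $\ATE_3$ are $\bm\tau$-measurable. I also use that, under the probability model in \Cref{app:notation-for-proofs}, $\P(\Test_2=1\mid\bm\tau)=p_2(\ATE_2)$, because $\varepsilon_2$ is independent of $\bm\tau$. Iterated expectations then give $\E[\tau_d\Test_2]=\E[\tau_d\,\E[\Test_2\mid\bm\tau]]=\E[\tau_d\,p_2(\ATE_2)]$, and in the same way $\E[\tau_d\Test_2\ATE_3]=\E[\tau_d\,p_2(\ATE_2)\,\ATE_3]$.

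There is no real obstacle here. The only point needing care is justifying the tower property, which requires the integrability bound above. That bound is exactly where the finite-second-moment assumption enters, and it is what makes the condition sufficient for all four coefficient formulas.
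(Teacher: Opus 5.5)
Your proposal is correct and matches the paper's proof essentially step for step: the single-stage formulas follow from the definition and $\E[XW]=\E[X]\E[W]+\cov(X,W)$, and the two formulas involving $\Test_2$ follow from iterated expectations using $\P(\Test_2=1\mid\bm\tau)=p_2(\ATE_2)$ together with the $\bm\tau$-measurability of $\tau_d$ and $\ATE_3$. Your Cauchy--Schwarz integrability check is a small addition that the paper leaves implicit, and it is exactly where the finite-second-moment hypothesis is used.
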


\begin{proof}
  The single-stage success formula is immediate from the definition of $\theta_d^{\singlesuccess}$.

  Since $p_2(\ATE_2)=\P(\Test_2=1\mid\bm\tau)\in(0,1)$ and $\ATE_3$ is measurable with respect to $\bm\tau$, by iterated expectations we have
  \[
    \theta_d^{\twosuccess}
    =
    \E[\tau_d\,\cdot\,\Test_2]
    =
    \E[\tau_d\,p_2(\ATE_2)],
    \qquad
    \theta_d
    =
    \E[\tau_d\,\cdot\,\Test_2\,\cdot\, \ATE_3]
    =
    \E[\tau_d\,p_2(\ATE_2)\,\ATE_3].
  \]
  For the single-stage impact objective,
  \[
    \theta_d^{\singleimpact}
    =
    \E[\tau_d \ATE_3]
    =
    \E[\tau_d]\E[\ATE_3]+\cov(\tau_d,\ATE_3)
  \]
  by the linearity of expectation and the definition of covariance.
\end{proof}

%% file: paper/A4_elliptical_priors_and_small_budget_index_decompositions.tex
\section{Elliptical Priors and Small-Budget Index Decompositions}
\label{sec:elliptical-decompositions-appendix}

The previous section developed the optimal small-budget design in terms of small-budget indices under an arbitrary prior.
In this section we show that the small-budget indices can be written in a more interpretable form under an elliptical prior.
Only this section restricts the prior to the class of elliptical distributions; the general small-budget and large-budget results do not.

We refer to \citeauthor{FangKo90}'s \citetitle{FangKo90}~\cite{FangKo90} for the definition of elliptical distributions and an extensive discussion of their properties.
Note that when $\mathrm{EC}_D(\bm\mu,\Sigma,\phi)$ has finite second moments, the location $\bm\mu$ and scale $\Sigma$ parameters have the following relationship to the mean and covariance (Theorem 2.17 of~\cite{FangKo90}):
\[
  \E[\bm\tau]=\bm\mu \quad\text{and}\quad \cov(\bm\tau) \propto \Sigma.
\]

The key property of elliptical distributions $\mathrm{EC}_D(\bm\mu,\Sigma,\phi)$ that we will use is that $\bm\tau$ has an affine conditional mean given any linear summary of $\bm\tau$.

\begin{lemma}[Linear-summary conditional mean]
  \label{lem:elliptical-conditional-mean-singular-appendix}
  Let \(\bm\tau\sim\mathrm{EC}_D(\bm\mu,\Sigma,\phi)\) have finite second moments.
  Let \(A\in\mathbb R^{D\times k}\), and set \(L=A^\top\bm\tau\).
  Then
  \[
    \E[\bm\tau\mid L]
    =
    \bm\mu
    +
    \cov(\bm\tau,L)\var(L)^\dagger(L-\E[L])
    \qquad\text{a.s.},
  \]
  where \({}^\dagger\) denotes the Moore--Penrose inverse.
\end{lemma}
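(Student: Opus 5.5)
The plan is to use the stochastic representation of elliptical distributions from \textcite{FangKo90}. Write $\Sigma = BB^\top$ with $B\in\R^{D\times r}$ of full column rank $r=\operatorname{rank}\Sigma$, and $\bm\tau = \bm\mu + R\,B\bm U$, where $\bm U$ is uniform on the unit sphere in $\R^r$ and $R\ge 0$ is independent of $\bm U$. Finite second moments give $\E[R^2]<\infty$ and $\cov(\bm\tau)=\frac{\E[R^2]}{r}\Sigma$.

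If $\E[R^2]=0$ then $\bm\tau=\bm\mu$ almost surely. In that case both sides of the claimed identity equal $\bm\mu$, since $\cov(\bm\tau,L)=0$. So assume $\E[R^2]>0$. With $\bm Z\defeq R\bm U$, which is spherically distributed in $\R^r$, we have
\[
  L-\E[L]=A^\top B\bm Z \eqdef M^\top \bm Z, \qquad M\defeq B^\top A\in\R^{r\times k}.
\]
Hence it suffices to show
\[
  \E[\bm Z\mid M^\top\bm Z]=P\bm Z,
\]
where $P$ is the orthogonal projection onto $\operatorname{col}(M)$. Indeed, $P=M(M^\top M)^\dagger M^\top$, and therefore
\[
  \E[\bm\tau\mid L]=\bm\mu+BM(M^\top M)^\dagger M^\top\bm Z.
\]
On the other side, $\cov(\bm\tau,L)=\kappa BM$ and $\var(L)=\kappa M^\top M$ with $\kappa=\E[R^2]/r>0$. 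The factors of $\kappa$ cancel through the pseudoinverse, and $M^\top\bm Z=L-\E[L]$, so the two expressions agree. One point to check here is that conditioning on $M^\top\bm Z$ is the same as conditioning on $P\bm Z$. This holds because $M^\top\bm Z=M^\top P\bm Z$, and $M^\top$ is injective on $\operatorname{col}(M)$, so $P\bm Z$ is a measurable (linear) function of $M^\top\bm Z$.

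The key step is the projection identity. Decompose $\bm Z=P\bm Z+(I-P)\bm Z$. The first piece is measurable with respect to the conditioning variable. For the second, consider the orthogonal map $Q=2P-I$, which fixes $\operatorname{col}(M)$ and negates its orthogonal complement. By spherical symmetry $Q\bm Z\overset{d}{=}\bm Z$, so
\[
  (P\bm Z,(I-P)\bm Z)\overset{d}{=}(P\bm Z,-(I-P)\bm Z).
\]
Since $\E\|\bm Z\|<\infty$, this gives $\E[(I-P)\bm Z\mid P\bm Z]=-\E[(I-P)\bm Z\mid P\bm Z]$, so the conditional mean is $0$. I expect this reflection symmetry to be the main obstacle. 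Stating the equality of joint laws and the a.s.\ conditional-expectation consequence carefully requires one line: for bounded measurable $h$, $\E[(I-P)\bm Z\,h(P\bm Z)]=-\E[(I-P)\bm Z\,h(P\bm Z)]$. The rank-deficient $\Sigma$ and $A$ cases are handled entirely by working in $\R^r$ and using Moore--Penrose inverses, which needs no density.
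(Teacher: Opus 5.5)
Your proposal is correct and follows essentially the same route as the paper's proof. Both use the stochastic representation $\bm\tau\overset{d}{=}\bm\mu+B(R\bm U)$, prove the projection identity $\E[\bm Z\mid M^\top\bm Z]=P\bm Z$ via the reflection $2P-I$ and spherical symmetry, and cancel the covariance scale through the Moore--Penrose inverse. Two small differences: the paper tests against $g(M^\top\bm S)$ directly instead of going through $\sigma(P\bm Z)=\sigma(M^\top\bm Z)$, which is cosmetic; and the paper states explicitly what you gloss over, that the representation holds only in distribution, which suffices because the claimed identity depends only on the joint law of $(\bm\tau,L)$.
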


\begin{proof}
  Let
  \[
    \bm X\coloneqq \bm\tau-\bm\mu,
    \qquad
    C\coloneqq \cov(\bm\tau).
  \]
  By Theorem 2.17 of~\cite{FangKo90}, \(\E[\bm\tau]=\bm\mu\) and \(C\propto\Sigma\).
  If \(C=0\), then
  \(\bm\tau=\bm\mu\) a.s., and the claim is immediate.

  Hence assume \(C\neq 0\).
  Let \(q=\operatorname{rank}(\Sigma)\).
  Then \(q>0\), and \(C=c\Sigma\) for some \(c>0\).
  By Corollary 1 to Theorem 2.14 of~\cite{FangKo90}, there exist
  \(\Gamma\in\mathbb R^{q\times D}\), a nonnegative random variable \(r\),
  and \(\bm u^{(q)}\) uniformly distributed on the unit sphere in
  \(\mathbb R^q\), with \(r\) independent of \(\bm u^{(q)}\), such that
  \[
    \Gamma^\top\Gamma=\Sigma,
    \qquad
    \Gamma^\top(r\bm u^{(q)})\stackrel d=\bm X .
  \]
  Since the desired conditional-mean identity is determined by the joint
  law of \((\bm X,A^\top\bm X)\), it suffices to prove it for this equal-in-distribution copy.
  Thus, writing
  \[
    \bm S\coloneqq r\bm u^{(q)},
  \]
  we may assume, without loss of generality, that
  \[
    \bm X=\Gamma^\top\bm S.
  \]
  The finite-second-moment assumption ensures \(\bm S\) is integrable.

  We first record a simple spherical projection identity.
  For any \(M\in\mathbb R^{q\times k}\), let
  \[
    P\coloneqq M(M^\top M)^\dagger M^\top .
  \]
  Then \(P\) is the orthogonal projection onto \(\operatorname{Im}(M)\), and
  \[
    \E[\bm S\mid M^\top\bm S]=P\bm S
    \qquad\text{a.s.}
  \]
  Indeed, \(P\bm S=M(M^\top M)^\dagger M^\top\bm S\) is measurable with respect to \(M^\top\bm S\).
  Let \(J=2P-I_q\).
  Then \(J\) is orthogonal,
  \(M^\top J\bm S=M^\top\bm S\), and
  \((I_q-P)J\bm S=-(I_q-P)\bm S\).
  Since \(\bm u^{(q)}\) is uniform on the sphere, \(J\bm S\stackrel d=\bm S\).
  Thus, for every bounded measurable
  \(g\),
  \[
    \begin{aligned}
      \E\!\left[g(M^\top\bm S)(I_q-P)\bm S\right]
      &=
      \E\!\left[g(M^\top J\bm S)(I_q-P)J\bm S\right] \\
      &=
      -\E\!\left[g(M^\top\bm S)(I_q-P)\bm S\right].
    \end{aligned}
  \]
  Hence \(\E[(I_q-P)\bm S\mid M^\top\bm S]=0\), proving the identity.

  Apply this identity with \(M=\Gamma A\).
  Since
  \(L=A^\top\bm\mu+A^\top\bm X\), conditioning on \(L=A^\top\bm X\) is the same as
  conditioning on \(A^\top\bm X=M^\top\bm S\).
  Therefore
  \[
    \begin{aligned}
      \E[\bm X\mid L]
      &=
      \Gamma^\top\E[\bm S\mid M^\top\bm S] \\
      &=
      \Gamma^\top M(M^\top M)^\dagger M^\top\bm S \\
      &=
      \Sigma A(A^\top\Sigma A)^\dagger A^\top\bm X .
    \end{aligned}
  \]
  Since \(C=c\Sigma\),
  \[
    \Sigma A(A^\top\Sigma A)^\dagger
    =
    C A(A^\top C A)^\dagger .
  \]
  Finally,
  \[
    \cov(\bm\tau,L)=CA,
    \qquad
    \var(L)=A^\top C A,
    \qquad
    L-\E[L]=A^\top\bm X.
  \]
  Substitution gives the claim.
\end{proof}

For the single-stage objectives, the general-prior formulas from \Cref{prop:first-stage-coefficient-formulas-appendix} already give
\[
  \theta_d^{\singlesuccess}=\E[\tau_d]=\mu_d,
  \qquad
  \theta_d^{\singleimpact}
  =
  \E[\ATE_3] \, \mu_d+\cov(\tau_d,\ATE_3).
\]
The next lemma records the elliptical-prior calculation that will be used for the two-stage objectives.

\begin{lemma}[Weighted linear-summary decomposition]
  \label{lem:elliptical-weighted-linear-summary-appendix}
  Let \(\bm\tau\sim\mathrm{EC}_D(\bm\mu,\Sigma,\phi)\) have finite second moments.
  Let \(L=A^\top\bm\tau\in\mathbb R^k\) be any linear summary.
  Let \(H\) be real-valued and measurable with respect to \(L\), and suppose
  \[
    \E[|H|]<\infty
    \qquad\text{and}\qquad
    \E[|H|\,\|\bm\tau\|_2]<\infty .
  \]
  Then, for every type \(d\),
  \[
    \E[\tau_d H]
    =
    \E[H]\,\mu_d
    +
    \cov(\tau_d,L)^\top\var(L)^\dagger\cov(L,H).
  \]
\end{lemma}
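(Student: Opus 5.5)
The plan is to condition on the linear summary $L$ and then apply \Cref{lem:elliptical-conditional-mean-singular-appendix}. Since $H$ is $L$-measurable and $\tau_d H$ is integrable (because $\E[|H|\,\|\bm\tau\|_2]<\infty$), iterated expectations give
\[
  \E[\tau_d H]=\E\big[H\,\E[\tau_d\mid L]\big].
\]
Taking the $d$-th coordinate of \Cref{lem:elliptical-conditional-mean-singular-appendix} gives
\[
  \E[\tau_d\mid L]=\mu_d+\cov(\tau_d,L)^\top\var(L)^\dagger(L-\E[L])
\]
almost surely. Substituting, the first term yields $\E[H]\,\mu_d$. The second term yields $\cov(\tau_d,L)^\top\var(L)^\dagger\,\E[H(L-\E[L])]$, where the constant row vector and matrix are pulled out of the expectation by linearity.

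It then remains to identify $\E[H(L-\E[L])]$ with $\cov(L,H)=\E[(L-\E[L])H]-\E[L-\E[L]]\,\E[H]$. The subtracted term vanishes because $\E[L-\E[L]]=0$, so the identity holds once each expectation exists.

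The only real obstacle is integrability, specifically justifying that $H L$ is integrable so that the splitting and the covariance are well defined. I would bound
\[
  \|L\|\le\|A\|_{\mathrm{op}}\|\bm\tau\|_2,
\]
so that $|H|\,\|L-\E[L]\|\le \|A\|_{\mathrm{op}}|H|\,\|\bm\tau\|_2+\|\E[L]\|\,|H|$. Both terms are integrable by the two assumed moment conditions, and $\E[L]$ is finite by the finite second moments.

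For the same reason, $H\cdot\E[\tau_d\mid L]$ is integrable, which legitimizes the conditioning step. Here I would note that $|\tau_d H|$ is integrable, so $\E[\tau_d H\mid L]=H\,\E[\tau_d\mid L]$ by the "taking out what is known" property, with both sides integrable. Assembling these pieces gives the stated formula for every type $d$.
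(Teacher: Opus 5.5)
Your proposal is correct and matches the paper's proof: iterated expectations with $H$ taken out as $L$-measurable, substitution of the affine conditional mean from \Cref{lem:elliptical-conditional-mean-singular-appendix}, and the identification $\E[(L-\E[L])H]=\cov(L,H)$. The explicit integrability check via $\|L\|\le\|A\|_{\mathrm{op}}\|\bm\tau\|_2$ is a detail the paper leaves implicit.
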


\begin{proof}
  By \Cref{lem:elliptical-conditional-mean-singular-appendix},
  \[
    \E[\bm\tau\mid L]
    =
    \bm\mu
    +
    \cov(\bm\tau,L)\var(L)^\dagger(L-\E[L])
    \qquad\text{a.s.}
  \]
  Taking the \(d\)-th coordinate and using that \(H\) is measurable with
  respect to \(L\),
  \[
    \begin{aligned}
      \E[\tau_d H]
      &=
      \E\!\left[H\,\E[\tau_d\mid L]\right] \\
      &=
      \E[H]\,\mu_d
      +
      \cov(\tau_d,L)^\top\var(L)^\dagger
      \E\!\left[(L-\E[L])H\right] \\
      &=
      \E[H]\,\mu_d
      +
      \cov(\tau_d,L)^\top\var(L)^\dagger\cov(L,H).
    \end{aligned}
  \]
\end{proof}

We now use this result to decompose the small-budget indices for the two-stage success objective and the pilot (two-stage) impact objective into more interpretable components.
We begin with the two-stage success objective.

\begin{proposition}[Two-stage success coefficient under an elliptical prior]
  \label{prop:two-stage-success-elliptical-appendix}
  Assume the prior distribution of \(\bm\tau\) is elliptical with finite second moments.
  Define
  \[
    W_2^{\twosuccess}
    \defeq
    \begin{cases}
      \dfrac{\cov(\ATE_2,p_2(\ATE_2))}{\var(\ATE_2)}, & \text{if }\var(\ATE_2)>0,\\[8pt]
      0, & \text{if }\var(\ATE_2)=0.
    \end{cases}
  \]
  Then, for every type \(d\),
  \[
    \theta_d^{\twosuccess}
    =
    \P(\Test_2=1)\,\mu_d
    +
    W_2^{\twosuccess}\,\cov(\tau_d,\ATE_2),
  \]
  and consequently
  \[
    \pi_d^{\twosuccess}
    =
    \frac{
      \P(\Test_2=1)\,\E[\tau_d]
      +
      W_2^{\twosuccess}\,\cov(\tau_d,\ATE_2)
    }{\sqrt{c_d}}.
  \]
  Moreover, if \(\var(\ATE_2)>0\), then \(W_2^{\twosuccess}>0\).
\end{proposition}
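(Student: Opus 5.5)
The plan is to apply \Cref{lem:elliptical-weighted-linear-summary-appendix} with the scalar linear summary $L=\ATE_2=\bm s_2^\top\bm\tau/n_2$, i.e.\ $A=\bm s_2/n_2\in\R^{D\times 1}$, and with $H=p_2(\ATE_2)$. By \Cref{prop:first-stage-coefficient-formulas-appendix}, $\theta_d^{\twosuccess}=\E[\tau_d\,p_2(\ATE_2)]$. Since $H$ is a Borel function of $L$ taking values in $(0,1)$, we have $\E[|H|]\le 1$ and $\E[|H|\,\|\bm\tau\|_2]\le\E[\|\bm\tau\|_2]<\infty$ by the finite second moments. The lemma therefore gives
\[
  \theta_d^{\twosuccess}
  =
  \E[p_2(\ATE_2)]\,\mu_d+\cov(\tau_d,\ATE_2)\,\var(\ATE_2)^\dagger\,\cov(\ATE_2,p_2(\ATE_2)).
\]
Iterated expectations give $\E[p_2(\ATE_2)]=\P(\Test_2=1)$. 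For a scalar, $\var(\ATE_2)^\dagger=1/\var(\ATE_2)$ when the variance is positive and $0$ otherwise. This matches the two cases in the definition of $W_2^{\twosuccess}$. Dividing by $\sqrt{c_d}$ then yields the formula for $\pi_d^{\twosuccess}$.

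For positivity when $\var(\ATE_2)>0$, it suffices to show $\cov(\ATE_2,p_2(\ATE_2))>0$. Write $X=\ATE_2$ and $f=p_2$, which is strictly increasing because $\Phi$ is strictly increasing and $\sqrt{n_2}>0$. Take $X'$ to be an independent copy of $X$. Then
\[
  \cov(X,f(X))=\tfrac12\,\E\big[(X-X')(f(X)-f(X'))\big].
\]
The integrand is nonnegative, and it is strictly positive whenever $X\neq X'$. Since $X$ is nondegenerate, $\P(X\neq X')>0$, so the covariance is strictly positive. Integrability holds because $f$ is bounded and $X$ has a finite first moment.

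I expect no substantial obstacle. The only points needing care are checking the integrability hypotheses of \Cref{lem:elliptical-weighted-linear-summary-appendix} and handling the degenerate case through the Moore--Penrose convention. In that case $\cov(\tau_d,\ATE_2)=0$ as well, so the second term vanishes regardless of the weight assigned.
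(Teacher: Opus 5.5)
Your proposal is correct and follows essentially the same route as the paper: apply \Cref{lem:elliptical-weighted-linear-summary-appendix} with $L=\ATE_2$ and $H=p_2(\ATE_2)$, then split the scalar Moore--Penrose inverse into the two cases, and prove $W_2^{\twosuccess}>0$ via the independent-copy identity $2\cov(X,p_2(X))=\E[(X-X')(p_2(X)-p_2(X'))]$ together with strict monotonicity of $p_2$. Your explicit checks of the lemma's integrability hypotheses and of the degenerate case are a bit more detailed than the paper's, but the argument is the same.
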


\begin{proof}
  Apply \Cref{lem:elliptical-weighted-linear-summary-appendix} with \(L=\ATE_2\) and \(H=p_2(\ATE_2)\).
  Since \(\E[p_2(\ATE_2)]=\P(\Test_2=1)\), this gives the displayed decomposition.
  When \(\var(\ATE_2)=0\), the covariance term is identically zero and the displayed definition sets \(W_2^{\twosuccess}=0\).
  When \(\var(\ATE_2)>0\), the generalized inverse reduces to ordinary division, giving the displayed ratio.

  It remains to prove positivity in the nondegenerate case.
  Let \(X\) and \(X'\) be independent copies of \(\ATE_2\).
  Since
  \[
    p_2(x)=\Phi(\sqrt{n_2}x-z_{1-\alpha_2})
  \]
  is strictly increasing,
  \[
    (X-X')\bigl(p_2(X)-p_2(X')\bigr)\ge0
    \qquad\text{a.s.},
  \]
  and the inequality is strict on \(\{X\ne X'\}\).
  If \(\var(\ATE_2)>0\), then \(\P(X\ne X')>0\).
  Therefore
  \[
    2\cov(\ATE_2,p_2(\ATE_2))
    =
    \E\!\left[(X-X')\bigl(p_2(X)-p_2(X')\bigr)\right]
    >0.
  \]
  Dividing by \(\var(\ATE_2)>0\) proves \(W_2^{\twosuccess}>0\).
\end{proof}

Now we turn to the main result of this section, the decomposition of the small-budget index for the pilot (two-stage) impact objective.

\refstepcounter{proposition}
\begin{linkedresult}{proof}{prop:marginal-impact-interpretation}
  \label{prop:marginal-impact-interpretation-appendix}
  Assume the prior distribution of \(\bm\tau\) is elliptical with finite second moments.
  Then there exist scalars \(W_1,W_2,W_3\in\R\), not depending on \(d\), such that, for every type \(d\),
  \[
    \theta_d
    =
    W_1\,\mu_d
    +
    W_2\,\cov(\tau_d,\ATE_2)
    +
    W_3\,\cov(\tau_d,\ATE_3).
  \]
  Consequently,
  \[
    \idx_d
    =
    \frac{
      W_1\,\E[\tau_d]
      +
      W_2\,\cov(\tau_d,\ATE_2)
      +
      W_3\,\cov(\tau_d,\ATE_3)
    }{\sqrt{c_d}}.
  \]

  If additionally \(\bm s_2\propto\bm s_3\), \(\E[\ATE_3]\ge0\), and \(\var(\ATE_3)>0\), then
  \[
    \theta_d
    \propto
    \overline W_1\,\mu_d+
    \cov(\tau_d,\ATE_3)
  \]
  for some \(\overline W_1>0\).
  Equivalently, the small-budget index is a positive multiple of
  \[
    \idx_d
    \propto
    \frac{
      \overline W_1\,\E[\tau_d]
      +
      \cov(\tau_d,\ATE_3)
    }{\sqrt{c_d}},
    \qquad
    \overline W_1>0.
  \]
\end{linkedresult}

\begin{proof}
  Let
  \[
    L\defeq
    \begin{pmatrix}
      \ATE_2\\
      \ATE_3
    \end{pmatrix},
    \qquad
    H\defeq p_2(\ATE_2)\ATE_3.
  \]
  Since \(0<p_2(\ATE_2)<1\) and \(\ATE_3\) is a linear summary of \(\bm\tau\), finite second moments imply \(\E[|H|\,\|\bm\tau\|_2]<\infty\).
  Applying \Cref{lem:elliptical-weighted-linear-summary-appendix} gives
  \[
    \theta_d
    =
    \E[\tau_dH]
    =
    \E[H] \, \mu_d
    +
    \cov(\tau_d,L)^\top\var(L)^\dagger\cov(L,H).
  \]
  Define
  \[
    W_1\defeq \E[H],
    \qquad
    \begin{pmatrix}
      W_2\\
      W_3
    \end{pmatrix}
    \defeq
    \var(L)^\dagger\cov(L,H).
  \]
  Since
  \[
    \cov(\tau_d,L)
    =
    \begin{pmatrix}
      \cov(\tau_d,\ATE_2)\\
      \cov(\tau_d,\ATE_3)
    \end{pmatrix},
  \]
  this is exactly
  \[
    \theta_d
    =
    W_1\mu_d
    +
    W_2\cov(\tau_d,\ATE_2)
    +
    W_3\cov(\tau_d,\ATE_3).
  \]
  Dividing by \(\sqrt{c_d}\) gives the displayed index decomposition.

  It remains to prove the representative-follow-up reduction.
  Suppose \(\bm s_2\propto\bm s_3\), which implies \(\ATE_2=\ATE_3\) a.s.
  Write their common value as
  \[
    X\defeq \ATE_3.
  \]
  Then
  \[
    \theta_d
    =
    \E[\tau_d p_2(X)X].
  \]
  Since \(\var(X)=\var(\ATE_3)>0\), applying \Cref{lem:elliptical-weighted-linear-summary-appendix} with \(L=X\) and \(H=p_2(X)X\) gives
  \[
    \theta_d
    =
    A_0\mu_d+B_0\cov(\tau_d,X),
  \]
  where
  \[
    A_0\defeq \E[p_2(X)X],
    \qquad
    B_0\defeq \frac{\cov(X,p_2(X)X)}{\var(X)}.
  \]
  We show that \(A_0>0\) and \(B_0>0\) when \(\E[X]\ge0\).

  Let \(m\defeq\E[X]\).
  As in the proof of \Cref{prop:two-stage-success-elliptical-appendix}, strict monotonicity of \(p_2\) and \(\var(X)>0\) imply
  \[
    \cov(X,p_2(X))>0.
  \]
  Hence
  \[
    A_0
    =
    \E[p_2(X)]\,m+
    \cov(X,p_2(X))
    >0
  \]
  whenever \(m\ge0\).
  Also,
  \begin{align*}
    \cov(X,p_2(X)X)
    &=
    \E[(X-m)p_2(X)X] \\
    &=
    m\,\cov(X,p_2(X))
    +
    \E[p_2(X)(X-m)^2]
    >0,
  \end{align*}
  because \(m\ge0\), \(\cov(X,p_2(X))>0\), \(p_2(X)>0\) a.s., and \(\var(X)>0\).
  Thus \(B_0>0\).
  Therefore
  \[
    \theta_d
    =
    A_0\mu_d+B_0\cov(\tau_d,X)
    \propto
    \frac{A_0}{B_0}\mu_d+
    \cov(\tau_d,X),
  \]
  with \(\overline W_1\defeq A_0/B_0>0\).
  Substituting back \(X=\ATE_3\) proves the claimed special case.
\end{proof}

\begin{corollary}[Named small-budget coefficient decompositions under an elliptical prior]
  \label{cor:named-objective-elliptical-decompositions-appendix}
  Assume the prior distribution of \(\bm\tau\) is elliptical with finite second moments.
  Then, for every type \(d\), the four named small-budget coefficients satisfy
  \begin{align*}
    \theta_d^{\singlesuccess}
    &=
    \mu_d, \\
    \theta_d^{\twosuccess}
    &=
    \P(\Test_2=1)\,\mu_d
    +
    W_2^{\twosuccess}\,\cov(\tau_d,\ATE_2), \\
    \theta_d^{\singleimpact}
    &=
    \E[\ATE_3] \, \mu_d+
    \cov(\tau_d,\ATE_3), \\
    \theta_d^{\twoimpact}
    &=
    W_1\,\mu_d
    +
    W_2\,\cov(\tau_d,\ATE_2)
    +
    W_3\,\cov(\tau_d,\ATE_3)
  \end{align*}
  for some scalars \(W_1,W_2,W_3\in\R\), not depending on \(d\).
  The coefficient \(W_2^{\twosuccess}\) is defined in \Cref{prop:two-stage-success-elliptical-appendix}; in particular, \(W_2^{\twosuccess}>0\) whenever \(\var(\ATE_2)>0\).
\end{corollary}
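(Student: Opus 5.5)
The statement to prove is \Cref{cor:named-objective-elliptical-decompositions-appendix}. It collects four formulas that are essentially already established, so the plan is to treat each named coefficient in turn and cite the result that gives it.

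\emph{Single-stage coefficients.} Assume only finite second moments. \Cref{prop:first-stage-coefficient-formulas-appendix} holds for a general prior, so it gives $\theta_d^{\singlesuccess}=\E[\tau_d]$ and $\theta_d^{\singleimpact}=\E[\ATE_3]\E[\tau_d]+\cov(\tau_d,\ATE_3)$ directly. By Theorem 2.17 of~\cite{FangKo90}, the mean of an elliptical distribution with finite second moments equals its location parameter. Hence $\E[\tau_d]=\mu_d$, and substituting this yields the first and third displayed identities.

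\emph{Two-stage success coefficient.} This is exactly the decomposition in \Cref{prop:two-stage-success-elliptical-appendix}, which was obtained from \Cref{lem:elliptical-weighted-linear-summary-appendix} with $L=\ATE_2$ and $H=p_2(\ATE_2)$. That proposition also supplies the definition of $W_2^{\twosuccess}$ and shows $W_2^{\twosuccess}>0$ whenever $\var(\ATE_2)>0$. So both the second identity and the closing positivity remark are simply quoted from it.

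\emph{Two-stage impact coefficient.} The first part of the proof of \Cref{prop:marginal-impact-interpretation-appendix} applies \Cref{lem:elliptical-weighted-linear-summary-appendix} with $L=(\ATE_2,\ATE_3)^\top$ and $H=p_2(\ATE_2)\ATE_3$. It sets $W_1=\E[H]$ and $(W_2,W_3)^\top=\var(L)^\dagger\cov(L,H)$; none of these depend on $d$. This gives the fourth identity.

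The only place needing care is the moment condition of \Cref{lem:elliptical-weighted-linear-summary-appendix}. Since $0<p_2<1$ and $|\ATE_3|\le C\|\bm\tau\|_2$, we have $|H|\,\|\bm\tau\|_2\le C\|\bm\tau\|_2^2$, so finite second moments suffice. That proposition's proof already verified this, so there is no real obstacle. The result is a bookkeeping corollary: the four identities follow from the cited propositions with $\E[\tau_d]$ replaced by $\mu_d$.
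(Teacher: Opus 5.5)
Your proposal is correct and matches the paper's proof. The paper likewise cites the general-prior coefficient formulas for the two single-stage coefficients, the two-stage success proposition for the second identity and the positivity of $W_2^{\twosuccess}$, and the two-stage impact decomposition for the fourth identity. Your extra details are correct and are the same ones the paper handles in the surrounding text and in the cited proofs: the substitution $\E[\tau_d]=\mu_d$ via the elliptical mean identity, and the moment check $|H|\,\|\bm\tau\|_2\le C\|\bm\tau\|_2^2$.
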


\begin{proof}
  The single-stage success and single-stage impact formulas are the general-prior identities in \Cref{prop:first-stage-coefficient-formulas-appendix}.
  The two-stage success formula is \Cref{prop:two-stage-success-elliptical-appendix}.
  The two-stage impact formula is \linkedCref{prop:marginal-impact-interpretation-appendix}.
\end{proof}

%% file: paper/A5_large_budget_two_stage_impact_general_prior.tex
\section{Large-Budget Theory for Weighted Target-Impact Objectives}
\label{sec:large-budget-general-prior-appendix}

We now turn to the large-budget behavior of the paper's main objective $V=V_{\twoimpact}$.
As with the small-budget results, we provide proofs in a more general form and then specialize it to the named objectives in the paper.
The relevant large-budget class consists of weighted target-impact objectives
\[
  V_Z(\bm s_1)
  \defeq
  \E[\Test_1\cdot Z\cdot \ATE_3],
\]
where $Z$ is a downstream non-negative weight that is non-adaptive with respect to the pilot-stage sampling noise, meaning $Z\ind \varepsilon_1\mid\bm\tau$.
The paper's main objective $V_\twoimpact$ takes this form for $Z=\Test_2$, and the single-stage objective $V_\singleimpact$ takes this form for $Z=1$.

Throughout this section, we will use the unit-investment design set
\[
  \overline{\mathcal S}
  =
  \{\overline{\bm s}_1\in\R_{\ge0}^D:\bm c^\top\overline{\bm s}_1=1\}
\]
and the representative unit-investment design
\[
  \overline{\bm s}_1^{\,\infty}
  \defeq
  \frac{\bm s_3}{\bm c^\top\bm s_3}.
\]

We will rely on three prior conditions for the proofs.
First, $\ATE_3$ has a finite second moment and is \emph{non-degenerate}, i.e. has non-zero variance.
Second, the prior satisfies what we call a \emph{uniform boundary anti-concentration} condition.
With
\[
  \xi(\overline{\bm s}_1,\bm\tau)
  \defeq
  \frac{\overline{\bm s}_1^\top\bm\tau}
  {\sqrt{\bm1^\top\overline{\bm s}_1}},
\]
we assume
\begin{equation}
  \omega(\delta)
  \defeq
  \sup_{\overline{\bm s}_1\in\overline{\mathcal S}}
  \P\!\left(|\xi(\overline{\bm s}_1,\bm\tau)|\le \delta\right)
  \xrightarrow[\delta\downarrow0]{}0.
  \label{eq:uniform-boundary-anti-concentration-large-budget-appendix}
\end{equation}
Third, the prior satisfies the \emph{sign-separation condition}
\begin{equation}
  \forall \overline{\bm s}_1\in\overline{\mathcal{S}}\text{ not proportional to }\bm s_3:\qquad
  \P\!\big((\overline{\bm s}_1^\top\bm\tau)(\bm s_3^\top\bm\tau)<0\big)>0.
  \label{eq:ordinary-sign-separation-large-budget-appendix}
\end{equation}
Distributions with a Lebesgue density positive on an open ball containing the origin satisfy the latter two regularity conditions, and they also imply positive variance of $\ATE_3$ whenever $\E[\ATE_3^2]<\infty$; see \Cref{prop:density-large-budget-regularity-appendix}.

For any non-adaptive downstream non-negative weight $Z$, write
\[
  m_Z(\bm\tau)
  \defeq
  \E[Z\mid\bm\tau].
\]
We define the corresponding follow-up, oracle, and noise-free values as
\begin{align*}
  V_{Z,\followup}
  &\defeq
  \E[Z\cdot \ATE_3],\\
  V_{Z,\oracle}
  &\defeq
  \E[\indic{\ATE_3>0}\cdot Z\cdot \ATE_3],\\
  V_{Z,\noisefree}(\overline{\bm s}_1)
  &\defeq
  \E[\indic{\overline{\bm s}_1^\top\bm\tau>0}\cdot Z\cdot \ATE_3].
\end{align*}
For a budget $b>0$, define the budget-constrained value and solution set
\begin{align*}
  V_Z^\star(b)
  &\defeq
  \sup_{\substack{\bm s_1\in\R_{\geq 0}^{D} \setminus \{\bm{0}\}\\ \bm c^\top\bm s_1\le b}}
  V_Z(\bm s_1),\\
  \mathcal S_{1,Z}^\star(b)
  &\defeq
  \arg\max_{\substack{\bm s_1\in\R_{\geq 0}^{D} \setminus \{\bm{0}\}\\ \bm c^\top\bm s_1\le b}}
  V_Z(\bm s_1).
\end{align*}
When $Z=\Test_2$, these reduce to the paper's main value $V^\star(b)$ and solution set $\mathcal S_1^\star(b)$.

The weighted objective requires the following two analogues of nondegeneracy and sign separation.
First, we assume \emph{weighted nondegeneracy}:
\begin{equation}
  \E[m_Z(\bm\tau)|\ATE_3|]>0.
  \label{eq:weighted-nondegeneracy}
\end{equation}
Second, we assume \emph{weighted sign separation}:
\begin{equation}
  \forall \overline{\bm s}_1\in\overline{\mathcal{S}}\text{ not proportional to }\bm s_3:\qquad
  \E\!\left[
    m_Z(\bm\tau)|\ATE_3|
    \indic{(\overline{\bm s}_1^\top\bm\tau)(\bm s_3^\top\bm\tau)<0}
  \right]>0.
  \label{eq:weighted-sign-separation}
\end{equation}
These conditions are automatic for the paper's objective once the ordinary sign-separation and non-zero-variance conditions hold, because $\E[\Test_2\mid\bm\tau]=p_2(\ATE_2)>0$ almost surely; this verification is given in \Cref{lem:test-two-satisfies-weighted-conditions-appendix}.

The main generic result of this section is the following theorem.
The rest of the section proves it and then specializes it to $Z=\Test_2$.
The proof relies on three key facts: finite-investment pilots are strictly below the oracle value, large-investment pilots converge uniformly to a noise-free pilot rule, and the unique noise-free design that attains the oracle value is the representative design.

\begin{theorem}[Large-budget limit for weighted target-impact objectives]
  \label{thm:large-budget-weighted-target-impact-appendix}
  Let $Z$ be non-adaptive and suppose $0\le Z\le M<\infty$ almost surely.
  Suppose $\E[\ATE_3^2]<\infty$, the uniform boundary anti-concentration condition \Cref{eq:uniform-boundary-anti-concentration-large-budget-appendix} holds, and the weighted non-degeneracy and weighted sign-separation conditions \Cref{eq:weighted-nondegeneracy,eq:weighted-sign-separation} hold.
  Then $\mathcal S_{1,Z}^\star(b)\neq\emptyset$ for all $b>0$ and:
  \begin{enumerate}
    \item[\rm (i)] $V_Z^\star(b)\le V_{Z,\oracle}$ for all $b>0$.
    \item[\rm (ii)] $V_Z^\star(b)\to V_{Z,\oracle}$ as $b\to\infty$.
    \item[\rm (iii)] The optimal investment diverges,
      \[
        \inf_{\bm s_1^\star\in\mathcal S_{1,Z}^\star(b)} \bm c^\top\bm s_1^\star
        \to\infty
        \qquad\text{as }b\to\infty.
      \]
    \item[\rm (iv)] The set of unit-investment normalized optimal designs converges to the unit-investment representative design,
      \[
        \sup_{\bm s_1^\star\in\mathcal S_{1,Z}^\star(b)}
        \left\|
        \frac{\bm s_1^\star}{\bm c^\top\bm s_1^\star}
        -
        \overline{\bm s}_1^{\,\infty}
        \right\|
        \to0
        \qquad\text{as }b\to\infty.
      \]
  \end{enumerate}
\end{theorem}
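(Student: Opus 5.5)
The plan is to work with the reparametrization $\bm s_1=\invst\,\overline{\bm s}_1$ and the conditional representation. By non-adaptivity,
\[
  V_Z(\invst\overline{\bm s}_1)=\E\!\left[\Phi\!\big(\sqrt{\invst}\,\xi(\overline{\bm s}_1,\bm\tau)-z_{1-\alpha_1}\big)\,m_Z(\bm\tau)\,\ATE_3\right].
\]
Denote this by $G(\invst,\overline{\bm s}_1)$.

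\emph{Existence.} Existence follows as in \Cref{prop:optimal-solution-exists}. $G$ is continuous on $(0,b]\times\overline{\mathcal S}$ by dominated convergence, with envelope $M|\ATE_3|$. As $\invst\downarrow 0$, $G\to\alpha_1\E[Z\ATE_3]$ uniformly. It then suffices to exhibit a feasible point beating this limit. Write $\E[Z\ATE_3]=P-N$ with
\[
  P=\E[m_Z\ATE_3^+],\qquad N=\E[m_Z\ATE_3^-].
\]
Any representative design with small investment gives
\[
  \E\big[(p(\invst)-\alpha_1)m_Z\ATE_3\big],
\]
where $p(\invst)-\alpha_1$ is monotone in $\ATE_3$ and has the sign of $\ATE_3$. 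This is strictly positive by weighted nondegeneracy. So a maximizer exists on a compact set bounded away from $\invst=0$.

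\emph{(i) Oracle upper bound.} For any design, $\Phi(\cdot)\,m_Z\,\ATE_3\le \indic{\ATE_3>0}\,m_Z\,\ATE_3$ pointwise, since $\Phi\in(0,1)$. This gives $V_Z\le V_{Z,\oracle}$. I would also record strict inequality for finite $\invst$: the gap is
\[
  \E\!\left[(1-\Phi)\,m_Z\ATE_3^+ + \Phi\, m_Z\ATE_3^-\right]>0
\]
by nondegeneracy. I would make this gap uniform over $\overline{\bm s}_1$ for $\invst\le K$, using continuity and compactness of $[\epsilon,K]\times\overline{\mathcal S}$ together with the small-$\invst$ limit above.

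\emph{(ii) Uniform convergence to the noise-free value.} Bound
\[
  \big|G(\invst,\overline{\bm s}_1)-V_{Z,\noisefree}(\overline{\bm s}_1)\big|
  \le M\,\E\!\left[|\ATE_3|\,\big|\Phi(\sqrt{\invst}\xi-z)-\indic{\xi>0}\big|\right].
\]
Split on $\{|\xi|\le\delta\}$ and its complement. On the complement the integrand is at most $\Phi(z-\sqrt{\invst}\delta)$, which tends to $0$. On $\{|\xi|\le\delta\}$, use Cauchy–Schwarz: the contribution is at most $\E[\ATE_3^2]^{1/2}\,\omega(\delta)^{1/2}$. Taking $\invst\to\infty$ and then $\delta\to 0$ gives
\[
  \sup_{\overline{\bm s}_1}\big|G-V_{Z,\noisefree}\big|\to 0.
\]
Since $V_{Z,\noisefree}(\overline{\bm s}_1^{\,\infty})=V_{Z,\oracle}$, this yields (ii).

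\emph{(iii) and (iv).} Compute
\[
  V_{Z,\oracle}-V_{Z,\noisefree}(\overline{\bm s}_1)=\E\!\left[m_Z|\ATE_3|\,\indic{\text{signs disagree}}\right].
\]
This is lower semicontinuous in $\overline{\bm s}_1$ by Fatou, at least off $\{\overline{\bm s}_1^\top\bm\tau=0\}$, which is null by anti-concentration. It is strictly positive off $\overline{\bm s}_1^{\,\infty}$ by weighted sign separation. By compactness it is therefore bounded below by some $\eta(r)>0$ on $\{\|\overline{\bm s}_1-\overline{\bm s}_1^{\,\infty}\|\ge r\}$. Combined with the uniform convergence in (ii), any optimizer with large investment must be within $r$ of $\overline{\bm s}_1^{\,\infty}$. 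The uniform finite-$\invst$ gap from (i) forces the optimal investment to diverge, which is (iii). Together these give (iv).

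\emph{Main obstacle.} The main obstacle is the lower semicontinuity of the sign-disagreement functional at boundary points. This needs the anti-concentration condition to rule out mass on the hyperplane. I would handle it through $\omega(\delta)$, approximating the indicator from below by the continuous function $\indic{\xi\cdot\ATE_3<-\delta'}$.
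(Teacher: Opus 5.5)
Your proposal is correct and follows the paper's proof essentially step for step. The shared steps are: existence by beating the $\invst\downarrow 0$ limit $\alpha_1\E[Z\,\ATE_3]$ with the representative design; the pointwise oracle bound; uniform convergence to $V_{Z,\noisefree}$ via the $\{|\xi|\le\delta\}$ split with Cauchy--Schwarz and $\omega(\delta)$; a strict finite-investment gap forcing $\invst^\star\to\infty$; and transfer of the unique noise-free maximizer $\overline{\bm s}_1^{\,\infty}$. You do that last step by hand via lower semicontinuity of the sign-disagreement gap, whereas the paper proves continuity of $V_{Z,\noisefree}$ and cites an argmax-transfer lemma.

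Two small remarks. First, your tail bound $\Phi(z_{1-\alpha_1}-\sqrt{\invst}\,\delta)$ only covers the $\xi<-\delta$ side when $\alpha_1\le 1/2$, so take the maximum with $\Phi(-z_{1-\alpha_1}-\sqrt{\invst}\,\delta)$ as the paper does. Second, your ``main obstacle'' is not really one: $\overline{\bm s}_1\mapsto\indic{(\overline{\bm s}_1^\top\bm\tau)(\bm s_3^\top\bm\tau)<0}$ is already lower semicontinuous for each fixed $\bm\tau$, being the indicator of an open set, so Fatou applies directly without any approximation.
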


\subsection*{Proof ingredients for the generic theorem}

\begin{lemma}[Boundary and noise-free limits]
  \label{prop:limiting-impact-values}
  Let $Z$ be non-adaptive and suppose $\E[|Z \cdot \ATE_3|]<\infty$.
  Then $V_Z$ is continuous on $\R_{\geq 0}^{D} \setminus \{\bm{0}\}$ and
  \[
    \lim_{\bm s_1\to\bm0}V_Z(\bm s_1)
    =
    \alpha_1\,V_{Z,\followup}.
  \]
  If, in addition, the uniform boundary anti-concentration condition \Cref{eq:uniform-boundary-anti-concentration-large-budget-appendix} holds, then for every $\overline{\bm s}_1\in\overline{\mathcal{S}}$,
  \[
    \lim_{\invst\to\infty}V_Z(\invst\,\overline{\bm s}_1)
    =
    V_{Z,\noisefree}(\overline{\bm s}_1).
  \]
  Under the same condition, $V_{Z,\noisefree}$ is continuous on $\overline{\mathcal{S}}$.
\end{lemma}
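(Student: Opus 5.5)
The plan is to write everything conditionally on $\bm\tau$ and use dominated convergence throughout. By non-adaptivity and iterated expectations,
\[
  V_Z(\bm s_1)=\E\!\left[\Phi\!\Big(\tfrac{\bm s_1^\top\bm\tau}{\sqrt{\bm 1^\top\bm s_1}}-z_{1-\alpha_1}\Big)\, m_Z(\bm\tau)\,\ATE_3\right],
\]
where the inner conditional expectation of $Z\cdot\ATE_3$ is taken given $\bm\tau$. To be safe I will work with $\E[\Phi(\cdot)\,Z\,\ATE_3]$ directly, with $\Phi(\cdot)=\P(\Test_1=1\mid\bm\tau)$. The integrand is dominated by $|Z\cdot\ATE_3|$, which is integrable by assumption.

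\emph{Continuity.} For fixed $\bm\tau$, the map $\bm s_1\mapsto \bm s_1^\top\bm\tau/\sqrt{\bm 1^\top\bm s_1}$ is continuous on $\R_{\ge0}^D\setminus\{\bm 0\}$, because $\bm 1^\top\bm s_1>0$ there. Hence the integrand is continuous, and dominated convergence gives continuity of $V_Z$.

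\emph{Boundary limit.} I will bound the non-centrality by Cauchy--Schwarz:
\[
  |\bm s_1^\top\bm\tau|/\sqrt{\bm 1^\top\bm s_1}\le \|\bm s_1\|_2\|\bm\tau\|_2/\sqrt{\|\bm s_1\|_1}\le \sqrt{\|\bm s_1\|_1}\,\|\bm\tau\|_2,
\]
using $\|\bm s_1\|_2\le\|\bm s_1\|_1$ for nonnegative vectors. So along any sequence $\bm s_1\to\bm 0$ the non-centrality tends to $0$ pointwise in $\bm\tau$. The integrand then converges to $\Phi(-z_{1-\alpha_1})Z\ATE_3=\alpha_1 Z\ATE_3$. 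Dominated convergence gives the limit $\alpha_1 V_{Z,\followup}$, and because the domination is uniform this holds for every sequence, hence as a genuine limit.

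\emph{Noise-free limit.} Fix $\overline{\bm s}_1\in\overline{\mathcal S}$. Then
\[
  V_Z(\invst\overline{\bm s}_1)=\E[\Phi(\sqrt{\invst}\,\xi(\overline{\bm s}_1,\bm\tau)-z_{1-\alpha_1})\,Z\,\ATE_3].
\]
As $\invst\to\infty$ the integrand tends to $\indic{\xi>0}Z\ATE_3$ on $\{\xi\neq0\}$, and $\{\xi>0\}=\{\overline{\bm s}_1^\top\bm\tau>0\}$. The anti-concentration condition gives $\P(\xi=0)\le\omega(\delta)$ for every $\delta>0$, so $\P(\xi=0)=0$. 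Dominated convergence then yields $V_{Z,\noisefree}(\overline{\bm s}_1)$.

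\emph{Continuity of $V_{Z,\noisefree}$.} This is the main obstacle, since the indicator is discontinuous. Take $\overline{\bm s}_1^{(k)}\to\overline{\bm s}_1$ in $\overline{\mathcal S}$. Pointwise, $\indic{\overline{\bm s}_1^{(k)\top}\bm\tau>0}\to\indic{\overline{\bm s}_1^\top\bm\tau>0}$ on the event $\{\overline{\bm s}_1^\top\bm\tau\neq0\}$. That event has probability one by the argument above applied at the limit point $\overline{\bm s}_1$; pointwise anti-concentration there suffices. Dominated convergence by $|Z\ATE_3|$ again finishes the proof, so the uniformity in $\omega$ is not even needed here. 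It is needed only later for uniform approximation.
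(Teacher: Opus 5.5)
Your proposal is correct and follows essentially the same route as the paper. It uses non-adaptivity to write $V_Z(\bm s_1)=\E[\Phi(\cdot)\,Z\,\ATE_3]$, then applies dominated convergence with envelope $|Z\cdot\ATE_3|$ four times: for continuity, for the $\bm s_1\to\bm 0$ limit, for the $\invst\to\infty$ limit, and, via $\P(\xi=0)=0$ from anti-concentration, for continuity of $V_{Z,\noisefree}$. The differences are cosmetic: your Cauchy--Schwarz bound $\sqrt{\|\bm s_1\|_1}\,\|\bm\tau\|_2$ replaces the paper's $\sqrt{n_1}\,\|\bm\tau\|_1$, and your remark that only the pointwise consequence of anti-concentration is needed here matches what the paper's argument actually uses.
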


\begin{proof}
  \emph{Step 1: Continuity of $V_Z$ on $\R_{\geq 0}^{D} \setminus \{\bm{0}\}$ and the small-sample limit.}
  For $\bm s_1\in\R_{\geq 0}^{D} \setminus \{\bm{0}\}$, write $n_1=\bm1^\top\bm s_1$ and
  \[
    q(\bm s_1,\bm\tau)
    \defeq
    \Phi\!\left(
      \frac{\bm s_1^\top\bm\tau}{\sqrt{n_1}}
      -
      z_{1-\alpha_1}
    \right).
  \]
  Since $q(\bm s_1,\bm\tau)=\P(\Test_1=1\mid\bm\tau)$ and $Z$ is independent of the pilot-stage sampling noise conditional on $\bm\tau$,
  \[
    V_Z(\bm s_1)
    =
    \E[q(\bm s_1,\bm\tau)\cdot Z\cdot \ATE_3].
  \]
  Fix any sequence $\bm s_1^{(k)}\to \bm s_1$ in $\R_{\geq 0}^{D} \setminus \{\bm{0}\}$.
  Then $q(\bm s_1^{(k)},\bm\tau)\to q(\bm s_1,\bm\tau)$ pointwise in $\bm\tau$ by continuity of the Gaussian distribution function.
  Since $0\le q(\bm s_1^{(k)},\bm\tau)\le 1$ and $|Z \cdot \ATE_3|$ is integrable, dominated convergence gives
  \[
    V_Z(\bm s_1^{(k)})\to V_Z(\bm s_1),
  \]
  proving continuity on $\R_{\geq 0}^{D} \setminus \{\bm{0}\}$.

  If $\bm s_1\to\bm0$, then $n_1\to0$ and
  \[
    \left|\frac{\bm s_1^\top\bm\tau}{\sqrt{n_1}}\right|
    \le
    \sqrt{n_1}\,\|\bm\tau\|_1
    \to 0
    \qquad\text{for each fixed }\bm\tau,
  \]
  so $q(\bm s_1,\bm\tau)\to\Phi(-z_{1-\alpha_1})=\alpha_1$ pointwise.
  Dominated convergence therefore gives
  \[
    \lim_{\bm s_1\to\bm0}V_Z(\bm s_1)
    =
    \alpha_1\,\E[Z\cdot \ATE_3]
    =
    \alpha_1\,V_{Z,\followup}.
  \]

  \emph{Step 2: Pointwise large-investment limit.}
  Fix $\overline{\bm s}_1\in\overline{\mathcal{S}}$ and write
  \[
    q_{\invst}(\bm\tau)
    \defeq
    \Phi\!\left(
      \sqrt{\invst}\,\xi(\overline{\bm s}_1,\bm\tau)
      -
      z_{1-\alpha_1}
    \right).
  \]
  The uniform boundary anti-concentration condition implies
  \[
    \P\!\big(
      \xi(\overline{\bm s}_1,\bm\tau)=0
    \big)=0.
  \]
  Hence $q_{\invst}(\bm\tau)\to \indic{\xi(\overline{\bm s}_1,\bm\tau)>0}$ almost surely as $\invst\to\infty$.
  Since $\xi(\overline{\bm s}_1,\bm\tau)>0$ if and only if $\overline{\bm s}_1^\top\bm\tau>0$, dominated convergence gives
  \[
    \lim_{\invst\to\infty}V_Z(\invst\,\overline{\bm s}_1)
    =
    \E[\indic{\overline{\bm s}_1^\top\bm\tau>0}\cdot Z\cdot \ATE_3]
    =
    V_{Z,\noisefree}(\overline{\bm s}_1).
  \]

  \emph{Step 3: Continuity of $V_{Z,\noisefree}$ on $\overline{\mathcal{S}}$.}
  Let $\overline{\bm s}_1^{(k)}\to\overline{\bm s}_1$ in $\overline{\mathcal{S}}$.
  Then $\xi(\overline{\bm s}_1^{(k)},\bm\tau)\to\xi(\overline{\bm s}_1,\bm\tau)$ pointwise in $\bm\tau$.
  By the previous argument,
  \[
    \P\!\big(\xi(\overline{\bm s}_1,\bm\tau)=0\big)=0,
  \]
  so
  \[
    \indic{\xi(\overline{\bm s}_1^{(k)},\bm\tau)>0}
    \to
    \indic{\xi(\overline{\bm s}_1,\bm\tau)>0}
    \qquad\text{almost surely.}
  \]
  Since $|\indic{\xi>0}\cdot Z\cdot \ATE_3|\le |Z \cdot \ATE_3|$, dominated convergence gives
  \[
    V_{Z,\noisefree}(\overline{\bm s}_1^{(k)})
    \to
    V_{Z,\noisefree}(\overline{\bm s}_1),
  \]
  proving continuity on $\overline{\mathcal{S}}$.
\end{proof}

\begin{proposition}[Uniform convergence to the noise-free pilot]
  \label{prop:large-investment-expansion-weighted-appendix}
  Let $Z$ be non-adaptive and suppose $0\le Z\le M<\infty$ almost surely.
  Suppose $\E[\ATE_3^2]<\infty$ and the uniform boundary anti-concentration condition \Cref{eq:uniform-boundary-anti-concentration-large-budget-appendix} holds.
  Then as $\invst\to\infty$,
  \[
    \sup_{\overline{\bm s}_1\in\overline{\mathcal{S}}}
    \big|V_Z(\invst\,\overline{\bm s}_1)-V_{Z,\noisefree}(\overline{\bm s}_1)\big|
    \to 0.
  \]
\end{proposition}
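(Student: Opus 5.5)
We start from the representation already used in \Cref{prop:limiting-impact-values}. By non-adaptivity of $Z$, for $\overline{\bm s}_1\in\overline{\mathcal S}$ and $\invst>0$,
\[
  V_Z(\invst\,\overline{\bm s}_1)-V_{Z,\noisefree}(\overline{\bm s}_1)
  =
  \E\!\left[\Big(\Phi\big(\sqrt{\invst}\,\xi(\overline{\bm s}_1,\bm\tau)-z_{1-\alpha_1}\big)-\indic{\xi(\overline{\bm s}_1,\bm\tau)>0}\Big)\, m_Z(\bm\tau)\,\ATE_3\right].
\]
Here we condition on $\bm\tau$ inside the noise-free term as well, using $\indic{\overline{\bm s}_1^\top\bm\tau>0}=\indic{\xi>0}$. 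Write $\Delta_\invst(x)\defeq \Phi(\sqrt{\invst}x-z_{1-\alpha_1})-\indic{x>0}$. Since $0\le m_Z\le M$, the difference is bounded in absolute value by
\[
  M\,\E\big[|\Delta_\invst(\xi)|\,|\ATE_3|\big].
\]
We bound this uniformly in $\overline{\bm s}_1$ by splitting on the event $\{|\xi|\le\delta\}$ for a fixed $\delta>0$.

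\emph{Near the boundary.} On $\{|\xi|\le\delta\}$ we use $|\Delta_\invst|\le1$ and Cauchy--Schwarz:
\[
  \E\big[|\ATE_3|\indic{|\xi|\le\delta}\big]
  \le
  \sqrt{\E[\ATE_3^2]}\;\sqrt{\P(|\xi|\le\delta)}
  \le
  \sqrt{\E[\ATE_3^2]}\;\sqrt{\omega(\delta)}.
\]
This is uniform in $\overline{\bm s}_1$ by the definition of $\omega$ in \eqref{eq:uniform-boundary-anti-concentration-large-budget-appendix}. This is the one place the second moment of $\ATE_3$ is needed; a first moment alone would not give a uniform bound. I expect this step to be the main obstacle, and the anti-concentration-plus-Cauchy--Schwarz combination is the key to handling it.

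\emph{Away from the boundary.} On $\{|\xi|>\delta\}$ the error is deterministic and uniform.
\begin{itemize}
  \item If $\xi>\delta$, then $|\Delta_\invst(\xi)|=1-\Phi(\sqrt{\invst}\xi-z_{1-\alpha_1})\le 1-\Phi(\sqrt{\invst}\delta-z_{1-\alpha_1})$.
  \item If $\xi<-\delta$, then $|\Delta_\invst(\xi)|=\Phi(\sqrt{\invst}\xi-z_{1-\alpha_1})\le\Phi(-\sqrt{\invst}\delta-z_{1-\alpha_1})$.
\end{itemize}
Both bounds are at most $\eta_\invst(\delta)\defeq\Phi(z_{1-\alpha_1}-\sqrt{\invst}\delta)+\Phi(-\sqrt{\invst}\delta-z_{1-\alpha_1})$, which tends to $0$ as $\invst\to\infty$ for fixed $\delta$. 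This part therefore contributes at most $M\,\eta_\invst(\delta)\,\E|\ATE_3|$, and $\E|\ATE_3|<\infty$ follows from the second moment.

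\emph{Conclusion.} Combining the two parts,
\[
  \sup_{\overline{\bm s}_1\in\overline{\mathcal S}}\big|V_Z(\invst\,\overline{\bm s}_1)-V_{Z,\noisefree}(\overline{\bm s}_1)\big|
  \le
  M\sqrt{\E[\ATE_3^2]}\,\sqrt{\omega(\delta)}+M\,\E|\ATE_3|\,\eta_\invst(\delta).
\]
Take $\limsup_{\invst\to\infty}$ for fixed $\delta$, which removes the second term. Then let $\delta\downarrow0$, which sends the first term to $0$ by \eqref{eq:uniform-boundary-anti-concentration-large-budget-appendix}. This proves the claim.
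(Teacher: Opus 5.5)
Your proof is correct and follows the paper's argument essentially step for step: use non-adaptivity to replace $\Test_1$ by its conditional pass probability, split on $\{|\xi|\le\delta\}$, bound the boundary piece by Cauchy--Schwarz against $\omega(\delta)$ and the remainder by a Gaussian tail term (you use a sum where the paper uses a max, which makes no difference), then let $\invst\to\infty$ before $\delta\downarrow0$. One correction to your aside: a first moment would in fact suffice, because the single integrable variable $|\ATE_3|$ is uniformly integrable, so $\sup_{\P(A)\le\omega(\delta)}\E[|\ATE_3|\indic{A}]\to0$ as $\delta\downarrow0$; the second moment only makes the bound explicit.
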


\begin{proof}
  Fix $\delta>0$ and bound the tail error by
  \[
    r_{\invst}(\delta)
    \defeq
    \max\!\left\{
      \Phi(z_{1-\alpha_1}-\sqrt{\invst}\,\delta),
      \Phi(-z_{1-\alpha_1}-\sqrt{\invst}\,\delta)
    \right\}.
  \]
  For every $\overline{\bm s}_1\in\overline{\mathcal{S}}$ and every realization of $\bm\tau$, bound the approximation error between the finite and infinite investment decisions
  \[
    e_{\invst}(\overline{\bm s}_1,\bm\tau)
    \defeq
    \left|
    \Phi\!\left(
      \sqrt{\invst}\,\xi(\overline{\bm s}_1,\bm\tau)-z_{1-\alpha_1}
    \right)
    -
    \indic{\xi(\overline{\bm s}_1,\bm\tau)>0}
    \right|.
  \]
  If $|\xi(\overline{\bm s}_1,\bm\tau)|>\delta$, the error is bounded by $r_{\invst}(\delta)$; otherwise it is bounded by $1$.
  Hence
  \[
    e_{\invst}(\overline{\bm s}_1,\bm\tau)
    \le
    \indic{\big|\xi(\overline{\bm s}_1,\bm\tau)\big|\le\delta}
    +
    r_{\invst}(\delta).
  \]
  Therefore, applying non-adaptivity of $Z$,
  \begin{align*}
    \big|V_Z(\invst\,\overline{\bm s}_1)-V_{Z,\noisefree}(\overline{\bm s}_1)\big|
    &=
    \left|
    \E\!\left[
      Z\cdot \ATE_3\cdot
      \left(
        \Phi\!\left(
          \sqrt{\invst}\,\xi(\overline{\bm s}_1,\bm\tau)-z_{1-\alpha_1}
        \right)
        -
        \indic{\xi(\overline{\bm s}_1,\bm\tau)>0}
      \right)
    \right]
    \right| \\
    &\le
    \E\!\left[
      Z\cdot |\ATE_3|\cdot e_{\invst}(\overline{\bm s}_1,\bm\tau)
    \right] \\
    &\le
    \E\!\left[
      Z \cdot |\ATE_3|\cdot
      \indic{\big|\xi(\overline{\bm s}_1,\bm\tau)\big|\le\delta}
    \right]
    +
    r_{\invst}(\delta)\,\E[Z\cdot |\ATE_3|].
  \end{align*}
  Using $0\le Z\le M$ and Cauchy--Schwarz,
  \[
    \E\!\left[
      Z\cdot |\ATE_3|\cdot
      \indic{\big|\xi(\overline{\bm s}_1,\bm\tau)\big|\le\delta}
    \right]
    \le
    M\sqrt{\E[\ATE_3^2]}
    \P\!\big(\big|\xi(\overline{\bm s}_1,\bm\tau)\big|\le\delta\big)^{1/2}
    \le
    M\sqrt{\E[\ATE_3^2] \omega(\delta)}.
  \]
  Taking the supremum over $\overline{\bm s}_1\in\overline{\mathcal{S}}$ gives
  \[
    \sup_{\overline{\bm s}_1\in\overline{\mathcal{S}}}
    \big|V_Z(\invst\,\overline{\bm s}_1)-V_{Z,\noisefree}(\overline{\bm s}_1)\big|
    \le
    M\sqrt{\E[\ATE_3^2]\omega(\delta)}
    +
    r_{\invst}(\delta)\,M\E[|\ATE_3|].
  \]

  Let $\Delta_{\invst}$ denote the left-hand side of the preceding inequality, and fix any $\delta>0$.
  Since $\alpha_1\in(0,1)$, $z_{1-\alpha_1}$ is finite, and hence both $z_{1-\alpha_1}-\sqrt{\invst}\delta$ and $-z_{1-\alpha_1}-\sqrt{\invst}\delta$ tend to $-\infty$ as $\invst\to\infty$.
  Therefore $r_{\invst}(\delta)\to0$ for each fixed $\delta>0$.
  Taking $\limsup_{\invst\to\infty}$ in the preceding bound gives
  \[
    \limsup_{\invst\to\infty}\Delta_{\invst}
    \le
    M\sqrt{\E[\ATE_3^2]\omega(\delta)} .
  \]
  Since this holds for every $\delta>0$, we may let $\delta\downarrow0$; the uniform boundary anti-concentration assumption gives $\omega(\delta)\to0$, so $\limsup_{\invst\to\infty}\Delta_{\invst}\le0$.
  As $\Delta_{\invst}\ge0$, it follows that $\Delta_{\invst}\to0$.
  This proves uniform convergence.
\end{proof}

\begin{lemma}[Existence of optimal designs]
  \label{prop:optimal-solution-exists-weighted-appendix}
  Let $Z$ be non-adaptive and suppose $\E[|Z \cdot \ATE_3|]<\infty$, $m_Z(\bm\tau)\ge0$ almost surely, and \Cref{eq:weighted-nondegeneracy} holds.
  Then $\mathcal S_{1,Z}^\star(b)\neq\emptyset$ for every $b>0$.
\end{lemma}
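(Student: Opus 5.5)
The plan is to reduce existence to a compactness argument on the reparametrized feasible set
\[
(\invst,\overline{\bm s}_1)\in(0,b]\times\overline{\mathcal S}.
\]
This set fails to be compact only because $\invst=0$ is excluded. By \Cref{prop:limiting-impact-values}, $V_Z$ is continuous on $\R_{\ge0}^D\setminus\{\bm0\}$, and $V_Z(\bm s_1)\to\alpha_1 V_{Z,\followup}$ as $\bm s_1\to\bm0$. The map $(\invst,\overline{\bm s}_1)\mapsto V_Z(\invst\overline{\bm s}_1)$ therefore extends continuously to the compact set $[0,b]\times\overline{\mathcal S}$, with value $\alpha_1 V_{Z,\followup}$ on $\{0\}\times\overline{\mathcal S}$. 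The extension attains its maximum there. The only thing to rule out is that every maximizer lies on the artificial boundary $\invst=0$. It suffices to exhibit one feasible $\bm s_1$ with
\[
V_Z(\bm s_1)>\alpha_1 V_{Z,\followup},
\]
since then the maximum of the extension exceeds its boundary value, is attained at some $\invst>0$, and that point is a genuine optimizer in $\mathcal S^\star_{1,Z}(b)$.

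To produce such a design, I would take the representative direction $\overline{\bm s}_1^{\,\infty}$ and compute the derivative of $V_Z$ at small investment, mirroring the heuristic of \Cref{sec:small-budget}. Writing
\[
V_Z(\invst\overline{\bm s}_1^{\,\infty})=\E\bigl[\Phi(\sqrt{\invst}\,\xi-z_{1-\alpha_1})\,m_Z(\bm\tau)\ATE_3\bigr],
\qquad \xi=\xi(\overline{\bm s}_1^{\,\infty},\bm\tau)=\kappa\,\ATE_3,
\]
note that $\kappa>0$ is a constant, so $\xi$ has the sign of $\ATE_3$. Then
\[
V_Z(\invst\overline{\bm s}_1^{\,\infty})-\alpha_1 V_{Z,\followup}
=\E\bigl[(\Phi(\sqrt{\invst}\kappa\ATE_3-z_{1-\alpha_1})-\Phi(-z_{1-\alpha_1}))\,m_Z(\bm\tau)\,\ATE_3\bigr].
\]
Monotonicity of $\Phi$ makes the integrand pointwise nonnegative: the bracket has the same sign as $\ATE_3$, and $m_Z\ge0$. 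The integrand is strictly positive wherever $m_Z(\bm\tau)|\ATE_3|>0$. By weighted nondegeneracy \eqref{eq:weighted-nondegeneracy}, that event has positive probability, so the expectation is strictly positive for every $\invst\in(0,b]$. No Taylor expansion or moment condition beyond integrability is needed. One should still check that $\E[|Z\ATE_3|]<\infty$, so that the expectation is well defined and dominated convergence applies in \Cref{prop:limiting-impact-values}.

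The main obstacle is this second step: finding a design that beats the boundary value without imposing extra moment assumptions like those of \Cref{prop:small-investment-expansion}. Choosing the representative direction resolves it, because it makes the sign alignment between the pilot statistic and $\ATE_3$ exact.

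The remaining details are routine. Non-adaptivity lets us replace $\Test_1$ by $\Phi(\cdot)$ and $Z$ by $m_Z$ via iterated expectations. The extended function is continuous at the points $(0,\overline{\bm s}_1)$, which follows from the limit in \Cref{prop:limiting-impact-values} being uniform along any sequence tending to $\bm0$. Finally, the supremum of the extension equals $V_Z^\star(b)$.
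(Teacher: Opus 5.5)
Your proposal is correct and matches the paper's proof essentially step for step. Like the paper, you extend the reparametrized objective continuously to the compact set $[0,b]\times\overline{\mathcal S}$ using the boundary limit $\alpha_1 V_{Z,\followup}$ and apply Weierstrass. You then rule out maximizers at $\invst=0$ by showing that the representative design strictly exceeds $\alpha_1 V_{Z,\followup}$ for every $\invst>0$, using the exact sign alignment $\xi=\kappa\,\ATE_3$ (with $\kappa=\sqrt{\bm 1^\top\overline{\bm s}_1^{\,\infty}}>0$), $m_Z\ge 0$, and weighted nondegeneracy.
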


\begin{proof}
  Fix $b>0$ and define the extended reparametrized objective
  \[
    f_Z(\invst,\overline{\bm s}_1)\defeq
    \begin{cases}
      V_Z(\invst\,\overline{\bm s}_1), & \invst>0,\\
      \alpha_1\,V_{Z,\followup}, & \invst=0,
    \end{cases}
    \qquad
    (\invst,\overline{\bm s}_1)\in[0,b]\times\overline{\mathcal{S}}.
  \]
  By \Cref{prop:limiting-impact-values}, $f_Z$ is continuous on the compact set $[0,b]\times\overline{\mathcal{S}}$, so Weierstrass' theorem implies that $f_Z$ attains its maximum there.

  It remains to show that no maximizer has $\invst^\star=0$.
  Let $\overline n_1^\infty\defeq \bm1^\top\overline{\bm s}_1^{\,\infty}>0$ and define
  \[
    q_\invst(x)
    \defeq
    \Phi\!\big(\sqrt{\invst\,\overline n_1^\infty}\,x-z_{1-\alpha_1}\big).
  \]
  Because the representative pilot satisfies $\ATE_1=\ATE_3$, we have
  \[
    f_Z(\invst,\overline{\bm s}_1^{\,\infty})-
    \alpha_1V_{Z,\followup}
    =
    \E\!\left[
      m_Z(\bm\tau)\,\ATE_3\big(q_\invst(\ATE_3)-\alpha_1\big)
    \right].
  \]
  Since $q_\invst(0)=\alpha_1$ and $q_\invst(\cdot)$ is strictly increasing, the product
  \[
    x\big(q_\invst(x)-\alpha_1\big)
  \]
  is strictly positive for every $x\neq0$.
  Since $m_Z(\bm\tau)\ge0$ almost surely and \Cref{eq:weighted-nondegeneracy} holds, it follows that
  \[
    f_Z(\invst,\overline{\bm s}_1^{\,\infty})>
    \alpha_1V_{Z,\followup}
    =
    f_Z(0,\overline{\bm s}_1^{\,\infty})
    \qquad
    \forall \invst>0.
  \]
  Hence the value attained at any positive investment along the representative design strictly improves on $\invst=0$.
  No maximizer of $f_Z$ can therefore satisfy $\invst^\star=0$.

  Therefore there exists a maximizer $(\invst^\star,\overline{\bm s}_1^\star)$ with $\invst^\star>0$, and then $\invst^\star\overline{\bm s}_1^\star$ is an optimal feasible solution.
\end{proof}

\begin{lemma}[Oracle upper bound]
  \label{lem:oracle-upper-bound-weighted}
  Suppose $Z\ge0$ almost surely and $\E[Z\cdot |\ATE_3|]<\infty$.
  Define the oracle random payoff and oracle value by
  \[
    Y_Z
    \defeq
    \indic{\ATE_3>0}\cdot Z\cdot \ATE_3,
    \qquad
    V_{Z,\oracle}
    \defeq
    \E[Y_Z].
  \]
  Define realized payoff random variables under the noisy and noise-free pilot decisions as
  \[
    X_Z(\bm s_1)\defeq \Test_1\cdot Z\cdot \ATE_3,
    \qquad
    X_{Z,\infty}(\overline{\bm s}_1)
    \defeq
    \indic{\overline{\bm s}_1^\top\bm\tau>0}\cdot Z\cdot \ATE_3.
  \]
  Then for every pilot design $\bm s_1\in\R_{\geq 0}^{D} \setminus \{\bm{0}\}$ and every $\overline{\bm s}_1\in\overline{\mathcal{S}}$,
  \[
    X_Z(\bm s_1)\le Y_Z
    \qquad\text{and}\qquad
    X_{Z,\infty}(\overline{\bm s}_1)\le Y_Z
    \qquad\text{almost surely.}
  \]
  Consequently,
  \[
    V_Z(\bm s_1)\le V_{Z,\oracle}
    \qquad\text{and}\qquad
    V_{Z,\noisefree}(\overline{\bm s}_1)\le V_{Z,\oracle}.
  \]
\end{lemma}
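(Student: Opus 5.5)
The plan is to prove the two pointwise inequalities first, by a case split on the sign of $\ATE_3$, and then take expectations.

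\emph{Case $\ATE_3>0$.} Here $Y_Z = Z\cdot\ATE_3$. Since $\Test_1\in\{0,1\}$ and $Z\cdot\ATE_3\ge 0$, we get $X_Z(\bm s_1)=\Test_1\cdot Z\cdot\ATE_3\le Z\cdot\ATE_3 = Y_Z$. The same argument, with $\indic{\overline{\bm s}_1^\top\bm\tau>0}$ in place of $\Test_1$, gives the noise-free inequality.

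\emph{Case $\ATE_3\le 0$.} Here $Y_Z=0$. Since $Z\ge0$ almost surely, $Z\cdot\ATE_3\le 0$, and multiplying by a $\{0,1\}$-valued indicator keeps it nonpositive. Hence both $X_Z(\bm s_1)\le 0=Y_Z$ and $X_{Z,\infty}(\overline{\bm s}_1)\le 0=Y_Z$.

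These two cases cover every realization outside the null set where $Z<0$, so both inequalities hold almost surely. The key point is that the oracle indicator $\indic{\ATE_3>0}$ pointwise maximizes $a\cdot Z\cdot\ATE_3$ over $a\in\{0,1\}$. The argument needs no non-adaptivity and no structure of $\Test_1$ beyond its being binary.

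To pass to expectations, all three random variables are dominated in absolute value by $Z\cdot|\ATE_3|$, which is integrable by assumption. So each expectation is finite, and monotonicity of expectation gives $V_Z(\bm s_1)=\E[X_Z(\bm s_1)]\le \E[Y_Z]=V_{Z,\oracle}$, and similarly $V_{Z,\noisefree}(\overline{\bm s}_1)=\E[X_{Z,\infty}(\overline{\bm s}_1)]\le V_{Z,\oracle}$.

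I do not expect a real obstacle. The only care needed is the integrability domination, so that the expectations are well defined, and the observation that the definitions of $V_Z$ and $V_{Z,\noisefree}$ coincide with $\E[X_Z]$ and $\E[X_{Z,\infty}]$ directly. Neither identity requires any conditioning step.
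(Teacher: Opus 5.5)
Your proof is correct and follows the paper's argument: the same case split on the sign of $\ATE_3$, using that a $\{0,1\}$-valued decision can only shrink a nonnegative payoff or leave a nonpositive one nonpositive, and then taking expectations. Your explicit domination by $Z\cdot|\ATE_3|$, which justifies that the expectations are finite, is a detail the paper leaves implicit.
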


\begin{proof}
  If $\ATE_3>0$, then $Y_Z=Z\cdot \ATE_3\ge0$ and any pilot decision only multiplies this payoff by a number in $\{0,1\}$.
  If $\ATE_3\le0$, then $Y_Z=0$ and any decision to continue yields the non-positive payoff $Z\cdot \ATE_3\le0$.
  Thus both the noisy and noise-free pilot payoffs are bounded above by $Y_Z$ almost surely.
  Taking expectations proves the result.
\end{proof}

\begin{proposition}[Unique optimal noise-free design]
  \label{prop:optimal-noise-free-pilot-design-weighted-appendix}
  Suppose $Z\ge0$ almost surely and $\E[Z\cdot|\ATE_3|]<\infty$.
  Then the unit-investment representative pilot design $\overline{\bm s}_1^{\,\infty}$ attains the oracle bound:
  \[
    V_{Z,\noisefree}(\overline{\bm s}_1^{\,\infty})
    =
    V_{Z,\oracle}.
  \]
  If the weighted sign-separation condition \Cref{eq:weighted-sign-separation} holds, then $\overline{\bm s}_1^{\,\infty}$ is the unique maximizer of $V_{Z,\noisefree}$ over $\overline{\mathcal{S}}$.
\end{proposition}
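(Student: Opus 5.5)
The proof compares integrands pointwise, in the style of \Cref{lem:oracle-upper-bound-weighted}. For the representative design $\overline{\bm s}_1^{\,\infty}=\bm s_3/\bm c^\top\bm s_3$ we have $\overline{\bm s}_1^{\,\infty\top}\bm\tau=(n_3/\bm c^\top\bm s_3)\ATE_3$. The positive factor $n_3/\bm c^\top\bm s_3$ does not change the sign, so $\indic{\overline{\bm s}_1^{\,\infty\top}\bm\tau>0}=\indic{\ATE_3>0}$ identically. Hence $X_{Z,\infty}(\overline{\bm s}_1^{\,\infty})=Y_Z$ pointwise, and taking expectations (finite because $\E[Z|\ATE_3|]<\infty$) gives $V_{Z,\noisefree}(\overline{\bm s}_1^{\,\infty})=V_{Z,\oracle}$. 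Combined with the bound $V_{Z,\noisefree}\le V_{Z,\oracle}$ from \Cref{lem:oracle-upper-bound-weighted}, this shows the representative design is a maximizer.

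For uniqueness, fix $\overline{\bm s}_1\in\overline{\mathcal S}$ not proportional to $\bm s_3$. We compute the gap $V_{Z,\oracle}-V_{Z,\noisefree}(\overline{\bm s}_1)=\E[Y_Z-X_{Z,\infty}(\overline{\bm s}_1)]$. Write $a=\overline{\bm s}_1^\top\bm\tau$ and $x=\ATE_3$, which has the sign of $\bm s_3^\top\bm\tau$. Then pointwise
\[
Y_Z-X_{Z,\infty}=Z\,x\bigl(\indic{x>0}-\indic{a>0}\bigr).
\]
We check this case by case. When $x>0$ and $a\le0$ it equals $Zx=Z|x|$. When $x\le 0$ and $a>0$ it equals $-Zx=Z|x|$. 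In all other cases it is $0$. So the integrand is nonnegative and satisfies
\[
Y_Z-X_{Z,\infty}\ \ge\ Z|\ATE_3|\,\indic{a\,(\bm s_3^\top\bm\tau)<0},
\]
because on the event $\{a(\bm s_3^\top\bm\tau)<0\}$ exactly one of the two nonzero cases holds.

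Taking expectations, we condition on $\bm\tau$. The indicator and $|\ATE_3|$ are $\bm\tau$-measurable, so $Z$ can be replaced by $m_Z(\bm\tau)=\E[Z\mid\bm\tau]$. This gives
\[
V_{Z,\oracle}-V_{Z,\noisefree}(\overline{\bm s}_1)\ \ge\ \E\!\left[m_Z(\bm\tau)|\ATE_3|\indic{(\overline{\bm s}_1^\top\bm\tau)(\bm s_3^\top\bm\tau)<0}\right]>0,
\]
where the strict inequality is the weighted sign-separation condition \eqref{eq:weighted-sign-separation}. Therefore every non-representative $\overline{\bm s}_1$ achieves strictly less than the oracle, and $\overline{\bm s}_1^{\,\infty}$ is the unique maximizer over $\overline{\mathcal S}$. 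Note that $\overline{\mathcal S}$ contains only one element proportional to $\bm s_3$, namely $\overline{\bm s}_1^{\,\infty}$.

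The only real obstacle is the case check for the pointwise identity, specifically the boundary where $a=0$ or $x=0$. The inequality only has to hold on the strict event $\{a(\bm s_3^\top\bm\tau)<0\}$, and off it we only need nonnegativity, which the case analysis already gives. So no anti-concentration assumption is needed here.
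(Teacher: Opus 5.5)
Your proof is correct and follows essentially the same route as the paper. The representative design makes the noise-free indicator coincide with $\indic{\ATE_3>0}$, and for any non-proportional design the nonnegative oracle-gap integrand is at least $Z|\ATE_3|$ on the strict sign-disagreement event, which after conditioning on $\bm\tau$ is exactly the weighted sign-separation quantity; your explicit case check just spells out the nonnegativity that the paper obtains by citing \Cref{lem:oracle-upper-bound-weighted}.
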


\begin{proof}
  Under the representative pilot,
  \[
    \ATE_1
    =
    \frac{\left(\overline{\bm s}_1^{\,\infty}\right)^\top\bm\tau}
    {\bm1^\top\overline{\bm s}_1^{\,\infty}}
    =
    \frac{\bm s_3^\top\bm\tau}{\bm1^\top\bm s_3}
    =
    \ATE_3.
  \]
  Hence $\indic{\ATE_1>0}=\indic{\ATE_3>0}$ almost surely under $\overline{\bm s}_1^{\,\infty}$, so
  \[
    V_{Z,\noisefree}(\overline{\bm s}_1^{\,\infty})
    =
    \E[\indic{\ATE_3>0}\cdot Z\cdot \ATE_3]
    =
    V_{Z,\oracle}.
  \]

  Now let $\overline{\bm s}_1\in\overline{\mathcal{S}}$ be any noise-free design.
  The oracle gap can be written as
  \[
    V_{Z,\oracle}-V_{Z,\noisefree}(\overline{\bm s}_1)
    =
    \E\!\left[
      \Big(
        \indic{\ATE_3>0}
        -
        \indic{\overline{\bm s}_1^\top\bm\tau>0}
      \Big)\cdot Z \cdot \ATE_3
    \right],
  \]
  and the integrand is nonnegative almost surely by \Cref{lem:oracle-upper-bound-weighted}.
  On the strict sign-disagreement event
  \[
    A(\overline{\bm s}_1)
    \defeq
    \{(\overline{\bm s}_1^\top\bm\tau)(\bm s_3^\top\bm\tau)<0\},
  \]
  the integrand is exactly $Z\cdot|\ATE_3|$.
  Therefore, conditioning on $\bm\tau$,
  \[
    V_{Z,\oracle}-V_{Z,\noisefree}(\overline{\bm s}_1)
    \ge
    \E\!\left[
      m_Z(\bm\tau)\cdot |\ATE_3|\cdot \indic{A(\overline{\bm s}_1)}
    \right].
  \]
  If $\overline{\bm s}_1$ is not proportional to $\bm s_3$, the right-hand side is strictly positive by \Cref{eq:weighted-sign-separation}.
  Thus no non-proportional design attains the oracle bound.
  Since the representative design does attain the oracle bound, every maximizer must be proportional to $\bm s_3$.
  The unit-investment constraint $\bm c^\top\overline{\bm s}_1=1$ then identifies the unique maximizer as
  \[
    \overline{\bm s}_1
    =
    \frac{\bm s_3}{\bm c^\top\bm s_3}
    =
    \overline{\bm s}_1^{\,\infty}.
  \]
\end{proof}

\begin{lemma}[Strict finite-investment oracle gap]
  \label{lem:strict-finite-investment-oracle-gap-weighted}
  Let $Z$ be non-adaptive and suppose $Z\ge0$ almost surely, $\E[Z\cdot|\ATE_3|]<\infty$, and \Cref{eq:weighted-nondegeneracy} holds.
  Then every finite-investment pilot design satisfies
  \[
    V_Z(\bm s_1)<V_{Z,\oracle}
    \qquad
    \forall \bm s_1\in\R_{\geq 0}^{D}\setminus\{\bm0\}.
  \]
  Consequently, if $\mathcal S_{1,Z}^\star(b)\neq\emptyset$ for a finite budget $b$, then
  \[
    V_Z^\star(b)<V_{Z,\oracle}.
  \]
\end{lemma}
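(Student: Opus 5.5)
We need to prove Lemma (strict finite-investment oracle gap): $V_Z(\bm s_1)<V_{Z,\oracle}$ for every feasible $\bm s_1$.

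The plan is to write the oracle gap as the expectation of a pointwise nonnegative integrand and show that this integrand is strictly positive on an event of positive weighted mass. By non-adaptivity, $V_Z(\bm s_1)=\E[q(\bm s_1,\bm\tau)\,m_Z(\bm\tau)\,\ATE_3]$. Here
\[
  q(\bm s_1,\bm\tau)=\Phi\big(\bm s_1^\top\bm\tau/\sqrt{n_1}-z_{1-\alpha_1}\big)\in(0,1),
\]
and $m_Z=\E[Z\mid\bm\tau]\ge 0$. Likewise $V_{Z,\oracle}=\E[\indic{\ATE_3>0}\,m_Z(\bm\tau)\,\ATE_3]$ by conditioning on $\bm\tau$, since $\ATE_3$ is $\bm\tau$-measurable. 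Both expectations are finite because $q\le 1$ and $\E[Z|\ATE_3|]<\infty$. Subtracting gives
\[
  V_{Z,\oracle}-V_Z(\bm s_1)
  =
  \E\!\left[
    m_Z(\bm\tau)\,\big(\indic{\ATE_3>0}-q(\bm s_1,\bm\tau)\big)\,\ATE_3
  \right].
\]

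Next I would split by the sign of $\ATE_3$.
\begin{itemize}
  \item On $\{\ATE_3>0\}$ the integrand equals $m_Z(1-q)\ATE_3=m_Z(1-q)|\ATE_3|$.
  \item On $\{\ATE_3<0\}$ it equals $-m_Z\,q\,\ATE_3=m_Z\,q\,|\ATE_3|$.
  \item On $\{\ATE_3=0\}$ it vanishes.
\end{itemize}
So in every case the integrand is at least $m_Z(\bm\tau)\,\min\{q,1-q\}\,|\ATE_3|\ge 0$. The key observation is that the Gaussian test with finite investment never passes or fails deterministically: $q\in(0,1)$ strictly for every $\bm\tau$, since $\Phi$ takes values in $(0,1)$ at finite arguments. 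Hence $\min\{q,1-q\}>0$ pointwise.

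Therefore the integrand is nonnegative and strictly positive exactly where $m_Z(\bm\tau)|\ATE_3|>0$. The weighted nondegeneracy condition \eqref{eq:weighted-nondegeneracy}, $\E[m_Z|\ATE_3|]>0$, says that this set has positive probability. An expectation of a nonnegative function that is strictly positive on a set of positive probability is strictly positive, so $V_Z(\bm s_1)<V_{Z,\oracle}$.

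For the consequence, if $\mathcal S_{1,Z}^\star(b)\neq\emptyset$, then $V_Z^\star(b)=V_Z(\bm s_1^\star)$ for some feasible $\bm s_1^\star$, which is strictly below the oracle value by the first part.

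The only delicate point is justifying the conditioning steps. One must replace $\Test_1$ and $Z$ by $q$ and $m_Z$, which uses $Z\ind\varepsilon_1\mid\bm\tau$ and integrability via $\E[Z|\ATE_3|]<\infty$. Beyond that the argument is routine, and there is no real obstacle.
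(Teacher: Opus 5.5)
Your proposal is correct and follows essentially the same argument as the paper. Both proofs condition on $\bm\tau$ to write the oracle gap integrand as $m_Z(\bm\tau)|\ATE_3|$ times $(1-p_1)$ or $p_1$ according to the sign of $\ATE_3$, and both note that this is strictly positive wherever $m_Z(\bm\tau)|\ATE_3|>0$ because a finite-investment pass probability lies in $(0,1)$. Weighted nondegeneracy then gives that event positive probability, and the consequence for $V_Z^\star(b)$ follows the same way, since any optimizer has finite investment.
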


\begin{proof}
  Fix any finite-investment design $\bm s_1\in\R_{\geq 0}^{D}\setminus\{\bm0\}$.
  Conditional on $\bm\tau$, the pilot test is independent of $Z$ and has pass probability
  \[
    p_1(\bm\tau)
    \defeq
    \P(\Test_1=1\mid\bm\tau)
    \in(0,1).
  \]
  With
  \[
    X_Z\defeq \Test_1\cdot Z\cdot \ATE_3,
    \qquad
    Y_Z\defeq \indic{\ATE_3>0}\cdot Z\cdot \ATE_3,
  \]
  the conditional oracle gap is
  \[
    \E[Y_Z-X_Z\mid\bm\tau]
    =
    m_Z(\bm\tau)|\ATE_3|
    \Big(
      (1-p_1(\bm\tau))\indic{\ATE_3>0}
      +
      p_1(\bm\tau)\indic{\ATE_3<0}
    \Big).
  \]
  This quantity is nonnegative everywhere and strictly positive wherever $m_Z(\bm\tau)|\ATE_3|>0$.
  By \Cref{eq:weighted-nondegeneracy}, the latter event has positive weighted mass, so
  \[
    V_Z(\bm s_1)=\E[X_Z]<\E[Y_Z]=V_{Z,\oracle}.
  \]
  If $\mathcal S_{1,Z}^\star(b)$ is nonempty, an optimizer under budget $b$ is a finite-investment design, so the strict inequality also holds for $V_Z^\star(b)$.
\end{proof}

Finally, we will use the following standard result, which follows, for example, from \cite[Prop.~7.15 and Thm.~7.33]{RockafellarWe09}.

\begin{lemma}[Uniform convergence transfers unique maximizers]
  \label{lem:uniform-argmax-transfer-large-budget}
  Let $K\subset \R^D$ be compact, let $f:K\to\R$ be continuous with unique maximizer $x^\star$, and let $f_\lambda:K\to\R$ be functions indexed by $\lambda\to\infty$ such that
  \[
    \sup_{x\in K}|f_\lambda(x)-f(x)|\to0.
  \]
  Then for every $\epsilon>0$ and any (equivalent) norm $\|\cdot\|$, there exists $\lambda_\epsilon<\infty$ such that, for all $\lambda\ge\lambda_\epsilon$, every maximizer of $f_\lambda$ over $K$ lies in $\{x\in K:\|x-x^\star\|<\epsilon\}$.
\end{lemma}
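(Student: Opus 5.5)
The plan is a direct compactness argument: pass from uniqueness of the maximizer of $f$ to a quantitative gap away from $x^\star$, then choose $\lambda$ large enough that the uniform error is smaller than half that gap. No existence of maximizers of $f_\lambda$ is needed. The statement only constrains maximizers that happen to exist, and it holds vacuously otherwise.

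\emph{Step 1 (separation gap).} Fix $\epsilon>0$ and set
\[
  U_\epsilon\defeq\{x\in K:\|x-x^\star\|<\epsilon\}.
\]
The set $K\setminus U_\epsilon$ is closed in the compact set $K$, so it is compact. If it is empty there is nothing to prove. Otherwise the continuous function $f$ attains its maximum on $K\setminus U_\epsilon$ at some point $\tilde x\neq x^\star$. Uniqueness of the maximizer of $f$ on $K$ then gives
\[
  \eta\defeq f(x^\star)-\max_{x\in K\setminus U_\epsilon} f(x)=f(x^\star)-f(\tilde x)>0.
\]
This is the only place where compactness and continuity are used. I expect it to be the main (though mild) obstacle: without compactness, the supremum over $K\setminus U_\epsilon$ could approach $f(x^\star)$ and the gap would vanish. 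Since all norms on $\R^D$ are equivalent, the choice of norm only changes which set $U_\epsilon$ is removed, not the argument.

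\emph{Step 2 (uniform approximation).} By uniform convergence, choose $\lambda_\epsilon$ such that $\sup_{x\in K}|f_\lambda(x)-f(x)|<\eta/2$ for all $\lambda\ge\lambda_\epsilon$. Then for every $x\in K\setminus U_\epsilon$,
\[
  f_\lambda(x)<f(x)+\tfrac{\eta}{2}\le f(x^\star)-\tfrac{\eta}{2}<f_\lambda(x^\star).
\]
Hence no point of $K\setminus U_\epsilon$ can maximize $f_\lambda$ over $K$. In other words, every maximizer of $f_\lambda$ lies in $U_\epsilon$, which is the claim.

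In the application to \Cref{thm:large-budget-weighted-target-impact-appendix}, the plan is to take $K=\overline{\mathcal S}$, $f=V_{Z,\noisefree}$ and $f_\lambda=V_Z(\lambda\,\cdot)$. The needed hypotheses are then supplied by earlier results: continuity of $f$ from \Cref{prop:limiting-impact-values}, uniqueness of its maximizer from \Cref{prop:optimal-noise-free-pilot-design-weighted-appendix}, and uniform convergence from \Cref{prop:large-investment-expansion-weighted-appendix}.
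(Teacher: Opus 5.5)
Your proof is correct, but it takes a different route from the paper. The paper gives no argument of its own and instead cites epi-convergence theory (Rockafellar--Wets, Prop.~7.15 and Thm.~7.33). In that route, uniform convergence to a continuous limit implies epi-convergence, and epi-convergence together with level-boundedness forces cluster points of maximizers of $f_\lambda$ to be maximizers of $f$, hence equal to $x^\star$. You replace this with a direct margin argument: compactness of $K\setminus U_\epsilon$ and uniqueness give a strict gap $\eta>0$, and a uniform error below $\eta/2$ excludes every point outside $U_\epsilon$.

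Your version is self-contained and fits the lemma exactly as stated. It needs no semicontinuity of the $f_\lambda$. The cited minimization theorem is stated for lower semicontinuous proper functions, so applied to $-f_\lambda$ it requires upper semicontinuity of $f_\lambda$; that holds in the paper's application only because $V_Z$ is continuous by \Cref{prop:limiting-impact-values}. Your argument also needs no existence of maximizers of $f_\lambda$ and no reduction from the continuous index $\lambda\to\infty$ to sequences, and it gives an explicit threshold in terms of the gap $\eta(\epsilon)$. The cited route is heavier but covers more than is needed here, since it works under the weaker mode of epi-convergence and replaces compactness by level-boundedness.

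Your instantiation $K=\overline{\mathcal S}$, $f=V_{Z,\noisefree}$, $f_\lambda=V_Z(\lambda\,\cdot)$ matches how the lemma is used in part (iv) of \Cref{thm:large-budget-weighted-target-impact-appendix}.
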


We now have all the components necessary to proceed with a proof of the main large-budget result.

\begin{proof}[Proof of \Cref{thm:large-budget-weighted-target-impact-appendix}]
  Non-emptiness of $\mathcal S_{1,Z}^\star(b)$ follows from \Cref{prop:optimal-solution-exists-weighted-appendix}, since $0\le Z\le M$ and $\E[\ATE_3^2]<\infty$ imply $\E[|Z \cdot \ATE_3|]<\infty$.

  Part (i) follows immediately from the oracle upper bound in \Cref{lem:oracle-upper-bound-weighted}.
  For the matching lower bound in part (ii), \Cref{prop:optimal-noise-free-pilot-design-weighted-appendix} gives
  \[
    V_{Z,\noisefree}(\overline{\bm s}_1^{\,\infty})=V_{Z,\oracle}.
  \]
  Therefore
  \[
    V_Z^\star(b)
    \ge
    V_Z(b\,\overline{\bm s}_1^{\,\infty})
    \ge
    V_{Z,\noisefree}(\overline{\bm s}_1^{\,\infty})
    -
    \sup_{\overline{\bm s}_1\in\overline{\mathcal{S}}}
    \big|V_Z(b\,\overline{\bm s}_1)-V_{Z,\noisefree}(\overline{\bm s}_1)\big|.
  \]
  The supremum term converges to $0$ by \Cref{prop:large-investment-expansion-weighted-appendix}, while part (i) supplies the upper bound $V_Z^\star(b)\le V_{Z,\oracle}$.
  Hence $V_Z^\star(b)\to V_{Z,\oracle}$.

  To prove part (iii), fix any finite $\invst_0>0$.
  By \Cref{lem:strict-finite-investment-oracle-gap-weighted},
  \[
    \delta_{\invst_0}
    \defeq
    V_{Z,\oracle}-V_Z^\star(\invst_0)
    >0.
  \]
  By part (ii), choose $B_{\invst_0}$ such that $V_Z^\star(b)>V_Z^\star(\invst_0)$ whenever $b\ge B_{\invst_0}$.
  If an optimal design under such a budget had investment at most $\invst_0$, it would be feasible for the budget-$\invst_0$ problem and could not exceed $V_Z^\star(\invst_0)$.
  This contradicts $V_Z^\star(b)>V_Z^\star(\invst_0)$.
  Since $\invst_0$ was arbitrary, the optimal investment diverges.

  For part (iv), define, for each investment $\invst>0$,
  \[
    g_\invst(\overline{\bm s}_1)
    \defeq
    V_Z(\invst\,\overline{\bm s}_1),
    \qquad
    g_\infty(\overline{\bm s}_1)
    \defeq
    V_{Z,\noisefree}(\overline{\bm s}_1).
  \]
  By \Cref{prop:large-investment-expansion-weighted-appendix}, $g_\invst\to g_\infty$ uniformly on $\overline{\mathcal S}$.
  By \Cref{prop:optimal-noise-free-pilot-design-weighted-appendix}, $g_\infty$ has the unique maximizer $\overline{\bm s}_1^{\,\infty}$.
  Applying \Cref{lem:uniform-argmax-transfer-large-budget} with $K=\overline{\mathcal S}$ shows that every maximizer of $g_\invst$ is arbitrarily close to $\overline{\bm s}_1^{\,\infty}$ once $\invst$ is large enough, where we note that by equivalence of norms in $\R^D$ the $1$-norm can be replaced with any other norm.

  Finally, let $\bm s_1^\star\in\mathcal S_{1,Z}^\star(b)$ and write
  \[
    \invst^\star\defeq \bm c^\top\bm s_1^\star,
    \qquad
    \overline{\bm s}_1^\star
    \defeq
    \frac{\bm s_1^\star}{\bm c^\top\bm s_1^\star}.
  \]
  For its fixed investment $\invst^\star$, the normalized design $\overline{\bm s}_1^\star$ must maximize $g_{\invst^\star}$ over $\overline{\mathcal S}$; otherwise replacing it by a better unit-investment composition at the same investment would produce a feasible design with a larger value.
  Part (iii) implies that $\invst^\star\to\infty$ uniformly over all optimal designs as $b\to\infty$.
  Combining this investment divergence with the argmax-transfer conclusion proves
  \[
    \sup_{\bm s_1^\star\in\mathcal S_{1,Z}^\star(b)}
    \left\|
    \frac{\bm s_1^\star}{\bm c^\top\bm s_1^\star}
    -
    \overline{\bm s}_1^{\,\infty}
    \right\|
    \to0.
  \]
\end{proof}

\subsection*{Specialization to the two-stage impact objective}

We next verify that the generic weighted assumptions hold for the paper's main two-stage impact objective. This follows because strictly positive conditional weights transfer ordinary sign separation to weighted sign separation.

\begin{lemma}[Positive weights transfer sign separation]
  \label{prop:positive-conditional-weight-appendix}
  Let $Z$ be non-adaptive and suppose $m_Z(\bm\tau)\ge0$ almost surely and $\E[m_Z(\bm\tau)|\ATE_3|]<\infty$.
  If $m_Z(\bm\tau)>0$ almost surely and $\var(\ATE_3)>0$, then \Cref{eq:weighted-nondegeneracy} holds.
  If $m_Z(\bm\tau)>0$ almost surely and the ordinary sign-separation condition \Cref{eq:ordinary-sign-separation-large-budget-appendix} holds, then the weighted sign-separation condition \Cref{eq:weighted-sign-separation} holds.
\end{lemma}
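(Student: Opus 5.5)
The plan is to prove both claims directly from the definitions in \eqref{eq:weighted-nondegeneracy} and \eqref{eq:weighted-sign-separation}. The common principle is that if a nonnegative integrand vanishes almost surely, then every factor multiplying a strictly positive factor must vanish almost surely.

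\emph{Weighted nondegeneracy.} Suppose, for contradiction, that $\E[m_Z(\bm\tau)|\ATE_3|]=0$. The integrand $m_Z(\bm\tau)|\ATE_3|$ is nonnegative, so it must vanish almost surely. Since $m_Z(\bm\tau)>0$ almost surely, this forces $|\ATE_3|=0$ almost surely. Then $\ATE_3$ is almost surely constant, which gives $\var(\ATE_3)=0$ and contradicts the hypothesis. The integrability assumption $\E[m_Z(\bm\tau)|\ATE_3|]<\infty$ only ensures that the expectation is well defined; finiteness of $\var(\ATE_3)$ is not needed, since we only use that a positive variance rules out an almost surely zero variable.

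\emph{Weighted sign separation.} Fix $\overline{\bm s}_1\in\overline{\mathcal S}$ not proportional to $\bm s_3$, and let $A=\{(\overline{\bm s}_1^\top\bm\tau)(\bm s_3^\top\bm\tau)<0\}$. By \eqref{eq:ordinary-sign-separation-large-budget-appendix}, $\P(A)>0$. On $A$ we have $\bm s_3^\top\bm\tau\neq0$, hence $|\ATE_3|=|\bm s_3^\top\bm\tau|/n_3>0$. Combined with $m_Z(\bm\tau)>0$ almost surely, the integrand $m_Z(\bm\tau)|\ATE_3|\indic{A}$ is nonnegative and strictly positive on $A$ up to a null set.

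A nonnegative random variable that is strictly positive on an event of positive probability has strictly positive expectation. To make this explicit, write $A=\bigcup_k A\cap\{m_Z|\ATE_3|>1/k\}$. By continuity from below, some $k$ gives $\P(A\cap\{m_Z|\ATE_3|>1/k\})>0$, and on that event the integrand is at least $1/k$. This yields \eqref{eq:weighted-sign-separation}.

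There is no real obstacle here. The only point needing care is the step from ``positive on a positive-probability event'' to ``positive expectation,'' which the $1/k$ argument above handles.
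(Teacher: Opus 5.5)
Your proof is correct and follows essentially the same argument as the paper: the nondegeneracy part is just the contrapositive of the paper's direct observation that $m_Z(\bm\tau)|\ATE_3|>0$ on a positive-probability event, and the sign-separation part is identical. Your continuity-from-below ($1/k$) step only spells out the ``positive on a positive-probability event implies positive expectation'' step that the paper states without elaboration.
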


\begin{proof}
  Since $m_Z(\bm\tau)>0$ almost surely and $\var(\ATE_3)>0$, we have $m_Z(\bm\tau)|\ATE_3|>0$ on a set of positive probability.
  This proves \Cref{eq:weighted-nondegeneracy}.

  Now fix any $\overline{\bm s}_1\in\overline{\mathcal{S}}$ not proportional to $\bm s_3$, and let
  \[
    A(\overline{\bm s}_1)
    \defeq
    \{(\overline{\bm s}_1^\top\bm\tau)(\bm s_3^\top\bm\tau)<0\}.
  \]
  By the ordinary sign-separation condition, $\P(A(\overline{\bm s}_1))>0$.
  On this event, $|\ATE_3|>0$, and by assumption $m_Z(\bm\tau)>0$ almost surely.
  Hence the nonnegative random variable
  \[
    m_Z(\bm\tau)|\ATE_3|\indic{A(\overline{\bm s}_1)}
  \]
  is strictly positive on a set of positive probability, so its expectation is strictly positive.
  This is exactly \Cref{eq:weighted-sign-separation}.
\end{proof}

\begin{lemma}[The follow-up test satisfies the weight assumptions]
  \label{lem:test-two-satisfies-weighted-conditions-appendix}
  Let $Z=\Test_2$.
  Then $Z$ is non-adaptive, $0\le Z\le1$ almost surely, and
  \[
    m_Z(\bm\tau)
    =
    \E[\Test_2\mid\bm\tau]
    =
    p_2(\ATE_2)
    \in(0,1)
    \qquad\text{almost surely.}
  \]
  Consequently, if $\var(\ATE_3)>0$ and the ordinary sign-separation condition \Cref{eq:ordinary-sign-separation-large-budget-appendix} holds, then \Cref{eq:weighted-nondegeneracy,eq:weighted-sign-separation} hold for $Z=\Test_2$.
\end{lemma}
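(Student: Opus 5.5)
The three assertions are checked in turn, and the consequence is then obtained from \Cref{prop:positive-conditional-weight-appendix}.

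\emph{Non-adaptivity.} By the probability model of \Cref{sec:Model}, $\Test_2=\indic{\sqrt{n_2}\,\widehat{\ATE}_2\ge z_{1-\alpha_2}}$ with $\widehat{\ATE}_2=\ATE_2+\varepsilon_2/\sqrt{n_2}$. Here $\ATE_2$ is a function of $\bm\tau$, and $\varepsilon_2$ is independent of $\varepsilon_1$ and of $\bm\tau$. Hence, conditional on $\bm\tau$, $\Test_2$ is a measurable function of $\varepsilon_2$ alone, which is independent of $\varepsilon_1$. This gives $Z\ind\varepsilon_1\mid\bm\tau$. Since $\Test_2\in\{0,1\}$, the bound $0\le Z\le1$ is immediate.

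\emph{Conditional mean.} Conditional on $\bm\tau$, the statistic $\sqrt{n_2}\,\widehat{\ATE}_2$ is distributed as $\mathcal N(\sqrt{n_2}\ATE_2,1)$. Therefore
\[
\E[\Test_2\mid\bm\tau]=1-\Phi\bigl(z_{1-\alpha_2}-\sqrt{n_2}\ATE_2\bigr)=\Phi\bigl(\sqrt{n_2}\ATE_2-z_{1-\alpha_2}\bigr)=p_2(\ATE_2).
\]
The first equality is the normal tail probability, and the second uses the symmetry of $\Phi$. Because $\ATE_2$ is finite almost surely and $\Phi$ maps $\R$ into $(0,1)$, we get $p_2(\ATE_2)\in(0,1)$ almost surely. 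This is also recorded in \Cref{app:notation-for-proofs}.

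\emph{Consequence.} We apply \Cref{prop:positive-conditional-weight-appendix} with $Z=\Test_2$. It requires $m_Z\ge0$, which holds, and it requires the integrability $\E[m_Z(\bm\tau)|\ATE_3|]\le\E[|\ATE_3|]<\infty$. That integrability step is the only one needing more than a direct observation, and it is routine. If one reads the lemma as not already assuming finiteness of $\E|\ATE_3|$, it follows from $\var(\ATE_3)>0$ in the intended reading of that hypothesis: a finite, positive variance requires a finite second moment. Since $m_Z>0$ almost surely and $\var(\ATE_3)>0$, the lemma yields weighted nondegeneracy \eqref{eq:weighted-nondegeneracy}. Combining $m_Z>0$ with ordinary sign separation \eqref{eq:ordinary-sign-separation-large-budget-appendix}, the lemma also yields weighted sign separation \eqref{eq:weighted-sign-separation}. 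There is no genuine obstacle in this argument.
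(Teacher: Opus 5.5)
Your proposal is correct and follows the paper's proof. The paper likewise gets non-adaptivity from the conditional independence of $\varepsilon_2$ and $\varepsilon_1$ given $\bm\tau$, notes the trivial bound, identifies $\E[\Test_2\mid\bm\tau]=p_2(\ATE_2)\in(0,1)$, and then invokes \Cref{prop:positive-conditional-weight-appendix}. Your only addition is the explicit check of that lemma's integrability hypothesis, which the paper leaves implicit; it is not actually needed for the strict-positivity conclusions, because a nonnegative variable that is positive with positive probability has positive (possibly infinite) expectation.
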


\begin{proof}
  The follow-up test $\Test_2$ depends only on $\bm\tau$ and the follow-up-stage sampling noise $\varepsilon_2$.
  Since $\varepsilon_2$ is independent of the pilot-stage sampling noise $\varepsilon_1$ conditional on $\bm\tau$, $\Test_2$ is non-adaptive.
  The bound $0\le\Test_2\le1$ is immediate.
  Finally, by the definition of the stage-2 pass probability,
  \[
    \E[\Test_2\mid\bm\tau]
    =
    \P(\Test_2=1\mid\bm\tau)
    =
    p_2(\ATE_2),
  \]
  and $p_2(\ATE_2)\in(0,1)$ for every finite $\ATE_2$.
  The final claim follows from \Cref{prop:positive-conditional-weight-appendix}.
\end{proof}

\refstepcounter{proposition}
\begin{linkedresult}{proof}{prop:optimal-solution-exists}[Existence of optimal solution]
  \label{prop:optimal-solution-exists-appendix}
  Suppose $\E[|\ATE_3|]<\infty$ and $\var(\ATE_3)>0$.
  Then the pilot impact maximization problem defined in \Cref{sec:pilot-budget-constraint-and-objective} has at least one optimal solution, i.e.\ $\mathcal S_1^\star(b)\neq\emptyset$ for all $b>0$.
\end{linkedresult}

\begin{proof}
  Apply \Cref{prop:optimal-solution-exists-weighted-appendix} with $Z=\Test_2$.
  By \Cref{lem:test-two-satisfies-weighted-conditions-appendix}, $Z$ is non-adaptive, $m_Z(\bm\tau)=p_2(\ATE_2)>0$ almost surely, and $0\le Z\le1$.
  Since $\var(\ATE_3)>0$, \Cref{eq:weighted-nondegeneracy} holds by \Cref{prop:positive-conditional-weight-appendix}.
  Finally, $\E[|Z \cdot \ATE_3|]\le\E[|\ATE_3|]<\infty$.
\end{proof}

The next result is a slightly more general version of the large-budget theorem for the pilot impact.
That the density condition used in the body theorem implies the conditions here follows from \Cref{prop:density-large-budget-regularity-appendix} below.

\refstepcounter{theorem}
\begin{linkedresult}{extended}{thm:large-budget-limiting-optimal-policy}
  \label{thm:large-budget-limiting-optimal-policy-appendix}
  Suppose $\E[\ATE_3^2]<\infty$, $\var(\ATE_3)>0$, and the uniform boundary anti-concentration and ordinary sign-separation conditions \Cref{eq:uniform-boundary-anti-concentration-large-budget-appendix,eq:ordinary-sign-separation-large-budget-appendix} hold.
  Then:
  \begin{enumerate}
    \item[\rm (i)] $V^\star(b)\le V_{\oracle}$ for all $b>0$.
    \item[\rm (ii)] $V^\star(b)\to V_{\oracle}$ as $b\to\infty$.
    \item[\rm (iii)] The optimal investment diverges,
      \[
        \inf_{\bm s_1^\star\in\mathcal S_1^\star(b)} \bm c^\top\bm s_1^\star
        \to\infty
        \qquad\text{as }b\to\infty.
      \]
    \item[\rm (iv)] The set of unit-investment normalized optimal designs converges to the unit-investment representative design,
      \[
        \sup_{\bm s_1^\star\in\mathcal S_1^\star(b)}
        \left\|
        \frac{\bm s_1^\star}{\bm c^\top\bm s_1^\star}
        -
        \overline{\bm s}_1^{\,\infty}
        \right\|
        \to0
        \qquad\text{as }b\to\infty.
      \]
  \end{enumerate}
\end{linkedresult}

\begin{proof}
  Apply \Cref{thm:large-budget-weighted-target-impact-appendix} with $Z=\Test_2$.
  By \Cref{lem:test-two-satisfies-weighted-conditions-appendix}, $Z$ is non-adaptive, $0\le Z\le1$, and $m_Z(\bm\tau)=p_2(\ATE_2)>0$ almost surely.
  Since $\var(\ATE_3)>0$ and the ordinary sign-separation condition holds, \Cref{prop:positive-conditional-weight-appendix} verifies the weighted non-degeneracy and weighted sign-separation conditions.
  For $Z=\Test_2$, the generic value, oracle, noise-free value, and optimal solution set are exactly $V$, $V_{\oracle}$, $V_{\noisefree}$, and $\mathcal S_1^\star(b)$.
\end{proof}

\begin{proposition}[Density implies large-budget regularity]
  \label{prop:density-large-budget-regularity-appendix}
  Suppose the prior distribution of $\bm\tau$ is absolutely continuous with respect to Lebesgue measure on $\R^D$, with density $f$.
  Assume there exists $r>0$ such that
  \[
    f(x)>0
    \qquad
    \text{for Lebesgue-a.e. }x\in B_r\defeq\{x\in\R^D:\|x\|_2<r\}.
  \]
  Then $\ATE_3$ is non-atomic.
  In particular, if $\E[\ATE_3^2]<\infty$, then $\var(\ATE_3)>0$.
  Moreover, the uniform boundary anti-concentration condition \Cref{eq:uniform-boundary-anti-concentration-large-budget-appendix} and the ordinary sign-separation condition \Cref{eq:ordinary-sign-separation-large-budget-appendix} hold.
  Consequently, under the finite-second-moment assumption in the body theorem, the hypotheses of \linkedCref{thm:large-budget-limiting-optimal-policy-appendix} are satisfied.
\end{proposition}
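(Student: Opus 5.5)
The plan is to prove the four claims in turn: non-atomicity of $\ATE_3$, positive variance, uniform boundary anti-concentration, and sign separation. Each will be reduced to Lebesgue-measure facts about hyperplanes, slabs and open cones, combined with absolute continuity and positivity of $f$ on $B_r$.

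\emph{Non-atomicity and variance.} Since $\bm s_3\neq\bm0$, for every $a\in\R$ the level set $\{x:\bm s_3^\top x=n_3a\}$ is an affine hyperplane. It therefore has Lebesgue measure zero, and absolute continuity gives $\P(\ATE_3=a)=0$. A non-atomic random variable cannot be almost surely constant. Hence, if $\E[\ATE_3^2]<\infty$, we get $\var(\ATE_3)>0$.

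\emph{Uniform boundary anti-concentration.} First I will reduce $\xi$ to a unit-direction projection. For $\overline{\bm s}_1\in\overline{\mathcal S}$, write $u=\overline{\bm s}_1/\|\overline{\bm s}_1\|_2$. Then
\[
|\xi(\overline{\bm s}_1,\bm\tau)|\le\delta
\iff
|u^\top\bm\tau|\le \delta\sqrt{\bm1^\top\overline{\bm s}_1}/\|\overline{\bm s}_1\|_2 .
\]
Since $\bm1^\top\overline{\bm s}_1\le\sqrt D\|\overline{\bm s}_1\|_2$ and $\|\overline{\bm s}_1\|_2\ge 1/\|\bm c\|_2$ on $\overline{\mathcal S}$, the right-hand threshold is at most $K\delta$ for a constant $K$. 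It therefore suffices to show that $\sup_{\|u\|_2=1}\P(|u^\top\bm\tau|\le\eta)\to0$ as $\eta\downarrow0$.

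This uniformity over directions is the only real obstacle. I will handle it by truncation and compactness. Fix $\varepsilon>0$ and choose $R$ with $\P(\|\bm\tau\|_2>R)<\varepsilon$.
\begin{itemize}
\item For each fixed $u_0$, the slabs $\{|u_0^\top x|\le t\}$ decrease to a Lebesgue-null hyperplane as $t\downarrow0$. Continuity from above then yields $t_{u_0}>0$ with $\P(|u_0^\top\bm\tau|\le t_{u_0})<\varepsilon$.
\item If $\|u-u_0\|_2<t_{u_0}/(2R)$ and $\eta\le t_{u_0}/2$, then
\[
\{|u^\top\bm\tau|\le\eta,\ \|\bm\tau\|_2\le R\}\subset\{|u_0^\top\bm\tau|\le t_{u_0}\},
\]
so $\P(|u^\top\bm\tau|\le\eta)<2\varepsilon$.
\item Finitely many such neighborhoods cover the compact unit sphere. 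Taking $\eta$ below the minimum of the corresponding $t_{u_0}/2$ gives the uniform bound $2\varepsilon$.
\end{itemize}

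\emph{Sign separation.} Let $\overline{\bm s}_1\in\overline{\mathcal S}$ be non-proportional to $\bm s_3$. Both vectors are nonzero and nonnegative, so they are linearly independent. Hence there is some $x$ with $\overline{\bm s}_1^\top x>0$ and $\bm s_3^\top x<0$, for instance by solving a $2\times2$ linear system in their span. The set $\{x:(\overline{\bm s}_1^\top x)(\bm s_3^\top x)<0\}$ is an open cone that is nonempty. Scaling such an $x$ shows that this cone meets $B_r$ in a nonempty open set, which has positive Lebesgue measure. Because $f>0$ a.e. on $B_r$, this set has positive prior probability, which is exactly \eqref{eq:ordinary-sign-separation-large-budget-appendix}.

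Under $\E[\ATE_3^2]<\infty$, the three results above verify every hypothesis of \linkedCref{thm:large-budget-limiting-optimal-policy-appendix}.
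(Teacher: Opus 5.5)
Your proposal is correct and follows the paper's structure: null hyperplanes give non-atomicity, compactness of a set of directions gives uniform anti-concentration, and a nonempty open cone meeting $B_r$ gives sign separation. The one real variation is the uniformity step, where you argue directly by truncating to $\|\bm\tau\|_2\le R$ and taking a finite subcover of the unit sphere; the paper instead normalizes by $\sqrt{\bm 1^\top\overline{\bm s}_1}$ to get a compact direction set $K$ avoiding $\bm 0$ and argues by contradiction along a convergent sequence $\bm u_n\to\bm u$ with dominated convergence, which avoids truncation.
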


\begin{proof}
  \emph{Step 1: Non-atomicity and positive variance of $\ATE_3$.}
  The target sample is nonzero, i.e. $\bm s_3\neq\bm0$, and for any $a\in\R$,
  \[
    \left\{\bm{\tau}\in\R^D:\ATE_3=a\right\}
    =
    \left\{\bm{\tau}\in\R^D:\bm s_3^\top \bm{\tau}=a\,\bm1^\top\bm s_3\right\}.
  \]
  is a hyperplane and therefore has Lebesgue measure zero.
  Absolute continuity of the law of $\bm\tau$ implies $\P(\ATE_3=a)=0$ for every $a\in\R$.
  Thus $\ATE_3$ has no atoms.
  If $\E[\ATE_3^2]<\infty$ and $\var(\ATE_3)=0$, then $\ATE_3$ would be almost surely equal to the constant $\E[\ATE_3]$, contradicting non-atomicity.
  Hence $\var(\ATE_3)>0$.

  \emph{Step 2: Uniform boundary anti-concentration condition.}
  For $\overline{\bm s}_1\in\overline{\mathcal{S}}$, define
  \[
    \bm u(\overline{\bm s}_1)
    \defeq
    \frac{\overline{\bm s}_1}{\sqrt{\bm 1^\top \overline{\bm s}_1}},
    \qquad
    \xi(\overline{\bm s}_1,\bm\tau)
    =
    \bm u(\overline{\bm s}_1)^\top \bm\tau.
  \]
  Because $\overline{\mathcal{S}}$ is compact and
  \[
    \bm 1^\top \overline{\bm s}_1
    \ge
    \frac{1}{\|\bm c\|_\infty}
    \qquad
    \forall \overline{\bm s}_1\in\overline{\mathcal{S}},
  \]
  the image
  \[
    K
    \defeq
    \bigl\{\bm u(\overline{\bm s}_1):\overline{\bm s}_1\in\overline{\mathcal{S}}\bigr\}
    \subset \R^D\setminus\{\bm 0\}
  \]
  is compact.

  We claim that
  \[
    \sup_{\overline{\bm s}_1\in\overline{\mathcal{S}}}
    \P\!\bigl(|\xi(\overline{\bm s}_1,\bm\tau)|\le \delta\bigr)
    =
    \sup_{\bm u\in K}\P\!\bigl(|\bm u^\top\bm\tau|\le \delta\bigr)
    \longrightarrow 0
    \qquad\text{as }\delta\downarrow 0.
  \]
  Suppose not.
  Then there exist $\epsilon_0>0$, a sequence $\delta_n\downarrow 0$, and $\bm u_n\in K$ such that
  \[
    \P\!\bigl(|\bm u_n^\top\bm\tau|\le \delta_n\bigr)\ge \epsilon_0
    \qquad
    \forall n.
  \]
  By compactness of $K$, after passing to a subsequence we may assume $\bm u_n\to \bm u\in K$.

  Fix $x\in\R^D$ such that $\bm u^\top x\neq 0$.
  Since $\bm u_n^\top x\to \bm u^\top x$ and $\delta_n\downarrow 0$, we have
  \[
    \indic{|\bm u_n^\top x|\le \delta_n}\longrightarrow 0.
  \]
  The exceptional set $\{x:\bm u^\top x=0\}$ is a hyperplane and therefore has Lebesgue measure zero.
  Since the law of $\bm\tau$ is absolutely continuous, it follows that
  \[
    \indic{|\bm u_n^\top\bm\tau|\le \delta_n}\longrightarrow 0
    \qquad\text{a.s.}
  \]
  Moreover, these indicators are bounded by $1$, so dominated convergence gives
  \[
    \P\!\bigl(|\bm u_n^\top\bm\tau|\le \delta_n\bigr)
    =
    \E\!\left[\indic{|\bm u_n^\top\bm\tau|\le \delta_n}\right]
    \longrightarrow 0,
  \]
  contradicting the lower bound by $\epsilon_0$.
  This proves the uniform boundary anti-concentration condition.

  \emph{Step 3: Sign separation.}
  Fix $\overline{\bm s}_1\in\overline{\mathcal{S}}$ not proportional to $\bm s_3$.
  Let $\bm u\defeq \overline{\bm s}_1$.
  Since $\bm u$ is not proportional to $\bm s_3$, there exists $x_0\in\R^D$ such that
  \[
    \bm u^\top x_0 = 0,
    \qquad
    \bm s_3^\top x_0\neq 0.
  \]
  Replacing $x_0$ by $-x_0$ if necessary, assume $\bm s_3^\top x_0<0$.
  Then for any $\eta>0$,
  \[
    \bm u^\top(x_0+\eta \bm u)=\eta\|\bm u\|_2^2>0,
  \]
  and, for all sufficiently small $\eta>0$,
  \[
    \bm s_3^\top(x_0+\eta \bm u)
    =
    \bm s_3^\top x_0+\eta\,\bm s_3^\top \bm u
    <0.
  \]
  Hence the set
  \[
    A
    \defeq
    \{x\in\R^D:(\overline{\bm s}_1^\top x)(\bm s_3^\top x)<0\}
  \]
  is nonempty.
  It is also open, and it is a cone: if $x\in A$ and $t>0$, then $tx\in A$.
  Therefore $A\cap B_r$ is a nonempty open subset of $B_r$.

  Since $f>0$ for Lebesgue-a.e. $x\in B_r$, we have $f>0$ for Lebesgue-a.e. $x\in A\cap B_r$.
  Because $A\cap B_r$ is nonempty and open, it has strictly positive Lebesgue measure, and thus
  \[
    \P\!\bigl((\overline{\bm s}_1^\top\bm\tau)(\bm s_3^\top\bm\tau)<0\bigr)
    \ge
    \P(\bm\tau\in A\cap B_r)
    =
    \int_{A\cap B_r} f(x)\,dx
    >
    0.
  \]
  This proves sign separation.
\end{proof}